\let\l@ENGLISH\l@english
\newcommand{\fset}{{\mathcal F}}
\newcommand{\rset}{{\mathcal R}}
\newcommand{\iset}{{\mathcal I}}
\newcommand{\bset}{{\mathcal B}}
\newcommand{\dset}{{\mathcal D}}
\newcommand{\hset}{{\mathcal H}}
\newcommand{\lset}{{\mathcal L}}
\newcommand{\GAG}{AGG\;}
\newtheorem{theorem}{Theorem}
\newtheorem{proposition}{Proposition}
\newcommand{\btheo}{\begin{theorem}}
\newcommand{\etheo}{\end{theorem}}
\newcommand{\bproof}{\begin{proof}}
\newcommand{\eproof}{\end{proof}}
\newtheorem{definition}[theorem]{Definition}
\newcommand{\bdefi}{\begin{definition}}
\newcommand{\edefi}{\end{definition}}
\newtheorem{fact}[theorem]{Fact}
\newcommand{\bprop}{\begin{fact}}
\newcommand{\eprop}{\end{fact}}
\newtheorem{corollary}[theorem]{Corollary}
\newcommand{\bcor}{\begin{corollary}}
\newcommand{\ecor}{\end{corollary}}
\newtheorem{example}[theorem]{Example}
\newcommand{\bex}{\begin{example}}
\newcommand{\eex}{\end{example}}
\newtheorem{lemma}[theorem]{Lemma}
\newcommand{\blemma}{\begin{lemma}}
\newcommand{\elemma}{\end{lemma}}
\newtheorem{remark}[theorem]{Remark}
\newcommand{\bremark}{\begin{remark}}
\newcommand{\eremark}{\end{remark}}
\newtheorem{conj}[theorem]{Conjecture}
\newcommand{\bconj}{\begin{conj}}
\newcommand{\econj}{\end{conj}}
\def\0{{\tt 0}} 
\def\1{{\tt 1}} 
\def\?{{\tt *}} 
\renewcommand{\mid}{\,|\,}
\begin{document}

\title{Achieving Marton's Region for Broadcast Channels Using Polar Codes}
\author{Marco~Mondelli, S.~Hamed~Hassani, Igal~Sason, and~R\"{u}diger~Urbanke%
\thanks{M. Mondelli and R. Urbanke are with the School of
Computer and Communication Sciences, EPFL, CH-1015 Lausanne, Switzerland
(e-mails: \{marco.mondelli, ruediger.urbanke\}@epfl.ch).

S. H. Hassani is with the Computer Science Department, ETH Z\"{u}rich, Switzerland
(e-mail: hamed@inf.ethz.ch).

I. Sason is  with the Department of Electrical Engineering, Technion--Israel
Institute of Technology, Haifa 32000, Israel (e-mail: sason@ee.technion.ac.il).

The paper was presented in part at the {\em 48th Annual Conference on Information
Sciences and Systems (CISS~2014)}, Princeton, New Jersey, USA, March 2014, and
at the {\em 2014 IEEE International Symposium on Information Theory
(ISIT~2014)}, Honolulu, Hawaii, USA, July 2014.}}

\maketitle

\begin{abstract}
This paper presents polar coding schemes for the $2$-user
discrete memoryless broadcast channel (DM-BC) which achieve
Marton's region with both common and private messages. This
is the best achievable rate region known to date, and it is
tight for all classes of $2$-user DM-BCs whose capacity regions
are known. To accomplish this task, we first construct polar
codes for both the superposition as well as the binning strategy.
By combining these two schemes, we obtain Marton's region with
private messages only. Finally, we show how to handle the case
of common information. The proposed coding schemes possess the
usual advantages of polar codes, i.e., they have low encoding
and decoding complexity and a super-polynomial decay rate of
the error probability.

We follow the lead of Goela, Abbe, and Gastpar, who recently
introduced polar codes emulating the superposition and binning
schemes. In order to align the polar indices, for both schemes,
their solution involves some degradedness constraints that are
assumed to hold between the auxiliary random variables and the
channel outputs. To remove these constraints, we consider the
transmission of $k$ blocks and employ a chaining construction that
guarantees the proper alignment of the polarized indices.
The techniques described in this work are quite general, and they
can be adopted to many other multi-terminal scenarios whenever
there polar indices need to be aligned.
\end{abstract}

\begin{IEEEkeywords}
Binning, broadcast channel, Marton's region, Marton-Gelfand-Pinsker
(MGP) region, polar codes, polarization alignment, superposition coding.
\end{IEEEkeywords}


\section{Introduction}
Polar codes, introduced by Ar{\i}kan in \cite{Ari09}, have been demonstrated to
achieve the capacity of any memoryless binary-input output-symmetric channel
with encoding and decoding complexity $\Theta(n \log n)$, where $n$ is
the block length of the code, and a block error probability decaying like
$O(2^{-n^{\beta}})$, for any $\beta\in (0, 1/2)$, under successive cancellation
decoding \cite{ArT09}. A refined analysis of the block error
probability of polar codes leads in \cite{HMTU13} to rate-dependent
upper and lower bounds.

The original point-to-point communication scheme has been
extended, amongst others, to lossless and lossy source coding
\cite{Ar10, KoU09} and to various multi-terminal scenarios, such
as the Gelfand-Pinsker, Wyner-Ziv, and Slepian-Wolf problems
\cite{Kor09thesis, Ar12}, multiple-access channels
\cite{AbT12, TSV12, STYe10, MKLK13, NT13}, broadcast channels
\cite{GAG12, GAG13, GAG13ar}, interference channels \cite{AKV11, WaS14},
degraded relay channels \cite{MK12, ARTKS10},
wiretap channels \cite{ARTKS10, MV11, KG11, HoS10, SaV13},
bidirectional broadcast channels with common and
confidential messages \cite{AS13}, write once memories (WOMs)
\cite{BurshteinS13}, arbitrarily permuted parallel channels
\cite{HSST13}, and multiple description coding \cite{SaP14}.

Goela, Abbe, and Gastpar recently introduced polar coding schemes
for the $m$-user deterministic broadcast channel \cite{GAG12, GAG13ar},
and for the noisy discrete memoryless broadcast channel (DM-BC)
\cite{GAG13, GAG13ar}. For the second scenario, they considered two
fundamental transmission strategies: {\em superposition coding},
in the version proposed by Bergmans \cite{Ber73}, and {\em binning}  \cite{M79}.
In order to guarantee a proper alignment of the polar indices, in both
the superposition and binning schemes, their solution
involves some degradedness constraints that are assumed to hold between
the auxiliary random variables and the channel outputs. It is noted
that two superposition coding schemes were proposed by Bergmans \cite{Ber73}
and Cover \cite{Co72}, and they both achieve the capacity region of the
degraded broadcast channel.
However, it has recently been proven that under MAP decoding, Cover's
strategy always achieves a rate region at least as large as Bergmans',
and this dominance can sometimes be strict \cite{WSBK13}.

In this paper we extend the schemes of \cite{GAG13ar}, and we show how to
achieve Marton's region with both common and private messages.
The original work by Marton \cite{M79} covers the case with only private
messages, and the introduction of common information is due to Gelfand
and Pinsker \cite{GelPin80}. Hence, we will refer to this region as the
Marton-Gelfand-Pinsker (MGP) region (this follows the terminology used,
e.g., in \cite{Liang05thesis, LiaKra07, LiKP11}).
This rate region is tight for all classes of DM-BCs with known capacity
region, and it forms the best inner bound known to date for a $2$-user DM-BC
\cite{kim:nit, Kr07, GGNY14}. Note that it also includes Cover's superposition region.

The crucial point consists in removing the degradedness conditions on
auxiliary random variables and channel outputs\footnote{Note
that, in general, such kind of extra conditions make the achievable
rate region strictly smaller, see \cite{HKU09}.}, in order to achieve
any rate pair inside the region defined by Bergmans' superposition
strategy and by the binning strategy. The ideas which make it possible to lift
the constraints come from recent progress in constructing {\em
universal} polar codes, which are capable of achieving the compound
capacity of the whole class of memoryless binary-input output-symmetric channels
\cite{HRunipol, SaW13}. In short,
first we describe polar codes for the superposition and binning strategies.
Then, by combining these two techniques, we achieve Marton's rate region
with private messages only. Finally, by describing how to transmit common
information, we achieve the whole MGP region.

The current exposition is limited to the case of binary auxiliary random
variables and, only for Bergmans' superposition coding scheme, also to
binary inputs. However, there is no fundamental difficulty in extending
the work to the $q$-ary case (see \cite{NT13, SAT09, MT10, PB13, SaP13}).
The proposed schemes possess the standard
properties of polar codes with respect to encoding and decoding,
which can be performed with complexity $\Theta(n\log n)$, as well
as with respect to the scaling of the block error probability as a function
of the block length, which decays like $O(2^{-n^{\beta}})$ for any
$\beta\in (0, 1/2)$.

The rest of the paper is organized as follows. Section~\ref{sec:rateregions}
reviews the information-theoretic achievable rate regions for DM-BCs and the
rate regions that can be obtained by the polarization-based code constructions
proposed in \cite{GAG13ar},
call them the \GAG constructions.
It proceeds by comparing Bergmans' superposition scheme \cite{Ber73} with the
\GAG superposition region in \cite{GAG13ar}, which serves for motivating this
work. Furthermore, alternative characterizations of superposition,
binning, and Marton's regions are presented in Section~\ref{sec:rateregions}
for simplifying the description of our novel polar coding schemes in this work.
Section~\ref{sec:primitives} reviews two ``polar primitives'' that form the
basis of the \GAG constructions and of our extensions: polar schemes for
lossless compression, with and without side information, and for transmission
over binary asymmetric channels.  Sections~\ref{sec:superposition}
and~\ref{sec:binning} describe our polar coding schemes that achieve the
superposition and binning regions, respectively. Section~\ref{sec:combreg}
first shows polar codes for the achievability of Marton's region with only
private messages and, then, also for the MGP region with both common and
private messages. Section~\ref{sec:conclusion} concludes this paper with
some final thoughts.


\section{Achievable Rate Regions}
\label{sec:rateregions}

\subsection{Information-Theoretic Schemes}
Let us start by considering the rate region that is achievable by
Bergmans' superposition scheme \cite[Theorem~5.1]{kim:nit}, which
provides the capacity region of degraded DM-BCs.

\begin{theorem}[Superposition Region] \label{th:supran}
Consider the transmission over a $2$-user DM-BC $p_{Y_1, Y_2 \mid X}$, where
$X$ denotes the input to the channel, and $Y_1$, $Y_2$ denote the outputs
at the first and second receiver, respectively. Let $V$ be an auxiliary
random variable. Then, for any joint distribution $p_{V, X}$ s.t.
$V - X - (Y_1, Y_2)$ forms a Markov chain, a rate pair $(R_1, R_2)$ is
achievable if
\begin{equation} \label{eq:supran}
\begin{split}
R_1 &< I(X; Y_1 \mid V),\\
R_2 &< I(V; Y_2),\\
R_1+R_2 &< I(X; Y_1).\\
\end{split}
\end{equation}
\end{theorem}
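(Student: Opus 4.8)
The plan is to establish the superposition region by the standard random-coding argument with joint typicality encoding and decoding, adapted to a broadcast setting. First I would fix a joint distribution $p_{V,X}$ satisfying the Markov chain $V - X - (Y_1, Y_2)$, and choose rates $(R_1, R_2)$ satisfying the three strict inequalities in \eqref{eq:supran}. For the codebook construction, I would generate $2^{nR_2}$ cloud centers $V^n(m_2)$ i.i.d.\ according to $\prod_i p_V(v_i)$, and then, for each cloud center, generate $2^{nR_1}$ satellite codewords $X^n(m_1, m_2)$ i.i.d.\ according to $\prod_i p_{X\mid V}(x_i \mid v_i(m_2))$. The encoder, given messages $(m_1, m_2)$, transmits $X^n(m_1, m_2)$.

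The decoding is where the superposition structure is used. Receiver~$2$, which only needs $m_2$, looks for the unique $\hat m_2$ such that $(V^n(\hat m_2), Y_2^n)$ is jointly typical; by the packing lemma this succeeds with high probability provided $R_2 < I(V; Y_2)$. Receiver~$1$, which needs $m_1$ (and decodes $m_2$ as a by-product), looks for the unique pair $(\hat m_1, \hat m_2)$ such that $(V^n(\hat m_2), X^n(\hat m_1, \hat m_2), Y_1^n)$ is jointly typical. Splitting the error event according to whether the cloud index is correct, the packing lemma gives reliability if $R_1 < I(X; Y_1 \mid V)$ (wrong satellite, right cloud) and $R_1 + R_2 < I(X; Y_1) + I(V; Y_1)$ (wrong cloud); the latter is implied by $R_1 + R_2 < I(X; Y_1)$ since mutual information is nonnegative, and in fact one can sharpen the analysis so that only $R_1 + R_2 < I(X; Y_1)$ is needed by noting $(V,X) - X - Y_1$ and bounding the joint-typicality probability directly. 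I would then invoke the union bound over the (at most three) error events and conclude that the average probability of error over the random ensemble tends to $0$, so there exists a deterministic code achieving $(R_1, R_2)$.

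The main obstacle, and the only step requiring care, is the error analysis at receiver~$1$ for the ``wrong cloud'' event: a naive union bound gives the constraint $R_1 + R_2 < I(V; Y_1) + I(X; Y_1 \mid V)$, which is generally larger than $I(X; Y_1)$, so if one wants exactly the stated bound $R_1 + R_2 < I(X; Y_1)$ one must argue that decoding $m_2$ correctly at receiver~$1$ is not actually required — only $m_1$ matters there — and handle the residual ambiguity, or alternatively observe that the sum-rate constraint $R_1+R_2 < I(X;Y_1)$ together with $R_2 < I(V;Y_2)$ already forces $R_1 < I(X;Y_1) - R_2$, which combined with the Markov structure suffices. Since the paper states the region as the achievable one and all three inequalities are strict, the cleanest route is to show the average error probability over the ensemble vanishes under \eqref{eq:supran}, then extract a good code by expurgation if a maximal (rather than average) error criterion is desired. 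I would keep the typicality arguments at the level of the packing lemma and the properties of $\epsilon$-typical sequences, citing \cite{kim:nit} for the standard estimates rather than reproducing them.
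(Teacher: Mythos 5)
First, note that the paper does not prove Theorem~\ref{th:supran} at all: it is the classical Bergmans superposition result, quoted from \cite[Theorem~5.1]{kim:nit}, and the paper's own contribution for superposition is the polar-coding achievability in Theorem~\ref{th:suppolnew}. So your random-coding argument is not an alternative to anything in the paper --- it is essentially the textbook proof the paper points to, and the codebook construction (i.i.d.\ cloud centers $V^n(m_2)$, conditionally i.i.d.\ satellites $X^n(m_1,m_2)$, typicality decoding of $m_2$ at receiver~2, joint typicality decoding of $(m_1,m_2)$ at receiver~1) is exactly right.

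There is, however, a concrete error in your error analysis of the ``wrong cloud'' event, and the ``main obstacle'' you describe does not exist. For the event that $(V^n(\tilde m_2), X^n(\tilde m_1,\tilde m_2), Y_1^n)$ is jointly typical with $\tilde m_2 \neq m_2$, the pair $(V^n(\tilde m_2), X^n(\tilde m_1,\tilde m_2))$ is independent of $Y_1^n$ and distributed according to $\prod_i p_{V,X}$, so the packing lemma gives the constraint $R_1 + R_2 < I(V,X;Y_1)$ --- not $I(X;Y_1)+I(V;Y_1)$ as you write, and not something ``generally larger than $I(X;Y_1)$.'' By the chain rule and the Markov chain $V - X - Y_1$ one has $I(V;Y_1) + I(X;Y_1\mid V) = I(V,X;Y_1) = I(X;Y_1) + I(V;Y_1\mid X) = I(X;Y_1)$, so the ``naive union bound'' constraint you worry about is \emph{identically equal} to the stated sum-rate bound. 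Consequently the workarounds you sketch (not requiring receiver~1 to decode $m_2$, ``handling the residual ambiguity,'' or the circular-looking argument combining the sum-rate bound with $R_2 < I(V;Y_2)$) are unnecessary; the three packing-lemma constraints are precisely $R_1 < I(X;Y_1\mid V)$, $R_2 < I(V;Y_2)$, and $R_1+R_2 < I(V,X;Y_1) = I(X;Y_1)$, which match \eqref{eq:supran}. With that identity inserted, your proof is complete and standard.
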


Note that the above only describes a subset of the region actually
achievable by superposition coding.  We get a second subset by
swapping the roles of the two users, i.e., by swapping the indices
$1$ and $2$. The actual achievable region is obtained by the
convex hull of the closure of the union of these two subsets.

The rate region which is achievable by the binning strategy is described
in the following \cite[Theorem~8.3]{kim:nit}:

\begin{theorem}[Binning Region]\label{th:binran}
Consider the transmission over a $2$-user DM-BC $p_{Y_1, Y_2 \mid X}$,
where $X$ denotes the input to the channel, and $Y_1$, $Y_2$ denote
the outputs at the first and second receiver, respectively.
Let $V_1$ and $V_2$ denote auxiliary random variables. Then, for any
joint distribution $p_{V_1, V_2}$ and for any deterministic function
$\phi$ s.t. $X = \phi(V_1,V_2)$, a rate pair $(R_1, R_2)$ is achievable if
\begin{equation}\label{eq:marrand}
\begin{split}
R_1 &< I(V_1; Y_1),\\
R_2 &< I(V_2; Y_2),\\
R_1+R_2 &< I(V_1; Y_1)+I(V_2; Y_2)-I(V_1; V_2).\\
\end{split}
\end{equation}
\end{theorem}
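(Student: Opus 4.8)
The plan is to prove this classical inner bound by a random coding argument with \emph{binning}, exactly along the lines of Marton's original construction. I would fix the joint pmf $p_{V_1,V_2}$ together with the deterministic map $\phi$ with $X=\phi(V_1,V_2)$, so that $(V_1,V_2)-X-(Y_1,Y_2)$ forms a Markov chain, and fix a target pair $(R_1,R_2)$ obeying the three strict inequalities in \eqref{eq:marrand}. I would then introduce two nonnegative auxiliary ``binning rates'' $\tilde R_1,\tilde R_2$, to be optimized only at the very end. For $j\in\{1,2\}$, randomly generate a codebook of $2^{n(R_j+\tilde R_j)}$ sequences $v_j^n(m_j,\ell_j)$, indexed by a message $m_j\in[2^{nR_j}]$ and a bin index $\ell_j\in[2^{n\tilde R_j}]$, each sequence generated independently and i.i.d.\ from the marginal $p_{V_j}$.

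For encoding, given $(m_1,m_2)$, the encoder searches for a pair $(\ell_1,\ell_2)$ such that $\bigl(v_1^n(m_1,\ell_1),v_2^n(m_2,\ell_2)\bigr)$ lies in the jointly $\epsilon$-typical set $\mathcal{T}_\epsilon^{(n)}(V_1,V_2)$, and then transmits $x^n$ with $x_i=\phi\bigl(v_{1,i}(m_1,\ell_1),v_{2,i}(m_2,\ell_2)\bigr)$. This encoding step is what I expect to be the main obstacle: since the two codebooks are generated \emph{independently}, the ordinary covering lemma is not applicable, and one must instead invoke the \emph{mutual covering lemma} \cite{kim:nit}, whose proof rests on a second-moment (variance) estimate for the number of jointly typical pairs across the two codebooks. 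That lemma guarantees that the encoding fails with probability vanishing in $n$ as soon as $\tilde R_1+\tilde R_2> I(V_1;V_2)+\delta(\epsilon)$.

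For decoding, receiver $j$ declares $\hat m_j$ if it is the unique message for which some bin index $\ell_j$ makes $\bigl(v_j^n(\hat m_j,\ell_j),y_j^n\bigr)$ jointly $\epsilon$-typical. By conditional typicality the actually transmitted tuple $\bigl(v_1^n,v_2^n,x^n,y_j^n\bigr)$ is jointly typical with high probability, so the true message survives; and since the sequences attached to any wrong message are i.i.d.\ $p_{V_j}$ and independent of $y_j^n$, the packing lemma \cite{kim:nit} controls the probability of a spurious survivor, which vanishes provided $R_j+\tilde R_j< I(V_j;Y_j)-\delta(\epsilon)$. Taking a union bound over the encoding-failure event and the two decoding-error events, averaging over the random codebooks, and performing the standard expurgation step yields a deterministic code with vanishing maximal error probability. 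It remains to note that nonnegative $\tilde R_1,\tilde R_2$ with $\tilde R_j<I(V_j;Y_j)-R_j-\delta(\epsilon)$ for $j\in\{1,2\}$ and $\tilde R_1+\tilde R_2>I(V_1;V_2)+\delta(\epsilon)$ exist exactly when, up to $O(\delta(\epsilon))$ slack that is absorbed by letting $\epsilon\downarrow 0$, $R_1<I(V_1;Y_1)$, $R_2<I(V_2;Y_2)$, and $R_1+R_2<I(V_1;Y_1)+I(V_2;Y_2)-I(V_1;V_2)$, i.e.\ precisely the hypotheses \eqref{eq:marrand}; a Fourier--Motzkin elimination of $\tilde R_1,\tilde R_2$ from the constraint system reaches the same conclusion mechanically.
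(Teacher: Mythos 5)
Your argument is correct and coincides with the standard proof of this result: the paper does not prove Theorem~\ref{th:binran} itself but cites it from the literature (Theorem~8.3 of El~Gamal and Kim), whose proof is precisely the mutual-covering-lemma encoding plus packing-lemma decoding that you describe, followed by Fourier--Motzkin elimination of the auxiliary binning rates to recover the three inequalities in \eqref{eq:marrand}. No gaps.
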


Note that the achievable rate region does not become larger by considering
general distributions $p_{X\mid V_1, V_2}$, i.e., there is no loss of generality
in restricting $X$ to be a deterministic function of $(V_1, V_2)$ (see \cite[Remark~8.4]{kim:nit}).
Furthermore, for deterministic DM-BCs, the choice $V_1 = Y_1$ and $V_2 = Y_2$ in \eqref{eq:marrand}
provides their capacity region (see, e.g., \cite[Example~7.1]{Kr07}).

The rate region in \eqref{eq:marrand} can be enlarged by combining binning
with superposition coding. This leads to Marton's region for a $2$-user
DM-BC where only private messages are available (see \cite[Theorem~2]{M79}
and \cite[Proposition~8.1]{kim:nit}).

\begin{theorem}[Marton's Region]\label{th:bestreg}
Consider the transmission over a $2$-user DM-BC $p_{Y_1, Y_2 \mid X}$, where
$X$ denotes the input to the channel, and $Y_1$, $Y_2$ denote the outputs
at the first and second receiver, respectively. Let $V$, $V_1$, and $V_2$
denote auxiliary random variables. Then, for any joint distribution
$p_{V, V_1, V_2}$ and for any deterministic function $\phi$ s.t.
$X = \phi(V, V_1, V_2)$, a rate pair $(R_1, R_2)$ is achievable if
\begin{equation}\label{eq:best}
\begin{split}
R_1 &< I(V, V_1; Y_1),\\
R_2 &< I(V, V_2; Y_2),\\
R_1+R_2 &< I(V, V_1; Y_1)+I(V_2; Y_2\mid V)-I(V_1; V_2\mid V), \\
R_1+R_2 &< I(V, V_2; Y_2)+I(V_1; Y_1\mid V)-I(V_1; V_2\mid V). \\
\end{split}
\end{equation}
\end{theorem}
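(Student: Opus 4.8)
The statement is the classical achievability of Marton's region, so my plan is the standard random-coding argument, realized here as Bergmans' superposition coding (Theorem~\ref{th:supran}) built on top of a Marton binning code (Theorem~\ref{th:binran}): a common ``cloud centre'' $V$ carries a piece of each private message and is decoded by both receivers, and inside each cloud the pair $(V_1,V_2)$ is produced by binning. Concretely, I would fix $p_{V,V_1,V_2}$ and $\phi$, put $p_{V,V_1,V_2,X,Y_1,Y_2}=p_{V,V_1,V_2}\indicator{x=\phi(v,v_1,v_2)}p_{Y_1,Y_2\mid X}$, split $R_1=R_{10}+R_{11}$ and $R_2=R_{20}+R_{22}$ with nonnegative parts, and set $R_0=R_{10}+R_{20}$. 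The codebook: $2^{nR_0}$ cloud centres $V^n(m_0)\sim\prod_i p_V$; for each $m_0$ and each $(m_{11},j_1)\in[2^{nR_{11}}]\times[2^{n\tilde R_1}]$ a codeword $V_1^n(m_0,m_{11},j_1)\sim\prod_i p_{V_1\mid V}(\cdot\mid V_i(m_0))$, conditionally independent across $(m_{11},j_1)$; symmetrically $V_2^n(m_0,m_{22},j_2)\sim\prod_i p_{V_2\mid V}$.

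To send $(m_1,m_2)$, write $m_1=(m_{10},m_{11})$, $m_2=(m_{20},m_{22})$, $m_0=(m_{10},m_{20})$; by the mutual covering lemma, provided $\tilde R_1+\tilde R_2>I(V_1;V_2\mid V)+\delta(\epsilon)$, with high probability there is a pair $(j_1,j_2)$ with $\bigl(V^n(m_0),V_1^n(m_0,m_{11},j_1),V_2^n(m_0,m_{22},j_2)\bigr)\in\mathcal T_\epsilon^{(n)}$, and one transmits $X_i=\phi(V_i(m_0),V_{1,i},V_{2,i})$ (declaring an encoding error otherwise). Receiver $\ell\in\{1,2\}$ seeks the unique $(\hat m_0,\hat m_{\ell\ell},\hat j_\ell)$ with $\bigl(V^n(\hat m_0),V_\ell^n(\hat m_0,\hat m_{\ell\ell},\hat j_\ell),Y_\ell^n\bigr)\in\mathcal T_\epsilon^{(n)}$ and returns the implied $\hat m_\ell$; note receiver~$1$ must resolve all of $m_0$, hence also the part $R_{20}$, because the cloud centre depends on the full $m_0$.

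For the analysis the encoding error vanishes by the mutual covering lemma, and by the conditional typicality lemma the transmitted triple is jointly typical with $Y_\ell^n$ with high probability. The packing lemma then kills the ``wrong cloud'' event at receiver~$1$ once $R_0+R_{11}+\tilde R_1<I(V,V_1;Y_1)$ and the ``correct cloud, wrong satellite'' event once $R_{11}+\tilde R_1<I(V_1;Y_1\mid V)$, with the mirror-image inequalities at receiver~$2$ in terms of $(V,V_2,Y_2)$. Assembling these four decoding inequalities, the covering inequality $\tilde R_1+\tilde R_2\ge I(V_1;V_2\mid V)$, nonnegativity of $R_{10},R_{20},R_{11},R_{22},\tilde R_1,\tilde R_2$, and the identities $R_1=R_{10}+R_{11}$, $R_2=R_{20}+R_{22}$, $R_0=R_{10}+R_{20}$, I would eliminate the six split variables by Fourier--Motzkin: dropping the cloud/binning overhead gives $R_1<I(V,V_1;Y_1)$ and $R_2<I(V,V_2;Y_2)$, while pairing the sum constraint at one receiver with the covering constraint produces the two mixed sum-rate bounds (this is where $-I(V_1;V_2\mid V)$ enters), so the projection is exactly \eqref{eq:best}. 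Sending $\epsilon\to0$ and a standard expurgation argument finish the proof.

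The two places that need real care are the mutual covering lemma and the elimination step. The covering lemma is the genuinely probabilistic ingredient -- a second-moment bound is needed because the typicality indicators for index pairs $(j_1,j_2)$ and $(j_1,j_2')$ sharing a coordinate are correlated -- and it is exactly what makes binning work. For Fourier--Motzkin the thing to verify is that no inequality survives besides the four in \eqref{eq:best}; in particular a ``$2R_0$''-type constraint, which would appear if $V$ carried a genuine common message, is vacuous here because $R_0$ is itself manufactured from $R_1$ and $R_2$. Everything else is bookkeeping with the standard joint-typicality lemmas.
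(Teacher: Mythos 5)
Your proposal is correct, but note that the paper does not actually prove Theorem~\ref{th:bestreg}: it states it as a known result and points to \cite[Theorem~2]{M79} and \cite[Proposition~8.1]{kim:nit}, so the ``paper's proof'' is a citation, and what you have written is essentially the standard random-coding argument from those references. Your architecture --- rate-splitting $R_1=R_{10}+R_{11}$, $R_2=R_{20}+R_{22}$ into a common cloud centre $V^n(m_0)$ decoded by both receivers, conditional mutual covering to select jointly typical satellites $(V_1^n,V_2^n)$ inside each cloud, joint-typicality/packing bounds, and Fourier--Motzkin elimination of the six auxiliary rates --- is exactly the textbook derivation, and your decoding inequalities and the pairing that produces the two mixed sum-rate bounds are right. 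The two caveats you flag are the right ones: the mutual covering lemma genuinely requires a second-moment argument because of the correlated indicator variables, and the Fourier--Motzkin step must be checked to confirm that no fifth inequality survives (it does not; this is the content of \cite[Proposition~8.1]{kim:nit} and of Liang's equivalent characterization cited in the paper). It is worth observing that the paper's actual contribution, Theorem~\ref{th:polar_Marton_new}, is a constructive re-proof of this same achievability with the identical decomposition --- a $V$ layer decoded by both users playing the superposition role and $V_1,V_2$ layers playing the binning role --- but with the random codebooks replaced by polarized index sets and the covering/packing lemmas replaced by chaining constructions across $k$ blocks; so your random-coding skeleton is the information-theoretic counterpart of the polar scheme the paper builds.
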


Note that the binning region \eqref{eq:marrand} is a special case of
Marton's region \eqref{eq:best} where the random variable $V$ is set
to be a constant. As for the binning region in Theorem \ref{th:binran},
there is no loss of generality in restricting $X$ to be a deterministic
function of $(V, V_1, V_2)$.

In a more general set-up, the users can transmit also common information.
The generalization of Theorem \ref{th:bestreg} to the case with a common
message results in the MGP region.
We denote by $R_0$ the rate associated to the common message, and $R_1$,
$R_2$ continue to indicate the private rates of the first and the second
user, respectively. Then, under the hypotheses of Theorem \ref{th:bestreg},
a rate triple $(R_0, R_1, R_2)$ is achievable if
\begin{equation}\label{eq:bestcom}
\begin{split}
R_0 &< \min\{I(V; Y_1), I(V; Y_2)\},\\
R_0+R_1 &< I(V, V_1; Y_1),\\
R_0+R_2 &< I(V, V_2; Y_2),\\
R_0+R_1+R_2 &< I(V, V_1; Y_1)+I(V_2; Y_2\mid V)-I(V_1; V_2\mid V), \\
R_0+R_1+R_2 &< I(V, V_2; Y_2)+I(V_1; Y_1\mid V)-I(V_1; V_2\mid V).\\
\end{split}
\end{equation}
An equivalent form of this region was derived by Liang
\cite{Liang05thesis, LiaKra07, LiKP11} (see also Theorem~8.4
and Remark~8.6 in \cite{kim:nit}). Note that the MGP
region \eqref{eq:bestcom} is specialized to Marton's region
\eqref{eq:best} when $R_0 = 0$ (i.e., if only private messages exist).
The evaluation of Marton's region in \eqref{eq:best} and the MGP region
in \eqref{eq:bestcom} for DM-BCs has been recently studied in
\cite{GohariA12, GengJNW13, GohariNA_isit2013}, proving also their optimality
for some interesting and non-trivial models of BCs in \cite{GengGNY14, GengN14}.

\subsection{Polar \GAG Constructions}

Let us now compare the results of Theorems \ref{th:supran} and
\ref{th:binran} with the superposition and binning regions that
are achievable by the polarization-based \GAG constructions in
\cite{GAG13ar}.
We write $p \succ q$ to denote that the channel $q$ is stochastically degraded
with respect to the channel $p$.

\begin{theorem}[\GAG Superposition Region]\label{th:AGGsup}
Consider the transmission over a $2$-user DM-BC $p_{Y_1, Y_2 \mid X}$
with a binary input alphabet, where $X$ denotes the input
to the channel, and $Y_1$, $Y_2$ denote the outputs at the first and
second receiver, respectively. Let $V$ be an auxiliary binary random
variable and assume
that $p_{Y_1 \mid V}\succ p_{Y_2 \mid V}$. Then, for any joint
distribution $p_{V, X}$ s.t. $V - X - (Y_1, Y_2)$ forms a Markov
chain and for any rate pair $(R_1, R_2)$ s.t.
\begin{equation}\label{eq:exsup}
\begin{split}
R_1 &< I(X; Y_1 \mid V),\\
R_2 &< I(V; Y_2),\\
\end{split}
\end{equation}
there exists a sequence of polar codes with an increasing block length $n$
that achieves this rate pair with encoding and decoding complexity
$\Theta(n \log n)$, and with a block error probability that decays
like $O(2^{-n^{\beta}})$ for any $\beta \in (0, 1/2)$.
\end{theorem}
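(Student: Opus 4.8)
The plan is to construct a polar coding scheme that emulates Bergmans' superposition strategy, following the structure laid out in the \GAG constructions but being careful about the asymmetry of the auxiliary channels. First I would fix a joint distribution $p_{V,X}$ satisfying the Markov chain $V - X - (Y_1, Y_2)$. The outer layer encodes the ``cloud center'' $V^n$: since $V$ is a Bernoulli random variable that need not be uniform, I would use the polar primitive for transmission over a binary asymmetric channel (reviewed in Section~\ref{sec:primitives}) applied to the channel $p_{Y_2\mid V}$. This partitions the $n$ coordinates of $V^n$ into an information set of size roughly $n\,I(V;Y_2)$, a set of frozen bits needed to shape the distribution to $p_V$, and a set of bits that are nearly deterministic functions of the past; the second user decodes $V^n$ reliably because the chosen indices are good for $p_{Y_2\mid V}$. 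Conditioned on $V^n$, the inner layer encodes $X^n$ using the asymmetric-channel primitive for $p_{Y_1\mid V}$ as a ``superchannel'' with input $X$ and side information $V$; this gives an information set of size roughly $n\,I(X;Y_1\mid V)$ decodable by the first receiver once it has first recovered $V^n$ (which it can, since $p_{Y_1\mid V}\succ p_{Y_2\mid V}$ guarantees receiver~$1$ is at least as capable of decoding $V^n$ as receiver~$2$).

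The key steps, in order, are: (i) set up the two nested polar transforms and identify, for each, the ``good for decoding'', ``good for shaping'', and ``frozen'' index sets using the standard polarization theorems; (ii) verify that the rate of the information set for $V^n$ approaches $I(V;Y_2)$ and the rate of the information set for $X^n$ (given $V^n$) approaches $I(X;Y_1\mid V)$, so that \eqref{eq:exsup} is met; (iii) argue the distribution induced by the encoder on $(V^n, X^n)$ is close in total variation to the $n$-fold product $p_{V,X}^{\otimes n}$ — this is where the shaping (randomized rounding) bits and the argument of the asymmetric-channel primitive enter, and it is needed so that the channel behaves as the i.i.d.\ $p_{Y_1,Y_2\mid X}$ on the codeword; (iv) bound the block error probability at each receiver by a union bound over the two successive-cancellation decoding stages, each of which contributes $O(2^{-n^{\beta}})$; and (v) observe encoding and decoding are two polar transforms of length $n$, hence $\Theta(n\log n)$.

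The crucial point where the degradedness hypothesis $p_{Y_1\mid V}\succ p_{Y_2\mid V}$ is used: receiver~$1$ must decode $V^n$ before it can peel off the cloud center and decode $X^n$. The indices that are information indices for $p_{Y_2\mid V}$ must therefore also be (nearly) reliable indices for $p_{Y_1\mid V}$; stochastic degradation makes the Bhattacharyya parameters for the $p_{Y_1\mid V}$-polarized channels no larger than those for $p_{Y_2\mid V}$ on every index, so a good index for the worse channel is a good index for the better one. Without this, the information set for $V^n$ chosen with respect to receiver~$2$ might contain indices that receiver~$1$ cannot resolve, and the alignment fails — this is precisely the obstruction that the later chaining construction in this paper removes.

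I expect the main obstacle to be the distribution-approximation argument in step (iii): one has to be careful that introducing shaping bits (common randomness or fixed by the code) for both the $V^n$ layer and the $X^n$ layer still yields a joint law close to $p_{V,X}^{\otimes n}$, and that the ``frozen'' bits communicated between encoder and decoder (or fixed in advance, as permitted here since no universality is claimed) do not cost rate. Everything else is a fairly mechanical assembly of the two polar primitives and the polarization error bounds; the degradedness assumption is exactly what keeps the index-alignment bookkeeping trivial, which is why this theorem is stated as a warm-up before the constraint is lifted.
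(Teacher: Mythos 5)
Your proposal is correct and follows essentially the same route as the paper: the paper treats this theorem as a result recalled from the \GAG construction and, in its review at the start of Section~\ref{sec:superposition}, gives exactly your argument --- two nested asymmetric-channel polar primitives (the $V$-layer designed for $p_{Y_2\mid V}$, the $X$-layer designed for the channel to $Y_1$ with side information $V$), with the degradedness hypothesis guaranteeing $\lset_{V\mid Y_2}\subseteq\lset_{V\mid Y_1}$, hence $\iset^{(2)}\subseteq\iset^{(1)}_v$, so receiver~$1$ can decode the cloud center chosen for receiver~$2$ and the index alignment is automatic. Your identification of the degradedness assumption as the precise point that makes the bookkeeping trivial, and of the induced-distribution approximation as the delicate step inherited from the asymmetric-channel primitive, matches the paper's framing.
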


\begin{theorem}[\GAG Binning Region]\label{th:AGGbin}
Consider the transmission over a $2$-user DM-BC $p_{Y_1, Y_2\mid X}$,
where $X$ denotes the input to the channel, and $Y_1$, $Y_2$ denote
the outputs at the first and second receiver, respectively. Let $V_1$
and $V_2$ denote auxiliary binary random variables and assume that
$p_{Y_2 \mid V_2}\succ p_{V_1 \mid V_2}$. Then, for any joint
distribution $p_{V_1, V_2}$, for any deterministic function $\phi$ s.t.
$X = \phi(V_1, V_2)$, and for any rate pair $(R_1, R_2)$ s.t.
\begin{equation}\label{eq:AGGbin}
\begin{split}
R_1 &< I(V_1; Y_1),\\
R_2 &< I(V_2; Y_2) -I(V_1; V_2),\\
\end{split}
\end{equation}
there exists a sequence of polar codes with an increasing block length $n$
that achieves this rate pair with encoding and decoding complexity
$\Theta(n \log n)$, and with a block error probability that decays
like $O(2^{-n^{\beta}})$ for any $\beta \in (0, 1/2)$.  \end{theorem}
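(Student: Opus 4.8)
The plan is to realize the polar binning scheme of Goela, Abbe and Gastpar, in which the hypothesis $p_{Y_2|V_2}\succ p_{V_1|V_2}$ is exactly what forces the two polarized index sets of the $V_2$-codebook to nest correctly. Fix a block length $n=2^m$, let $U_1^n=V_1^nG_n$ and $U_2^n=V_2^nG_n$ be the binary polar transforms of $n$ i.i.d.\ copies of $(V_1,V_2)\sim p_{V_1,V_2}$, and for a side-information vector $W^n$ write $\mathcal{H}_{V|W}$ and $\mathcal{L}_{V|W}$ for the indices $i$ with $H(U^{(i)}\mid U^{1:i-1},W^n)\ge 1-\delta_n$, respectively $\le\delta_n$, where $\delta_n=2^{-n^{\beta}}$; by polarization these two sets cover $[n]$ up to $o(n)$ indices, with $|\mathcal{L}_{V|W}|/n\to 1-H(V\mid W)$. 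First I would record the set inclusions. Conditioning can only lower the conditional entropy, so $\mathcal{L}_{V_1}\subseteq\mathcal{L}_{V_1|Y_1}$, $\mathcal{L}_{V_2}\subseteq\mathcal{L}_{V_2|Y_2}$ and $\mathcal{L}_{V_2}\subseteq\mathcal{L}_{V_2|V_1}$. The crucial inclusion comes from the hypothesis: since $p_{Y_2|V_2}\succ p_{V_1|V_2}$, a copy of $V_1^n$ can be produced from $Y_2^n$ by a memoryless degrading channel applied letterwise, and because $G_n$ is linear this makes the synthetic channel $U_2^{(i)}\mapsto(U_2^{1:i-1},V_1^n)$ stochastically degraded with respect to $U_2^{(i)}\mapsto(U_2^{1:i-1},Y_2^n)$; hence $\mathcal{L}_{V_2|V_1}\subseteq\mathcal{L}_{V_2|Y_2}$, equivalently $\mathcal{H}_{V_2|Y_2}\subseteq\mathcal{H}_{V_2|V_1}$.

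The encoder then works in two stages. It first generates $V_1^n$ as a Honda--Yamamoto-type channel code for $p_{Y_1|V_1}$: user~1's message (of length $nR_1$) is placed on a subset of $A_1:=\mathcal{H}_{V_1}\cap\mathcal{L}_{V_1|Y_1}=\mathcal{H}_{V_1}\setminus\mathcal{H}_{V_1|Y_1}$, the remaining positions of $\mathcal{H}_{V_1}$ are frozen to a sequence shared with receiver~1, and every $i\in\mathcal{L}_{V_1}$ is filled by randomized rounding from $p_{U_1^{(i)}\mid U_1^{1:i-1}}$. Given $V_1^n$, it generates $V_2^n$: user~2's message (of length $nR_2$) is placed on a subset of $A_2:=\mathcal{H}_{V_2|V_1}\cap\mathcal{L}_{V_2|Y_2}$, the set $\mathcal{H}_{V_2|V_1}\cap\mathcal{H}_{V_2|Y_2}=\mathcal{H}_{V_2|Y_2}$ is frozen and shared with receiver~2 --- legitimate precisely because, by the alignment above, these indices already lie in the set that is ``free'' given $V_1^n$, so fixing them does not perturb the conditional law of $V_2^n$ --- and every $i\in\mathcal{L}_{V_2|V_1}$ is set by randomized rounding from $p_{U_2^{(i)}\mid U_2^{1:i-1},V_1^n}$. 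The transmitted word is $X^n=\phi(V_1^n,V_2^n)$ computed letterwise. Counting sizes, $|A_1|/n\to H(V_1)-H(V_1\mid Y_1)=I(V_1;Y_1)$, and since $\mathcal{H}_{V_2|V_1}$ partitions as $A_2\sqcup\mathcal{H}_{V_2|Y_2}$, we get $|A_2|/n\to H(V_2\mid V_1)-H(V_2\mid Y_2)=I(V_2;Y_2)-I(V_1;V_2)$, so every pair $(R_1,R_2)$ satisfying \eqref{eq:AGGbin} fits for $n$ large enough.

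For the error analysis I would invoke the standard polar source/channel toolbox. The two-stage generation produces a code distribution $Q_{V_1^nV_2^n}$ with $\lVert Q_{V_1^nV_2^n}-p_{V_1,V_2}^{\otimes n}\rVert_1=O(2^{-n^{\beta}})$, so under the code the tuple $(V_1^n,V_2^n,X^n,Y_1^n,Y_2^n)$ behaves, up to this total-variation gap, as $n$ i.i.d.\ draws from $p_{V_1,V_2}\,\ind\{X=\phi(V_1,V_2)\}\,p_{Y_1,Y_2|X}$. Receiver~1 runs the successive-cancellation decoder of the $V_1$-code on $Y_1^n$ with its frozen bits and recovers $U_1^n$ on $\mathcal{L}_{V_1|Y_1}\supseteq A_1$; receiver~2 runs the SC decoder of the $V_2$-code on $Y_2^n$ and recovers $U_2^n$ on $\mathcal{L}_{V_2|Y_2}$, which by the alignment contains both $A_2$ and $\mathcal{L}_{V_2|V_1}$. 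Each decoder's block error probability is bounded by the sum of the relevant Bhattacharyya parameters plus the total-variation gap, hence $O(2^{-n^{\beta}})$ for every $\beta\in(0,1/2)$; encoding and decoding amount to a polar transform plus SC, i.e.\ $\Theta(n\log n)$. The shared frozen randomness is removed at the end by the usual averaging argument: there is a deterministic choice of the frozen sequences attaining at least the average performance.

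The hard part is not the rate bookkeeping but the joint total-variation estimate together with the internal consistency of the scheme: one must verify that generating $V_1^n$ and then $V_2^n$ conditioned on $V_1^n$ --- with messages and frozen bits sitting on the high-entropy sets, and only the coordinates that are low-entropy given the available side information drawn from the true kernels --- keeps the induced law within $2^{-n^{\beta}}$ of $p_{V_1,V_2}^{\otimes n}$, and that the degradedness hypothesis genuinely places $\mathcal{H}_{V_2|Y_2}$ inside the ``free'' set $\mathcal{H}_{V_2|V_1}$, so that freezing $\mathcal{H}_{V_2|Y_2}$ is simultaneously compatible with honestly generating $V_2^n$ and with receiver~2's ability to decode $A_2$. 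Once these two points are in place, the remainder is the off-the-shelf asymmetric polar-coding machinery and the classical SC error bound \cite{ArT09, HMTU13}.
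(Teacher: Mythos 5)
Your construction is essentially the scheme that the paper attributes to Goela--Abbe--Gastpar and reviews at the start of Section~\ref{sec:binning}: user~1 runs a point-to-point polar code with information set $\hset_{V_1}\cap\lset_{V_1\mid Y_1}$, user~2 places information in $\iset^{(2)}=\hset_{V_2\mid V_1}\cap\lset_{V_2\mid Y_2}$, the degradedness hypothesis yields exactly the inclusion $\lset_{V_2\mid V_1}\subseteq\lset_{V_2\mid Y_2}$ that aligns the indices, and your cardinality count reproduces \eqref{eq:rate2}. The only loose end is the $o(n)$ residue of unpolarized indices (those lying outside both $\hset_{V_2\mid Y_2}\cup\lset_{V_2\mid Y_2}$ and your explicitly frozen or randomized-rounded sets, i.e., essentially $\lset_{V_2\mid V_1}^{\text{c}}\cap\hset_{V_2\mid V_1}^{\text{c}}$), which receiver~2 can neither look up nor decode under your description; the paper, following the AGG construction, disposes of these by having the encoder pre-communicate (``genie-give'') them to the second receiver at asymptotically negligible rate, and you should add that step for the SC error bound to go through.
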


The rate regions \eqref{eq:exsup} and \eqref{eq:AGGbin} describe
a subset of the regions actually achievable with polar codes by
superposition coding and binning, respectively. However, in some
cases it is not possible to achieve the second subset, since,
by swapping the indices $1$ and $2$, we might not be able to fulfill
the required degradation assumptions.

\subsection{Comparison of Superposition Regions}

As a motivation, before proceeding with the new code constructions and proofs,
let us consider a specific transmission scenario and compare the information-theoretic
superposition region \eqref{eq:supran} and the AGG superposition region \eqref{eq:exsup}
where the latter requires the degradedness assumption $p_{Y_1 \mid V}\succ p_{Y_2 \mid V}$.

In the following, let the channel between $X$ and $Y_1$ be a binary symmetric channel
with crossover probability $p$, namely, a BSC$(p)$, and the channel between $X$ and $Y_2$
be a binary erasure channel with erasure probability $\epsilon$, namely, a BEC$(\epsilon)$.
Let us recall a few known results for this specific model (see \cite[Example~5.4]{kim:nit}).
\begin{enumerate}
\item For any choice of the parameters $p \in (0, 1/2)$ and $\epsilon \in (0, 1)$, the
capacity region of this DM-BC is achieved using superposition coding.
\item For $0 < \epsilon < 2p$, $Y_1$ is a stochastically degraded version of $Y_2$.
\item For $4p(1-p) < \epsilon \le h_2(p)$, $Y_2$ is more capable than $Y_1$, i.e.
$I(X;Y_2)\ge I(X;Y_1)$ for all distributions $p_X$, where
$h_2(p) = -p\log_2 p -(1-p)\log_2 (1-p)$ denotes the binary entropy function.
\end{enumerate}
Let $\mathcal V$ and $\mathcal X$ denote the alphabets of the auxiliary random variable
$V$ and of the input $X$, respectively. Then, if the DM-BC is stochastically degraded
or more capable, the auxiliary random variables satisfy the cardinality bound
$|\mathcal V| \le |\mathcal X|$ \cite{S78}. Consequently, for such a set of parameters,
we can restrict our analysis to binary auxiliary random variables without any loss of
generality. Furthermore, one can assume that the channel from $V$ to $X$ is a BSC, and
that the binary random variable $X$ is symmetric \cite[Lemma 7]{GNSW14}.

First, pick $p= 0.11$ and $\epsilon = 0.2$. In this case, the DM-BC is stochastically
degraded and, as can be seen in Figure~\ref{fig:calcregion2}, the two regions
\eqref{eq:supran} and \eqref{eq:exsup} coincide despite of the presence of the extra
degradedness assumption. In addition, these two regions are non-trivial in the sense
that they improve upon the simple time-sharing scheme in which one user remains silent
and the other employs a point-to-point capacity achieving code.
Then, pick $p= 0.11$ and $\epsilon = 0.4$. In the latter case, the DM-BC is more capable
and, as can be seen in Figure~\ref{fig:calcregion}, the information-theoretic region
\eqref{eq:supran} strictly improves upon the AGG region \eqref{eq:exsup} that coincides
with a trivial time-sharing.

\begin{figure}[tb]
\centering
\subfigure[$\epsilon =0.2$]
{\includegraphics[width=11cm]{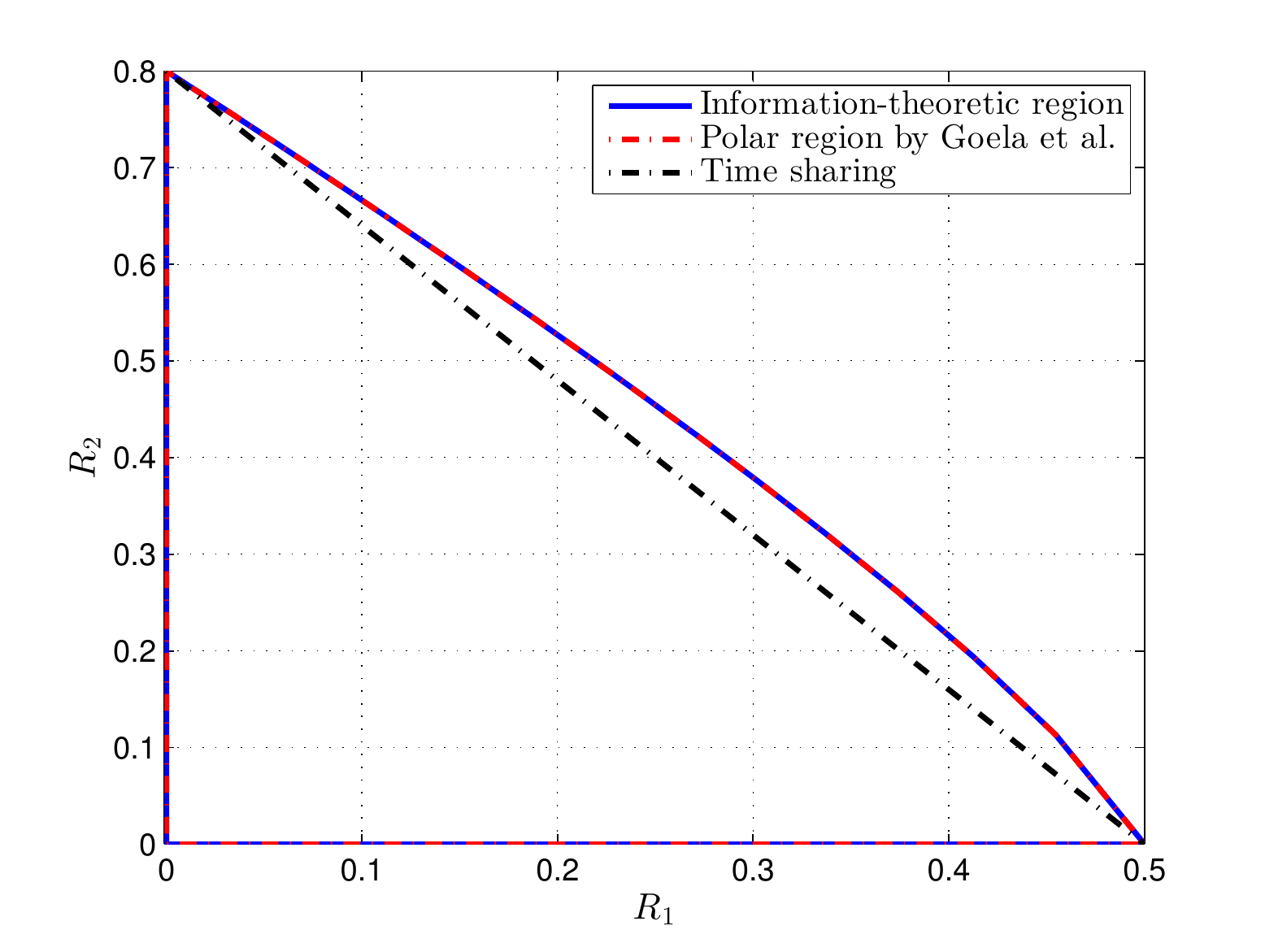}
\label{fig:calcregion2}}
\subfigure[$\epsilon =0.4$]
{\includegraphics[width=11cm]{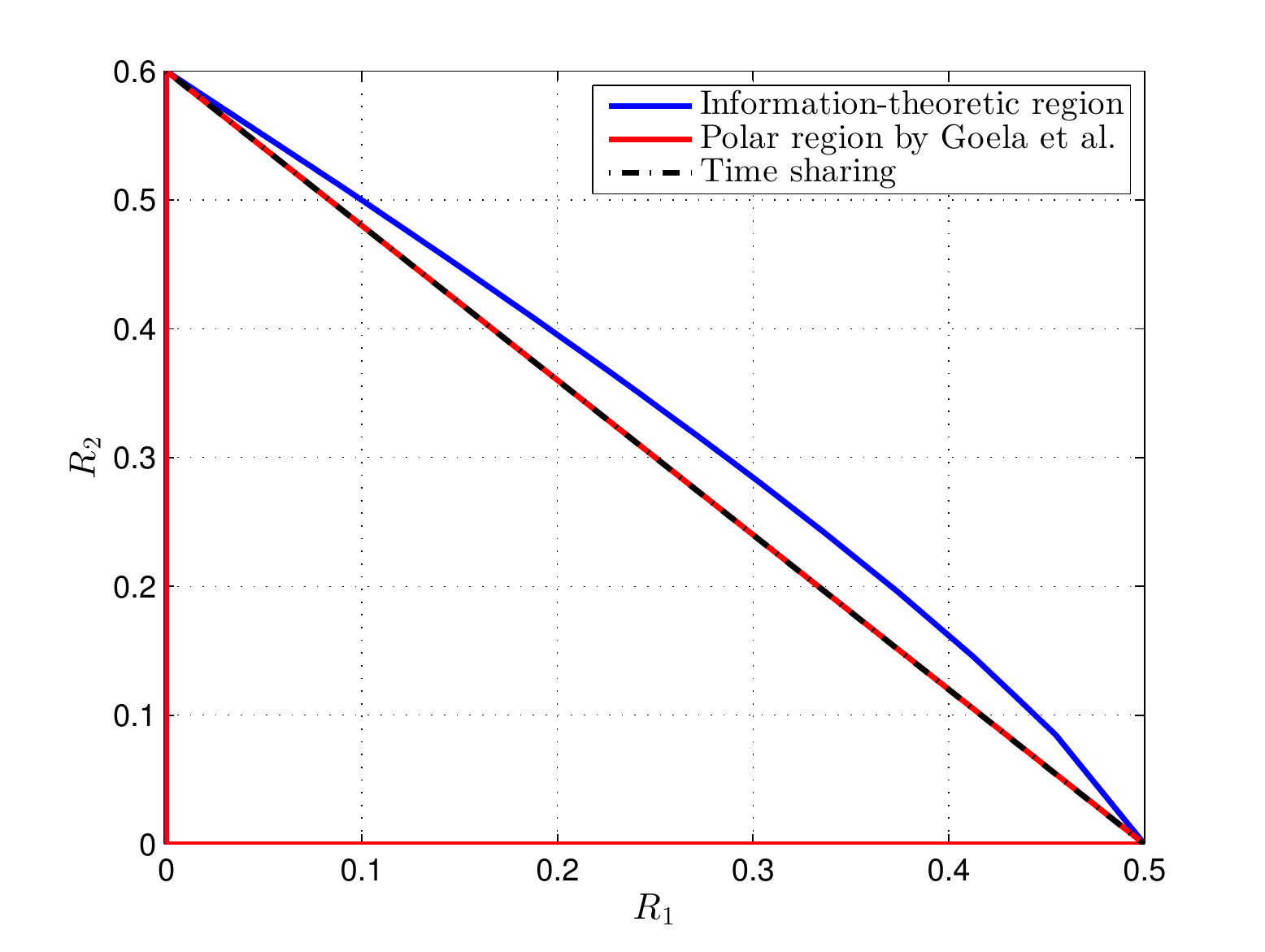}
\label{fig:calcregion}}
\caption{Comparison of superposition regions when the channel from $X$ to $Y_1$ is a BSC$(0.11)$ and the channel from $X$ to $Y_2$ is a BEC$(\epsilon)$. When $\epsilon=0.2$, the information-theoretic region (in blue) coincides with the AGG region (in red) and they are both strictly larger than the time-sharing line (in black). When $\epsilon = 0.4$, the information-theoretic region is strictly larger than the AGG region which reduces to the time-sharing line.}
\end{figure}

\subsection{Equivalent Description of Achievable Regions}\label{subsec:news}
When describing our new polar coding schemes, we will show how to
achieve certain rate pairs. The following propositions state that
the achievability of these rate pairs is equivalent to the achievability
of the whole rate regions described in Theorems~\ref{th:supran}--\ref{th:bestreg}.

\begin{proposition}[Equivalent Superposition Region]\label{prop:sup}
In order to show the achievability of all points in the region \eqref{eq:supran}, it
suffices to describe a sequence of codes with an increasing block length $n$
that achieves each of the rate pairs
\begin{itemize}
\item $(R_1, R_2) = (I(X; Y_1\mid V), \min(I(V; Y_1), I(V; Y_2)))$,
\item $(R_1, R_2) = (I(X; Y_1)- I(V; Y_2), I(V; Y_2))$, provided that
$I(V; Y_1) < I(V; Y_2) < I(X; Y_1)$,
\end{itemize}
with a block error probability that decays to zero as $n \rightarrow \infty$.
\end{proposition}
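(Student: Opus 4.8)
The plan is to show that the two explicit rate pairs listed in Proposition~\ref{prop:sup}, together with the trivial time-sharing corner points, generate the whole region \eqref{eq:supran} via time-sharing (i.e.\ taking convex combinations of achievable rate pairs, which is itself achievable by concatenating codes over sub-blocks of appropriate lengths). So the real content is a purely geometric statement about the polytope \eqref{eq:supran} for a \emph{fixed} joint distribution $p_{V,X}$: that every $(R_1,R_2)$ in it lies in the convex hull of a small set of achievable points. Recall that for fixed $p_{V,X}$ the region \eqref{eq:supran} is the set of $(R_1,R_2)$ with $0\le R_1< I(X;Y_1\mid V)$, $0\le R_2< I(V;Y_2)$, and $R_1+R_2< I(X;Y_1)$; since $I(X;Y_1)=I(V;Y_1)+I(X;Y_1\mid V)$, the sum constraint is active only when $I(V;Y_1)<I(V;Y_2)$, and it then cuts the rectangle $[0,I(X;Y_1\mid V)]\times[0,I(V;Y_2)]$ along the line $R_1+R_2=I(X;Y_1)$.

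First I would dispose of the easy case $I(V;Y_1)\ge I(V;Y_2)$: then $\min(I(V;Y_1),I(V;Y_2))=I(V;Y_2)$ and the sum constraint is slack throughout the rectangle, so the region is exactly the rectangle $[0,I(X;Y_1\mid V)]\times[0,I(V;Y_2)]$, whose corner $(I(X;Y_1\mid V),I(V;Y_2))$ is precisely the first listed rate pair; the other three corners $(0,0)$, $(I(X;Y_1\mid V),0)$, $(0,I(V;Y_2))$ are achievable by point-to-point codes (and are dominated by the first pair anyway), so the convex hull — equivalently the rectangle — is achieved. In the remaining case $I(V;Y_1)<I(V;Y_2)$, I would split into the two sub-cases $I(V;Y_2)\ge I(X;Y_1)$ and $I(V;Y_2)<I(X;Y_1)$. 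If $I(V;Y_2)\ge I(X;Y_1)$, the line $R_1+R_2=I(X;Y_1)$ dominates the constraint $R_2<I(V;Y_2)$ in the relevant range and the region is the triangle with vertices $(0,0)$, $(I(X;Y_1\mid V),I(V;Y_1))$, $(0,I(X;Y_1))$; here the apex $(I(X;Y_1\mid V),I(V;Y_1))$ equals the first listed pair with $\min(I(V;Y_1),I(V;Y_2))=I(V;Y_1)$, and $(0,I(X;Y_1))$ is a point-to-point point for user~$2$ (achievable since $I(X;Y_1)\le I(V;Y_2)$ means rate $I(X;Y_1)$ is below user~2's single-user capacity when $V=X$), so again the convex hull covers the region. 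If instead $I(V;Y_1)<I(V;Y_2)<I(X;Y_1)$, the region is a quadrilateral with vertices $(0,0)$, $(I(X;Y_1\mid V),I(V;Y_1))$, $(I(X;Y_1)-I(V;Y_2),I(V;Y_2))$, $(0,I(V;Y_2))$ — and now the \emph{second} listed rate pair is exactly the ``middle'' vertex $(I(X;Y_1)-I(V;Y_2),I(V;Y_2))$, whose presence is what the side condition $I(V;Y_1)<I(V;Y_2)<I(X;Y_1)$ is there to license, while $(I(X;Y_1\mid V),I(V;Y_1))$ is the first listed pair and $(0,I(V;Y_2))$ is point-to-point for user~$2$; the convex hull of these again closes up the quadrilateral.

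Finally I would wrap up by noting that achievability of the closure of \eqref{eq:supran} (with strict inequalities) follows from achieving each vertex to within any $\delta>0$ and time-sharing, that the union over all $p_{V,X}$ and then the convex-hull/closure operation described after Theorem~\ref{th:supran} are handled by standard concatenation arguments, and that in each of the time-sharing concatenations the block error probability is a sum of finitely many terms each decaying like $O(2^{-n^\beta})$, hence still $O(2^{-n^\beta})$ for a slightly smaller $\beta$. The only mildly delicate point — the ``main obstacle,'' such as it is — is bookkeeping the case split on the ordering of $I(V;Y_1)$, $I(V;Y_2)$, and $I(X;Y_1)$ and verifying in the genuinely two-dimensional sub-case that the two named rate pairs really are the two non-trivial extreme points of the polytope (so that nothing outside their convex hull with the trivial corners is left uncovered); once the polytope's vertex set is correctly identified, the rest is routine.
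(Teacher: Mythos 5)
Your proof is correct and follows essentially the same route as the paper: a case split on the ordering of $I(V;Y_1)$, $I(V;Y_2)$, and $I(X;Y_1)$, identification of the two listed rate pairs as the non-trivial extreme points in each case, and disposal of the remaining corners by point-to-point coding, domination, and time-sharing. The only (harmless) slip is that in the last two sub-cases the region also has a vertex at $(I(X;Y_1\mid V),0)$ --- it is a right trapezoid, respectively a pentagon, rather than a triangle/quadrilateral --- but that vertex is dominated by $(I(X;Y_1\mid V),I(V;Y_1))$, so nothing is lost.
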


\begin{proof}
Assume that $I(V; Y_2)\le I(V; Y_1)$. Since $V - X - Y_1$ forms a Markov chain,
by the chain rule, the first two inequalities in \eqref{eq:supran} imply that
$R_1+R_2 < I(X; Y_1\mid V) + I(V; Y_2) \le I(X; Y_1\mid V) + I(V; Y_1) = I(V, X; Y_1) = I(X; Y_1)$.
Hence, the region \eqref{eq:supran} is a rectangle and
it suffices to achieve the corner point $(I(X; Y_1 \mid V), I(V; Y_2))$.

Now, suppose that $I(V; Y_1)< I(V; Y_2)$. Let us separate this case into the
following two sub-cases:
\begin{enumerate}
\item If $I(X; Y_1)> I(V; Y_2)$, the region \eqref{eq:supran} is a pentagon
with the corner points
$$(I(X; Y_1)- I(V; Y_2), I(V; Y_2)), \quad (I(X; Y_1\mid V), I(V; Y_1)).$$
The reason for the first corner point is that
$I(V; Y_1 \mid X) = 0$, so, if $R_2 = I(V; Y_2)$,
the satisfiability of the equality $R_1 + R_2 = I(X; Y_1)$ yields that
$$R_1 = I(X; Y_1)- I(V; Y_2)
= I(V, X; Y_1)- I(V; Y_2) < I(V, X; Y_1)- I(V; Y_1) = I(X; Y_1 \mid V).$$
The reason for the second corner point is that
$R_1 = I(X; Y_1\mid V)$, $R_2 = I(V; Y_1) < I(V; Y_2)$, and
$$R_1 + R_2 = I(VX; Y_1) = I(V; Y_1 \mid X) + I(X; Y_1) = I(X; Y_1).$$
\item Otherwise, if $I(X; Y_1)\leq I(V; Y_2)$, the region \eqref{eq:supran}
is a right trapezoid with corner points $(I(X; Y_1\mid V), I(V; Y_1))$ and
$(0, I(X; Y_1))$. Since $V - X - Y_2$ forms a Markov chain, then, by the data
processing theorem and the last condition, it follows that
$I(X; Y_1)\leq I(V; Y_2) \leq I(X; Y_2)$. Hence, the second corner point
$(0, I(X; Y_1))$ is dominated by the point achievable when the first user
is kept silent and the second user adopts a point-to-point code, taken from a
sequence of codes with an increasing block length $n$, rate close to $I(X; Y_2)$,
and block error probability that decays to zero (for example, a sequence of
polar codes with an increasing block length).
\end{enumerate}
\end{proof}

\begin{proposition}[Equivalent Binning Region]\label{prop:bin}
In order to show the achievability of all points in the region \eqref{eq:marrand},
it suffices to describe a sequence of codes with an increasing block length $n$ that
achieves the rate pair $$(R_1, R_2) = (I(V_1;Y_1), I(V_2;Y_2)-I(V_1;V_2)),$$ assuming that
$I(V_1;V_2) \leq I(V_2;Y_2)$, with a
block error probability that decays to zero as $n \rightarrow \infty$.
\end{proposition}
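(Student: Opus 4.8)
The plan is to show that the single rate pair $(R_1, R_2) = (I(V_1;Y_1), I(V_2;Y_2)-I(V_1;V_2))$, together with a time-sharing argument and the symmetry obtained by swapping the two users, generates the whole binning region \eqref{eq:marrand}. First I would observe that the region \eqref{eq:marrand} is a polytope in the nonnegative quadrant, so it suffices to achieve all of its corner points (and then take the closure of the convex hull via time-sharing of achievable schemes). As with the superposition case in Proposition~\ref{prop:sup}, I would split into cases according to how the quantity $I(V_1;V_2)$ compares with $I(V_1;Y_1)$ and $I(V_2;Y_2)$, since these comparisons determine the shape of the region.

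The main step is the case analysis. Write $a = I(V_1;Y_1)$, $b = I(V_2;Y_2)$, $c = I(V_1;V_2)$, so the constraints read $R_1 < a$, $R_2 < b$, $R_1+R_2 < a+b-c$. If $c \le 0$ is impossible since mutual information is nonnegative, so $c \ge 0$; and if $c \ge \min(a,b)$, say $c \ge b$, then the sum constraint $R_1+R_2 < a+b-c \le a$ is the binding one together with $R_2 < b$, and the region degenerates — here the corner point $(0, b)$ (resp.\ the relevant vertex) is dominated by a point-to-point code for the second user, exactly as in the trapezoidal sub-case of Proposition~\ref{prop:sup}, so nothing new is needed. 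In the generic case $0 \le c < \min(a,b)$ the region is a pentagon with vertices $(0,0)$, $(a,0)$, $(a, b-c)$, $(a-c, b)$, $(0,b)$; the two ``interesting'' vertices are $(a, b-c)$ and $(a-c, b)$. The first of these is precisely the rate pair in the statement (and the hypothesis $I(V_1;V_2) \le I(V_2;Y_2)$, i.e.\ $c \le b$, guarantees $b - c \ge 0$, so it is a legitimate nonnegative rate pair). The second vertex $(a-c, b) = (I(V_1;Y_1)-I(V_1;V_2), I(V_2;Y_2))$ is obtained from the stated rate pair by interchanging the roles of user $1$ and user $2$ — i.e.\ it is ``the same'' rate pair for the binning scheme run with $V_1$ and $V_2$ swapped — so its achievability follows from the stated primitive applied symmetrically. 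The remaining vertices $(0,0)$, $(a,0)$, $(0,b)$ are dominated by the two interesting ones together with single-user point-to-point codes and time-sharing.

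I would then close the argument by noting that time-sharing between achievable rate pairs yields their convex combinations with vanishing error probability (concatenate the codes over a fraction of the block length), so achieving the two pentagon vertices $(a,b-c)$ and $(a-c,b)$ plus the trivial axis points suffices to sweep out all of \eqref{eq:marrand}. The one point deserving care is making sure the stated hypothesis $c \le b$ is exactly what is needed for the quoted rate pair to be feasible, while the symmetric vertex needs the ``mirror'' inequality $c \le a$ — but in the generic pentagon case $c < \min(a,b)$ both hold, and in the degenerate cases one of the two vertices disappears and is replaced by an axis point handled by a point-to-point code. The main obstacle, such as it is, will be bookkeeping the degenerate cases cleanly (when $c \ge a$ or $c \ge b$, or when $a+b-c$ makes the pentagon collapse to a triangle or trapezoid) and verifying that in each such case the ``missing'' vertex is genuinely dominated, rather than anything deep.
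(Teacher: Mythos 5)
Your proposal is correct and follows essentially the same route as the paper: achieve the corner point $(I(V_1;Y_1),\, I(V_2;Y_2)-I(V_1;V_2))$, obtain the mirror vertex $(I(V_1;Y_1)-I(V_1;V_2),\, I(V_2;Y_2))$ by swapping the roles of the two users, and dispose of the degenerate cases (trapezoid when $I(V_1;V_2)$ exceeds one of $I(V_1;Y_1)$, $I(V_2;Y_2)$, triangle when it exceeds both) by showing the remaining vertices are dominated by single-user point-to-point codes, using $I(V_1;Y_1)\le I(X;Y_1)$ from the Markov chain $V_1 - X - Y_1$. The only slip is cosmetic: in the sub-case $I(V_1;V_2)\ge I(V_2;Y_2)$ the vertex needing domination is $(I(V_1;Y_1)+I(V_2;Y_2)-I(V_1;V_2),\,0)$, handled by a point-to-point code for the \emph{first} user, not the second.
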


\bproof
Assume that $I(V_1; V_2) \leq \min(I(V_1; Y_1), I(V_2; Y_2))$. Then, the region
\eqref{eq:marrand} is a pentagon with corner points
$$(I(V_1;Y_1), I(V_2; Y_2)-I(V_1; V_2)), \quad
(I(V_1; Y_1)-I(V_1;V_2), I(V_2;Y_2)).$$ Since the region
\eqref{eq:marrand} and the above condition are not affected
by swapping the indices $1$ and $2$, it suffices to achieve the
first corner point. In order to obtain the other corner point,
one simply exchanges the roles of the two users.

Next, suppose that $I(V_2;Y_2)\le I(V_1; V_2)< I(V_1; Y_1)$.
Then, the region \eqref{eq:marrand} is a right trapezoid with corner points
$$(I(V_1; Y_1)-I(V_1;V_2), I(V_2; Y_2)), \quad
(I(V_1; Y_1)+I(V_2;Y_2)-I(V_1; V_2), 0).$$
Since $I(V_1; Y_1)+I(V_2; Y_2)-I(V_1; V_2)\le
I(V_1; Y_1)$ and $I(V_1; Y_1) \leq I(X; Y_1)$ (this follows
from the data processing theorem for the Markov chain $V_1 - X - Y_1$), the last rate pair is
dominated by the achievable point $(R_1, R_2) = (I(X; Y_1), 0)$ which
refers to a point-to-point communication at rate $I(X; Y_1)$ for the
first user, with a block error probability that decays to zero as
$n \rightarrow \infty$, while the second user is kept silent.

The case where $I(V_1;Y_1)\le I(V_1; V_2)< I(V_2; Y_2)$
is solved by swapping the indices of the two users, and by referring
to the previous case.

Finally, assume that
$I(V_1; V_2)\ge \max(I(V_1; Y_1), I(V_2; Y_2))$.
Then, the region \eqref{eq:marrand} is a triangle with corner
points that are achievable by letting one user remain silent, while the other
user performs a point-to-point reliable communication. \eproof

{\em Remark:} The rate $R_2 = I(V_2; Y_2) - I(V_1; V_2)$ in Proposition~\ref{prop:bin} is
identical to the Gelfand-Pinsker rate if one considers the sequence $V_1^{1:n}$
to be known {\em non-causally} at the encoder. This suggests a design of an encoder
which consists of two encoders: one for $v_1^{1:n}$, and the second for
$v_2^{1:n}$ based on the Gelfand-Pinsker coding; in the second encoder, the
sequence $v_1^{1:n}$ is provided as side information. The reader is referred to the
encoding scheme in \cite[Figure~7.3]{Kr07} while the indices~1 and~2 need to be switched.

\begin{proposition}[Equivalent Marton's Region]\label{prop:combreg}
In order to show the achievability of all points in the region \eqref{eq:best}, it suffices to
describe a sequence of codes with an increasing block length $n$ that achieves each of
the rate pairs
\begin{equation} \label{eq:bestcorner}
\begin{split}
(R_1, R_2) &= (I(V, V_1; Y_1), I(V_2; Y_2\mid V)-I(V_1; V_2\mid V) ),\\
(R_1, R_2) &= (I(V, V_1; Y_1)-I(V_1; V_2\mid V)-I(V; Y_2), I(V, V_2; Y_2) ),\\
\end{split}
\end{equation}
assuming that $I(V; Y_1)\le I(V; Y_2)$, with a block error probability that decays to zero as $n \rightarrow \infty$.
\end{proposition}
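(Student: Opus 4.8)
The plan is to follow the pattern of Propositions~\ref{prop:sup} and~\ref{prop:bin}: reduce Marton's region \eqref{eq:best} to a two-dimensional polytope, identify its ``north-east'' vertices with the two rate pairs in \eqref{eq:bestcorner}, and fill in the rest of the polytope by time-sharing together with ordinary point-to-point codes. Write $a_1\defas I(V,V_1;Y_1)$, $a_2\defas I(V,V_2;Y_2)$, and let $S\defas I(V,V_1;Y_1)+I(V_2;Y_2\mid V)-I(V_1;V_2\mid V)$ be the third sum-rate bound in \eqref{eq:best}. The first step is to note that under the standing hypothesis $I(V;Y_1)\le I(V;Y_2)$ the fourth bound in \eqref{eq:best} is redundant: a chain-rule computation gives $\big(I(V,V_2;Y_2)+I(V_1;Y_1\mid V)-I(V_1;V_2\mid V)\big)-S=I(V;Y_2)-I(V;Y_1)\ge 0$. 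Hence, for the fixed $p_{V,V_1,V_2}$ and $\phi$, the set of rate pairs to be achieved is $\mathcal{R}\defas\{(R_1,R_2):0\le R_1<a_1,\ 0\le R_2<a_2,\ R_1+R_2<S\}$. (If instead $I(V;Y_1)>I(V;Y_2)$ one relabels the two users, which maps \eqref{eq:best} to itself and \eqref{eq:bestcorner} to the same pair with the indices $1,2$ interchanged.)

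The second step is to match the corners and collect the few inequalities needed for the degenerate cases. A direct computation identifies the first rate pair in \eqref{eq:bestcorner} with $P_1\defas(a_1,\,S-a_1)$ and the second with $P_2\defas(S-a_2,\,a_2)$; observe that $P_1$ has non-negative coordinates exactly when $S\ge a_1$, and $P_2$ exactly when $S\ge a_2$. I would also record that $a_1\le I(X;Y_1)$ and $a_2\le I(X;Y_2)$ by the data-processing inequality for the Markov chains $(V,V_1)-X-Y_1$ and $(V,V_2)-X-Y_2$ (both valid since $X=\phi(V,V_1,V_2)$ and the outputs depend on $(V,V_1,V_2)$ only through $X$), and that $S\le a_1+a_2$ because $a_1+a_2-S=I(V;Y_2)+I(V_1;V_2\mid V)\ge 0$. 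Depending on where $S$ sits relative to $a_1$ and $a_2$, $\mathcal{R}$ takes one of four shapes:
\begin{enumerate}
\item $S\ge a_1$ and $S\ge a_2$: a (possibly degenerate) pentagon with vertices $(0,0),(a_1,0),P_1,P_2,(0,a_2)$;
\item $S\ge a_1$ and $S<a_2$: a trapezoid with vertices $(0,0),(a_1,0),P_1,(0,S)$;
\item $S<a_1$ and $S\ge a_2$: the mirror-image trapezoid with vertices $(0,0),(S,0),P_2,(0,a_2)$;
\item $S<a_1$ and $S<a_2$: a triangle with vertices $(0,0),(S,0),(0,S)$ --- empty, hence nothing to prove, when $S\le 0$.
\end{enumerate}

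The last step closes each case by a time-sharing/dominance argument. In cases~1 and~2 the pair $P_1$ is a legitimate non-negative rate pair and is therefore achievable by hypothesis; likewise $P_2$ in cases~1 and~3. For every $R<\min(S,a_1)\le I(X;Y_1)$ the pair $(R,0)$ is achievable by a point-to-point code for the first user (e.g.\ a sequence of polar codes) with the second user silent, and symmetrically $(0,R)$ for $R<\min(S,a_2)\le I(X;Y_2)$. In each of the four shapes every point of $\mathcal{R}$ is dominated coordinatewise by a convex combination of two of these achievable pairs --- $P_1$ and $P_2$ in case~1 (using also $(a_1,0)\preceq P_1$ and $(0,a_2)\preceq P_2$); $P_1$ together with points on the $R_2$-axis approaching $(0,S)$ in case~2; $P_2$ together with points on the $R_1$-axis in case~3; axis points approaching $(S,0)$ and $(0,S)$ in case~4 --- so time-sharing between them and then lowering the rates yields every point of \eqref{eq:best}. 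I expect the only real work to be this bookkeeping: confirming that $P_1,P_2$ are non-negative rate pairs precisely in the regimes where they are vertices, that when they are not the missing corner is replaced by an axis point handled by a point-to-point code, and that the stated convex-combination covers are correct --- which is exactly the role of the inequalities $a_i\le I(X;Y_i)$ and $S\le a_1+a_2$. This degenerate-case enumeration, rather than any single estimate, will be the main (though mild) obstacle.
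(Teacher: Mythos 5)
Your proposal is correct and follows essentially the same route as the paper: after observing that the fourth bound in \eqref{eq:best} is redundant when $I(V;Y_1)\le I(V;Y_2)$, the paper also splits into cases according to the signs of $I(V_2;Y_2\mid V)-I(V_1;V_2\mid V)$ and $I(V,V_1;Y_1)-I(V_1;V_2\mid V)-I(V;Y_2)$ (your $S-a_1$ and $S-a_2$), obtaining a pentagon, a right trapezoid, or a triangle, with the missing corners handled by point-to-point codes. Your write-up merely makes explicit the polytope vertices, the identity $a_1+a_2-S=I(V;Y_2)+I(V_1;V_2\mid V)\ge 0$, and the coordinatewise-dominance/time-sharing bookkeeping that the paper leaves implicit by reference to Propositions~\ref{prop:sup} and~\ref{prop:bin}.
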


\begin{proof}
Since the region \eqref{eq:best} is not affected by swapping the indices $1$ and $2$,
we can assume without loss of generality that $I(V; Y_1)\le I(V; Y_2)$. Then,
\begin{equation*}
\begin{split}
&I(V, V_1; Y_1)+I(V_2; Y_2 \mid V)
=I(V;Y_1) + I(V_1;Y_1|V) + I(V_2; Y_2 \mid V) \\
&\le I(V;Y_2) + I(V_1;Y_1|V) + I(V_2; Y_2 \mid V)= I(V, V_2; Y_2)+I(V_1; Y_1 \mid V),
\end{split}
\end{equation*}
which means that the fourth inequality in \eqref{eq:best} does not restrict the rate region
under the above assumption.

Now, we can follow the same procedure outlined in the proof of Propositions~\ref{prop:sup}
and~\ref{prop:bin}. Suppose that
\begin{equation}\label{ineq:prop}
\begin{split}
&I(V_2; Y_2 \mid V) - I(V_1; V_2 \mid V) > 0,\\
&I(V, V_1; Y_1)-I(V_1; V_2\mid V)-I(V; Y_2) > 0.\\
\end{split}
\end{equation}
Then, the rate region \eqref{eq:best} is a pentagon with the corner points in \eqref{eq:bestcorner}.

If one of the inequalities in \eqref{ineq:prop} is satisfied and the other is violated,
then the region \eqref{eq:best} is a right trapezoid with one corner point given by
\eqref{eq:bestcorner} and the other corner point which is achievable by letting one user remain silent, while the other uses a point-to-point
reliable scheme. If both inequalities in
\eqref{ineq:prop} are violated, then the region \eqref{eq:best} is a triangle with corner points
that are achievable with point-to-point coding schemes.
\end{proof}


\section{Polar Coding Primitives}\label{sec:primitives}

The \GAG constructions, as well as our extensions, are based on two
polar coding ``primitives''.  Therefore, before discussing the
broadcast setting, let us review these basic scenarios.

The first such primitive is the lossless compression, with or
without side information. In the polar setting, this problem was
first discussed in \cite{Kor09thesis, HKU09a}. In Section \ref{subsec:losscompr},
we consider the point of view of source polarization in~\cite{Ar10}.

The second such primitive is the transmission of polar codes over a
general binary-input discrete memoryless channel (a DMC which is either
symmetric or asymmetric). The basic problem which one faces here is that
linear codes impose a uniform input distribution, while the
capacity-achieving input distribution is in general not the uniform
one when the DMC is asymmetric
(however, in relative terms, the degradation in using the uniform
prior for a binary-input DMC is at most 6\% \cite{MajR91, ShuF04}).
One solution consists of concatenating the linear code with a non-linear
pre-mapper \cite{Gal68}. A solution which makes use of the concatenation
of two polar codes has been proposed in \cite{SRDR12}. However, a more
direct polar scheme is implicitly considered in \cite{GAG13ar}, and is
independently and explicitly presented in \cite{HY13}. We will briefly
review this last approach in Section \ref{subsec:general DMC}.

\emph{Notation:} In what follows, we assume that $n$ is a
power of $2$, say $n=2^m$ for $m \in {\mathbb N}$, and we denote
by $G_n$ the polar matrix given by $G_n = \left[\begin{array}{cc}
1 & 0 \\ 1 & 1 \\ \end{array}\right]^{\otimes m}$, where $\otimes$ denotes
the Kronecker product of matrices. The index set $\{1, \cdots, n\}$
is abbreviated as $[n]$ and, given a set $\mathcal A\subseteq [n]$, we denote
by ${\mathcal A}^c$ its complement. We use $X^{i:j}$ as a shorthand
for $(X^i, \cdots, X^j)$ with $i \le j$.

\subsection{Lossless Compression} \label{subsec:losscompr}

{\bf Problem Statement.} Consider a binary random variable $X\sim p_X$.
Then, given the random vector $X^{1:n} = (X^1, \cdots,
X^n)$ consisting of $n$ i.i.d. copies of $X$, the aim is to compress
$X^{1:n}$ in a lossless fashion into a binary codeword of size roughly $n
H(X)$, which is the entropy of $X^{1:n}$.

{\bf Design of the Scheme.}  Let $U^{1:n}= (U^1, \cdots, U^n)$ be
defined as
\begin{equation} \label{lss}
U^{1:n}= X^{1:n} G_n.
\end{equation}
Then, $U^{1:n}$ is a random vector whose components are polarized
in the sense that either $U^i$ is approximately uniform and independent
of $U^{1:i-1}$, or $U^i$ is approximately a deterministic function of
$U^{1:i-1}$. Formally, for $\beta \in (0, 1/2)$, let $\delta_n = 2^{-n^{\beta}}$
and set
\begin{equation} \label{lossless_sets}
\begin{split}
\hset_X &= \{i \in [n] \colon Z(U^i \mid  U^{1:i-1}) \ge 1- \delta_n\}, \\
\lset_X &= \{i \in [n] \colon Z(U^i \mid  U^{1:i-1}) \le \delta_n\},
\end{split}
\end{equation}
where $Z$ denotes the Bhattacharyya parameter. Recall that, given
$(T, V)\sim p_{T, V}$, where $T$ is binary and $V$ takes values in
an arbitrary discrete alphabet ${\mathcal V}$, we define
\begin{equation} \label{eq:Bhattacharyya}
Z(T\mid V) = 2 \sum_{v\in{\mathcal V}}
{\mathbb P}_V(v)\sqrt{{\mathbb P}_{T\mid V}(0\mid v){\mathbb P}_{T\mid V}(1\mid v)}.
\end{equation}
Hence, for  $i \in \mathcal{H}_X$, the bit $U^i$ is approximately
uniformly distributed and independent of the past $U^{1:i-1}$; also, for
$i \in \mathcal{L}_X$, the bit $U^i$ is approximately a deterministic
function of $U^{1:i-1}$. Furthermore,
\begin{equation} \label{eq:card}
\begin{split}
\lim_{n\to \infty} \frac{1}{n} \, | \hset_X|  &= H(X),\\
\lim_{n\to \infty}\frac{1}{n} \, | \lset_X|  &= 1-H(X).\\
\end{split}
\end{equation}
For a graphical representation of this setting, see Figure~\ref{fig:lossless}.

\begin{figure}[tb]
\begin{center} \includegraphics[width=3cm]{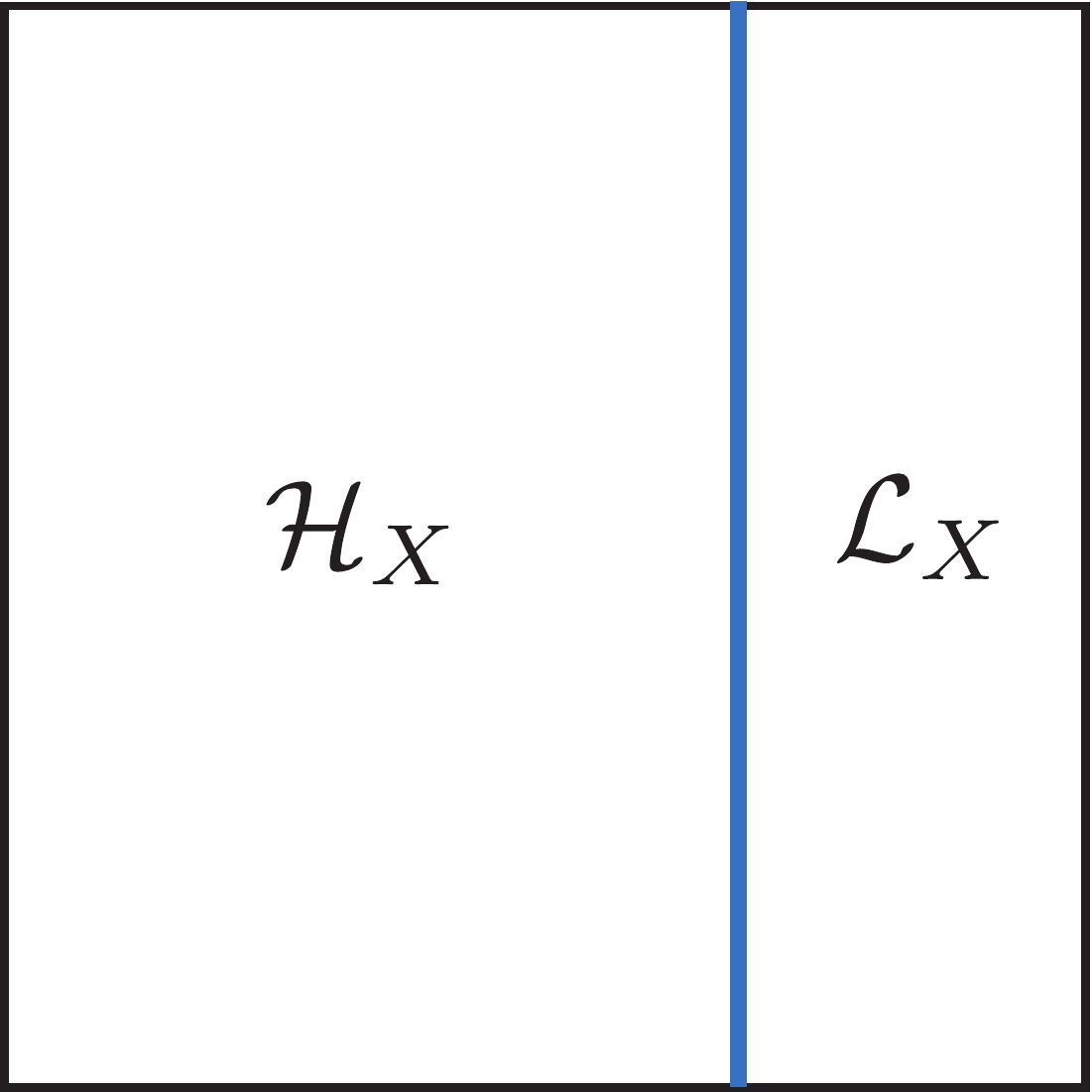}
\end{center}
\caption{A simple graphical representation of the sets $\hset_X$
and $\lset_X$ for the lossless compression scheme.  The whole square represents
$[n]$. The sets $\hset_X$ and $\lset_X$ almost form a
partition of $[n]$ in the sense that the number of indices of $[n]$
which are neither in $\hset_X$ nor in $\lset_X$ is $o(n)$.
}\label{fig:lossless} \end{figure}

{\bf Encoding.} Given the vector $x^{1:n}$ that we want to compress,
the encoder computes $u^{1:n} = x^{1:n} G_n$ and outputs the values
of $u^{1:n}$ in the positions $\lset_X^{\text{c}} = [n] \setminus \lset_X$,
i.e., it outputs $\{u^i\}_{i \in \lset_X^{\text{c}}}$.

{\bf Decoding.} The decoder receives $\{u^i\}_{i \in \lset_X^{\text{c}}}$
and computes an estimate $\hat{u}^{1:n}$ of $u^{1:n}$ using the rule
\begin{equation}\label{eq:decrule} \hat{u}^i = \left\{ \begin{array}{ll}
u^i, & \mbox{if } i \in \lset_X^{\text{c}} \\ \displaystyle\arg\max_{u\in
\{0, 1\}} {\mathbb P}_{U^i \mid  U^{1:i-1}}(u \mid  u^{1:i-1}),& \mbox{if
} i \in \lset_X \\ \end{array}\right..  \end{equation}
Note that the conditional probabilities ${\mathbb P}_{U^i \mid  U^{1:i-1}}(u \mid
u^{1:i-1})$, for $u \in \{0, 1\}$, can be computed recursively with
complexity $\Theta(n \log n)$.

{\bf Performance.}
As explained above, for $i \in \lset_X$, the bit $U^i$ is almost
deterministic given its past $U^{1:i-1}$. Therefore, for $i \in \lset_X$,
the distribution ${\mathbb P}_{U^i \mid  U^{1:i-1}}(u \mid  u^{1:i-1})$
is highly biased towards the correct value $u^i$. Indeed, the block error
probability $P_{\rm e}$, given by
\begin{equation*}
P_{\rm e} = {\mathbb P}(\hat{U}^{1:n} \neq U^{1:n}),
\end{equation*}
can be upper bounded by
\begin{equation}
P_{\rm e} \le \sum_{i \in \lset_X} Z(U^i \mid  U^{1:i-1}) = O(2^{-n^{\beta}}),
\quad \forall \, \beta \in (0, 1/2).
\end{equation}

{\bf Addition of Side Information.} This is a slight extension of the
previous case, and it is also discussed in \cite{Ar10}.
Let $(X, Y) \sim p_{X, Y}$ be a pair of random variables, where we
think of $X$ as the source to be compressed and of $Y$ as a \emph{side
information} about $X$. Given the vector $(X^{1:n}, Y^{1:n})$ of
$n$ independent samples from the distribution $p_{X, Y}$, the
problem is to compress $X^{1:n}$ into a codeword of size roughly
$n H(X\mid Y)$, so that the decoder is able to recover the whole vector
$X^{1:n}$ by using the codeword and the side information $Y^{1:n}$.

Define $U^{1:n}= X^{1:n}
G_n$ and consider the sets
\begin{equation} \label{lossless_side_h}
\hset_{X\mid Y} = \{i \in [n] \colon Z(U^i \mid U^{1:i-1}, Y^{1:n}) \ge 1- \delta_n\},
\end{equation}
representing the positions s.t. $U^i$ is approximately uniformly distributed and
independent of $(U^{1:i-1}, Y^{1:n})$, and
\begin{equation} \label{lossless_side_l}
\lset_{X\mid Y} = \{i \in [n] \colon Z(U^i \mid  U^{1:i-1}, Y^{1:n}) \le \delta_n\},
\end{equation}
representing the positions s.t. $U^i$ is approximately a deterministic
function of $(U^{1:i-1}, Y^{1:n})$ (see Figure~\ref{fig:lossless_side}).
Note that lossless compression without side information can be considered as lossless
compression with side information $\tilde{Y}$, where $\tilde{Y}$ is independent of
$X$ (say, e.g., that $\tilde{Y}$ is constant). Therefore, $\tilde{Y}$ does
not add any information about $X$ and it can be thought as a degraded version of $Y$.
Therefore, the following inclusion relations hold:
\begin{equation}\label{eq:cardnec}
\begin{split}
&\hset_{X \mid Y} \subseteq \hset_X, \\
&\lset_{X} \subseteq \lset_{X \mid Y},
\end{split}
\end{equation}
as it is graphically illustrated in Figures~\ref{fig:lossless} and~\ref{fig:lossless_side}.
A relationship analogous to \eqref{eq:card} holds, namely,
\begin{equation} \label{eq:cardsideinfo}
\begin{split}
\lim_{n\to \infty}\frac{1}{n} \, | \hset_{X\mid Y}|  &= H(X\mid Y),\\[0.1cm]
\lim_{n\to \infty}\frac{1}{n} \, | \lset_{X\mid Y}|  &= 1-H(X\mid Y).
\end{split}
\end{equation}

\begin{figure}[tb]
\begin{center} \includegraphics[width=3cm]{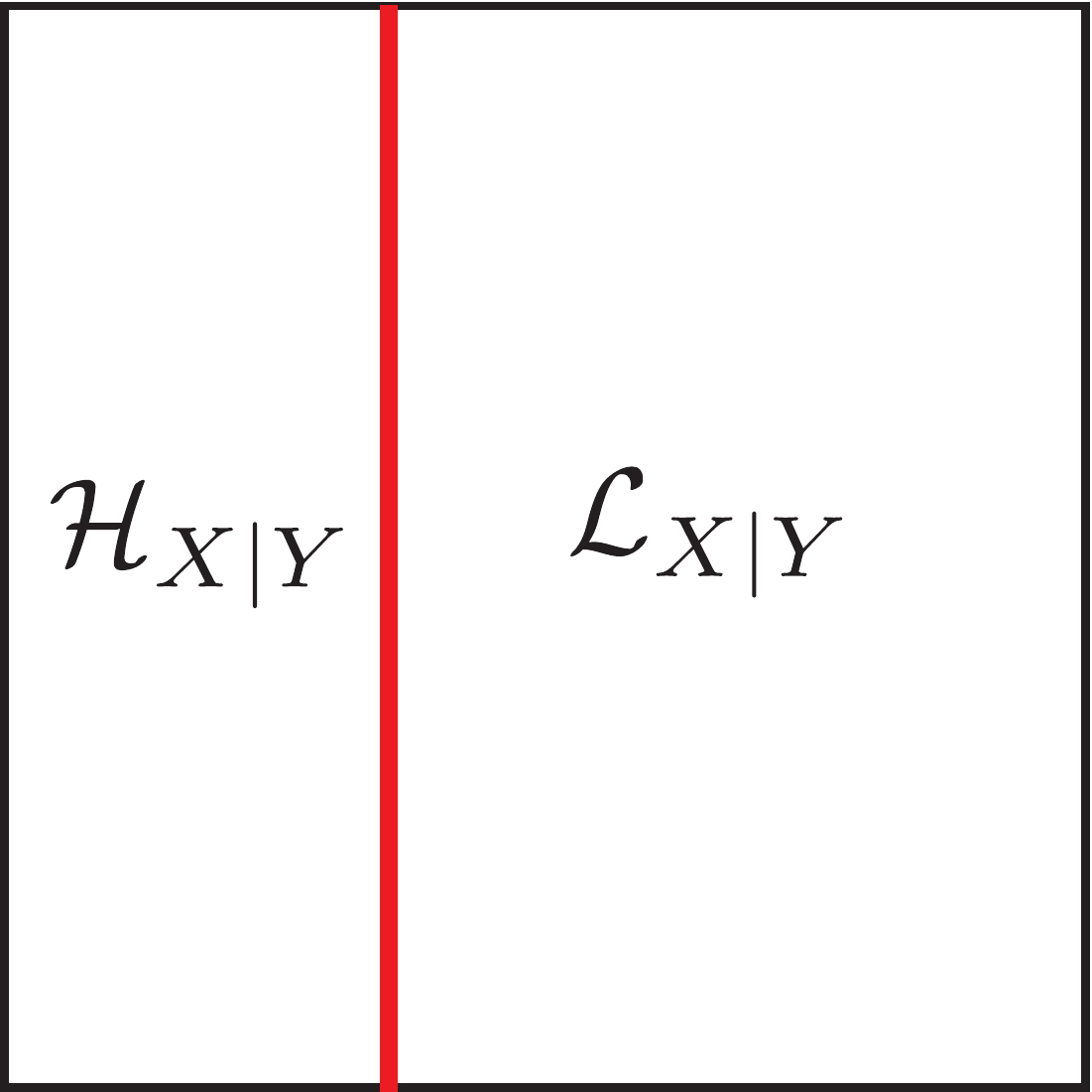}
\end{center}
\caption{A simple graphical representation of the sets $\hset_{X\mid Y}$
and $\lset_{X\mid Y}$ for the lossless compression scheme with side information.  The whole square represents
$[n]$. The sets $\hset_{X\mid Y}$ and $\lset_{X\mid Y}$ almost form a
partition of $[n]$ in the sense that the number of indices of $[n]$
which are neither in $\hset_{X\mid Y}$ nor in $\lset_{X\mid Y}$ is $o(n)$.
}\label{fig:lossless_side}
\end{figure}

Given a realization of $X^{1:n}$, namely $x^{1:n}$, the encoder constructs
$u^{1:n} = x^{1:n} G_n$ and outputs $\{u^i\}_{i \in \lset_{X\mid Y}^{\text{c}}}$ as
the compressed version of $x^{1:n}$. The decoder, using the
side information $y^{1:n}$ and a decoding rule similar to
\eqref{eq:decrule}, is able to reconstruct $x^{1:n}$ reliably with
vanishing block error probability.

\subsection{Transmission over Binary-Input DMCs}
\label{subsec:general DMC}

{\bf Problem Statement.} Let $W$ be a DMC with a binary input $X$ and output $Y$.
Fix a distribution $p_X$ for the random variable $X$.
The aim is to transmit over $W$ with a rate close to $I(X; Y)$.

{\bf Design of the Scheme.}
Let $U^{1:n} = X^{1:n} G_n$, where $X^{1:n}$ is a vector of
$n$ i.i.d. components drawn according to $p_X$.  Consider the sets
$\hset_X$ and $\lset_X$ defined in \eqref{lossless_sets}. From the
discussion about lossless compression, we know that, for $i \in
\hset_X$, the bit $U^i$ is approximately uniformly distributed and independent
of $U^{1:i-1}$ and that, for $i \in \lset_X$, the bit $U^i$ is approximately a
deterministic function of the past $U^{1:i-1}$. Now, assume that the channel
output $Y^{1:n}$ is given, and interpret this as side information on $X^{1:n}$.
Consider the sets $\hset_{X \mid Y}$ and $\lset_{X \mid Y}$ as defined in
\eqref{lossless_side_h} and \eqref{lossless_side_l}, respectively.
To recall, for $i \in \hset_{X \mid Y}$, $U^i$ is approximately uniformly distributed
and independent of $(U^{1:i-1}, Y^{1:n})$, and, for $i \in \lset_{X \mid Y}$, $U^i$
becomes approximately a deterministic function of $(U^{1:i-1}, Y^{1:n})$.

To construct a polar code for the channel $W$, we proceed now as
follows. We place the information in the positions indexed by $\iset
= \hset_X \cap \lset_{X\mid Y}$ (note that, from \eqref{eq:cardnec},
$\lset_{X} \subseteq \lset_{X \mid Y}$). Indeed, if $i \in \iset$, then $U^i$
is approximately uniformly distributed given $U^{1:i-1}$, since $i\in\hset_{X}$.
This implies that $U^i$ is suitable to contain information.
Furthermore, $U^i$ is approximately a deterministic function if we are given
$U^{1:i-1}$ {\em and} $Y^{1:n}$, since $i\in\lset_{X
\mid Y}$.  This implies that it is also decodable in a successive manner
given the channel output. Using \eqref{eq:card}, \eqref{eq:cardnec},
\eqref{eq:cardsideinfo}, and the fact that the number of indices in $[n]$ which
are neither in $\mathcal{H}_X$ nor in $\mathcal{L}_X$ is $o(n)$, it follows that
\begin{equation}\label{eq:I}
\begin{split}
&\lim_{n\to\infty}\frac{1}{n} \, | \iset| \\[0.1cm]
&= \lim_{n\to\infty}\frac{1}{n} \, |\lset_{X \mid Y} \setminus \lset_X| \\[0.1cm]
&= \lim_{n\to\infty}\frac{1}{n} \, |\lset_{X \mid Y}|
- \lim_{n\to\infty}\frac{1}{n} \, |\lset_{X}| \\[0.1cm]
&= H(X)-H(X \mid Y) \\
&= I(X;Y).
\end{split}
\end{equation}
Hence, our requirement on the transmission rate is met.

The remaining positions are frozen. More precisely, they are divided
into two subsets, namely $\fset_{\rm r} = \hset_X \cap \lset_{X \mid
Y}^{\text{c}}$ and $\fset_{\rm d} = \hset_X^{\text{c}}$. For $i\in \fset_{\rm r}$,
$U^i$ is independent of $U^{1:i-1}$, but cannot be reliably decoded using
$Y^{1:n}$.  We fill these positions with bits chosen uniformly
at random, and this randomness is assumed to be shared between the
transmitter and the receiver (i.e., the encoder and the decoder know
the values associated to these positions). For $i \in \fset_{\rm d}$,
the value of $U^i$ has to be chosen in a particular way. This is true since
almost all these positions are in $\lset_X$ and, hence, $U^i$ is approximately
a deterministic function of $U^{1:i-1}$. The situation is schematically
represented in Figure~\ref{fig:assymetric}.

\begin{figure}[tb]
\begin{center} \includegraphics[width=6.5cm]{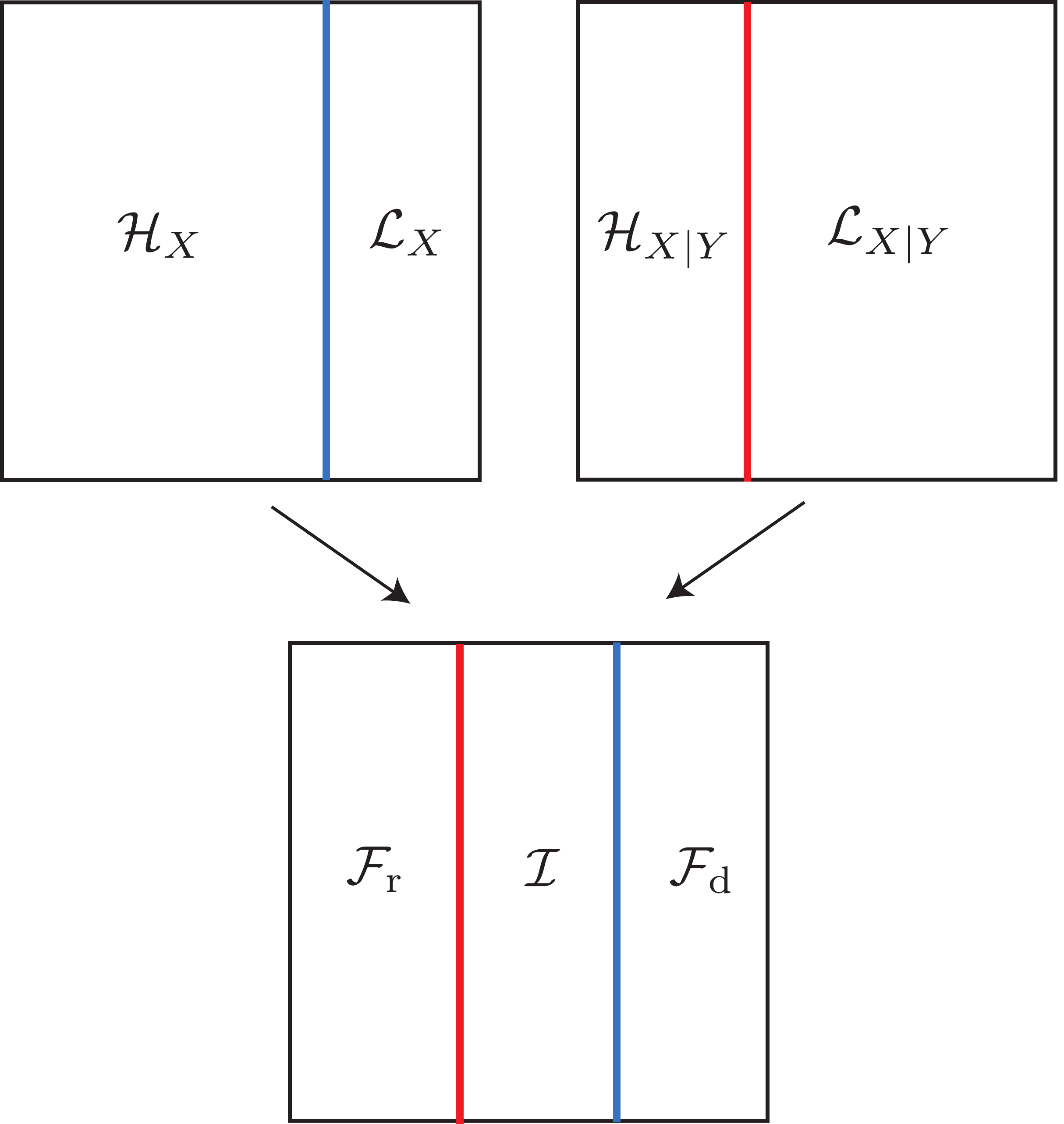}
\end{center}
\caption{Graphical representation of the sets associated
to the channel coding problem. The two images
on top represent how the set $[n]$ (the whole square) is partitioned
by the source $X$ (top left), and by the source $X$ together with the
output $Y$ assumed as a side information (top right). Since $\hset_{X \mid Y}
\subseteq \hset_X$ and $\lset_{X} \subseteq \lset_{X \mid Y}$, the set of
indices $[n]$ can be partitioned into three subsets (bottom image):
the information indices $\mathcal{I} = \hset_X \cap \lset_{X\mid Y}$; the
frozen indices $\mathcal{F}_{\rm r} = \hset_X \cap \lset_{X \mid Y}^{\text{c}}$
filled with binary bits chosen uniformly at random; the frozen indices
$\mathcal{F}_{\rm d} = \hset_X^{\text{c}}$ chosen according to a deterministic rule.}
\label{fig:assymetric} \end{figure}

{\bf Encoding.} The encoder first places the information bits into
$\{u^i\}_{i \in \iset}$. Then, $\{u^i\}_{i \in \fset_{\rm r}}$ is filled
with a random sequence which is shared between the transmitter
and the receiver. Finally, the elements of $\{u^i\}_{i \in \fset_{\rm d}}$
are computed in successive order and, for $i \in \fset_{\rm d}$, $u^i$
is set to the value
$$u^i = \arg\max_{u\in \{0, 1\}} {\mathbb P}_{U^i \mid U^{1:i-1}}(u \mid  u^{1:i-1}).$$
These probabilities can be computed recursively with complexity $\Theta(n \log n)$.
Since $G_n = G_n^{(-1)}$, the $n$-length vector $x^{1:n} = u^{1:n} G_n$ is transmitted
over the channel.

{\bf Decoding.} The decoder receives $y^{1:n}$, and it computes the
estimate $\hat{u}^{1:n}$ of $u^{1:n}$ according to the rule
\begin{equation}\label{eq:decruleas}
\hat{u}^i = \left\{ \begin{array}{ll}
u^i, & \hspace*{0.2em}\mbox{if } i \in \fset_{\rm r} \\
\displaystyle\arg\max_{u\in \{0, 1\}}
{\mathbb P}_{U^i \mid U^{1:i-1}}(u \mid u^{1:i-1}),&\hspace*{0.2em}
\mbox{if } i \in \fset_{\rm d} \\
\displaystyle\arg\max_{u\in \{0, 1\}} {\mathbb P}_{U^i
\mid U^{1:i-1}, Y^{1:n}}(u \mid u^{1:i-1}, y^{1:n}), & \hspace*{0.2em}
\mbox{if } i \in \iset
\\ \end{array}\right.,  \end{equation}
where ${\mathbb P}_{U^i \mid U^{1:i-1}, Y^{1:n}}(u \mid u^{1:i-1}, y^{1:n})$
can be computed recursively with complexity $\Theta(n \log n)$.

{\bf Performance.} The block error probability $P_{\rm e}$ can be upper bounded by
\begin{equation}
P_{\rm e} \le \sum_{i \in \iset} Z(U^i \mid U^{1:i-1}, Y^{1:n}) = O(2^{-n^{\beta}}),
\quad \forall \, \beta \in (0, 1/2).
\end{equation}


\section{Polar Codes for Superposition Region}\label{sec:superposition}

The following theorem provides our main result regarding the achievability
of Bergmans' superposition region for DM-BCs with polar codes (compare with
Theorem~\ref{th:supran}).

\begin{theorem}[Polar Codes for Superposition Region]\label{th:suppolnew}
Consider a $2$-user DM-BC $p_{Y_1, Y_2 \mid X}$ with a binary
input alphabet, where $X$ denotes the input to the channel, and
$Y_1$, $Y_2$ denote the outputs at the first and second receiver, respectively.
Let $V$ be an auxiliary binary random variable. Then, for any joint distribution
$p_{V, X}$ s.t. $V - X - (Y_1, Y_2)$ forms a Markov chain and for any rate pair
$(R_1, R_2)$ satisfying the constraints in \eqref{eq:supran}, there exists a
sequence of polar codes with an increasing block length $n$ which achieves this
rate pair with encoding and decoding complexity $\Theta(n\log n)$ and a block
error probability decaying like $O(2^{-n^{\beta}})$ for any $\beta \in (0, 1/2)$.
\end{theorem}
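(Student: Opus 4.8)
The plan is to prove Theorem~\ref{th:suppolnew} by exhibiting explicit polar coding schemes for the two corner rate pairs identified in Proposition~\ref{prop:sup}, so that by that proposition the whole region \eqref{eq:supran} is covered. The key conceptual difference from the \GAG scheme of Theorem~\ref{th:AGGsup} is that we must \emph{not} assume $p_{Y_1\mid V}\succ p_{Y_2\mid V}$; removing this degradedness constraint is what forces us to move from a single block to a chaining construction over $k$ blocks, borrowing the alignment idea from universal polar coding \cite{HRunipol,SaW13}.

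First I would set up the two ``layers'' of polarization. The cloud center $V^{1:n}$ is treated as a source to be communicated to both receivers: using the transmission primitive of Section~\ref{subsec:general DMC} with source $p_V$ and side information $Y_1^{1:n}$ (resp. $Y_2^{1:n}$), one obtains information sets $\iset_{V\mid Y_1}=\hset_V\cap\lset_{V\mid Y_1}$ and $\iset_{V\mid Y_2}=\hset_V\cap\lset_{V\mid Y_2}$ of normalized sizes $I(V;Y_1)$ and $I(V;Y_2)$. The satellite codeword $X^{1:n}$ is generated according to $p_{X\mid V}$ given the realization of $V^{1:n}$; conditioned on $V^{1:n}$, we again invoke the channel primitive with side information $Y_1^{1:n}$ to get an information set of normalized size $I(X;Y_1\mid V)$. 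The rate $R_1=I(X;Y_1\mid V)$ is then delivered on the satellite layer decoded at receiver~1, which works in a single block since receiver~1 sees $(V^{1:n},Y_1^{1:n})$; the subtlety is entirely on the $V$-layer.

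The chaining is what I would do next, and this is the main obstacle. For the first corner point $(I(X;Y_1\mid V),\min(I(V;Y_1),I(V;Y_2)))$ we must reliably convey $V^{1:n}$ to \emph{both} receivers at the common rate $\min(I(V;Y_1),I(V;Y_2))$. The positions in $\hset_V$ that are ``good'' for $Y_1$-decoding (i.e.\ in $\lset_{V\mid Y_1}$) need not coincide with those good for $Y_2$-decoding, and neither set contains the other when no degradedness holds. The fix is the standard chaining over $k$ blocks: in each block we place the genuine message bits in $\iset_{V\mid Y_1}\cap\iset_{V\mid Y_2}$ (good for both); the bits that are in $\iset_{V\mid Y_1}$ but \emph{not} in $\iset_{V\mid Y_2}$ in block $j$ are repeated in the corresponding ``$Y_2$-good but $Y_1$-bad'' slots of block $j+1$, and symmetrically the $\iset_{V\mid Y_2}\setminus\iset_{V\mid Y_1}$ bits of block $j$ are chained forward into the $Y_1$-good slots of block $j+1$. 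Because $|\iset_{V\mid Y_1}\setminus\iset_{V\mid Y_2}|$ and $|\iset_{V\mid Y_2}\setminus\iset_{V\mid Y_1}|$ differ, one also carries a vanishing-rate ``boundary'' payload seeded in the first block; a rate-loss term $O(1/k)$ appears and is driven to zero as $k\to\infty$. Each receiver decodes its $k$ blocks successively, using in block $j$ the already-decoded chained bits from block $j-1$ as known frozen values for the slots that were bad in block $j$ but good in block $j-1$. One must check that the joint distribution induced by this scheme is close in total variation to the target i.i.d.\ distribution $p_{V,X}^{\otimes n}$ per block (a standard argument via the sets $\hset$ being ``almost uniform'' and a hybrid/union-bound over the $k$ blocks), and that the per-block error probability bound $\sum_{i\in\iset}Z(U^i\mid\cdot)=O(2^{-n^\beta})$ survives the $k$-block union bound, which it does since $k$ is a constant.

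Finally, for the second corner point $(I(X;Y_1)-I(V;Y_2),I(V;Y_2))$, valid under $I(V;Y_1)<I(V;Y_2)<I(X;Y_1)$, I would note that here $V$ only needs to reach receiver~2 at rate $I(V;Y_2)$, while receiver~1 must decode \emph{both} layers — it decodes $V^{1:n}$ (at rate $I(V;Y_1)$, which is below the needed $I(V;Y_2)$, so a further chaining step is used to push the extra $I(V;Y_2)-I(V;Y_1)$ worth of $V$-bits to receiver~1 across blocks, exactly the universal-coding trick again) and then the satellite at rate $I(X;Y_1\mid V)$; the arithmetic $I(V;Y_1)+I(X;Y_1\mid V)=I(X;Y_1)$ together with the chained overhead reproduces $R_1+R_2=I(X;Y_1)$ up to $o(1)$. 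The complexity is $k\cdot\Theta(n\log n)=\Theta(n\log n)$ since $k=O(1)$, and the error exponent is preserved; wrapping these two constructions with Proposition~\ref{prop:sup} completes the proof. I expect the bulk of the technical work to be the total-variation/coupling bound showing the chained encoder's output distribution matches the intended test channel, and the careful bookkeeping of which index sets get chained into which — everything else is assembling the primitives of Section~\ref{sec:primitives}.
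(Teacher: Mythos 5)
Your overall architecture matches the paper's: reduce to the two corner points of Proposition~\ref{prop:sup}, polarize the cloud layer $V^{1:n}$ and the satellite layer $X^{1:n}$ with the primitives of Section~\ref{sec:primitives}, and repair the misalignment between $\hset_V\cap\lset_{V\mid Y_1}$ and $\hset_V\cap\lset_{V\mid Y_2}$ by chaining over $k$ blocks. Two of your execution steps, however, would fail as written. The bidirectional chaining you describe for the corner point $(I(X;Y_1\mid V),\min(I(V;Y_1),I(V;Y_2)))$ is circular: if the content of $\dset^{(1)}$ in block $j$ is copied into $\dset^{(2)}$ of block $j+1$ \emph{and} the content of $\dset^{(2)}$ in block $j$ is copied into $\dset^{(1)}$ of block $j+1$, then after block~$1$ no fresh information ever enters either set (each is a copy of a copy of the seed), so the per-block cloud rate collapses to $\tfrac{1}{n}|\iset^{(1)}_v\cap\iset^{(2)}|+o(1)$, which is strictly below $\min(I(V;Y_1),I(V;Y_2))$ whenever neither inclusion between $\lset_{V\mid Y_1}$ and $\lset_{V\mid Y_2}$ holds. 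The chaining must be one-directional ($\dset^{(1)}$ of block $j$ into a subset $\rset^{(2)}$ of $\dset^{(2)}$ of block $j+1$), which then forces one receiver to decode forward and the other backward --- not both forward, as you state.

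The more serious gap concerns the corner point $(I(X;Y_1)-I(V;Y_2),\,I(V;Y_2))$, which is the substance of the theorem. You propose to ``push the extra $I(V;Y_2)-I(V;Y_1)$ worth of $V$-bits to receiver~1 across blocks, exactly the universal-coding trick again,'' but chaining within the cloud layer cannot accomplish this: receiver~1 can reliably decode only about $nI(V;Y_1)$ positions of the cloud codeword per block no matter how indices are permuted across blocks, so over $k$ blocks it recovers roughly $knI(V;Y_1)$ cloud bits while $knI(V;Y_2)$ are required. The surplus $|\bset^{(2)}|\approx n\bigl(I(V;Y_2)-I(V;Y_1)\bigr)$ must instead be carried on the \emph{satellite} layer: the paper repeats those cloud bits into a subset $\bset^{(1)}\subseteq\iset^{(1)}$ of the first user's own information set in the previous block, and this cross-layer repetition is precisely what reduces the first user's rate from $I(X;Y_1\mid V)$ to $I(X;Y_1\mid V)-\bigl(I(V;Y_2)-I(V;Y_1)\bigr)=I(X;Y_1)-I(V;Y_2)$. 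Your rate arithmetic presupposes this loss, but your construction never pays for it; making the cross-layer chaining explicit is the essential missing step. The remaining ingredients you list (the argmax rule on the deterministic frozen set to match the target distribution, the union bound over $k=O(1)$ blocks preserving complexity and the error exponent) are handled the same way in the paper.
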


{\bf Problem Statement.}  Let $(V, X) \sim p_{V, X} = p_{V} p_{X \mid V}$.
We will show how to transmit over the $2$-user DM-BC $p_{Y_1, Y_2 \mid X}$
achieving the rate pair
\begin{equation}\label{eq:rpsup}
(R_1, R_2) = (I(X; Y_1)- I(V; Y_2), I(V; Y_2)),
\end{equation}
when $I(V;Y_1)< I(V; Y_2)< I(X; Y_1)$.
Once we have accomplished this, we will see that a slight
modification of this scheme allows to achieve, in addition, the rate pair
\begin{equation}\label{eq:rpsup2}
(R_1, R_2) = (I(X; Y_1\mid V), \min_{l \in \{1, 2\}} I(V; Y_l)).
\end{equation}
Therefore, by Proposition~\ref{prop:sup}, we can achieve the whole
region~\eqref{eq:supran} and Theorem~\ref{th:suppolnew} is proved.
Note that if polar coding achieves the rate pairs \eqref{eq:rpsup}
and \eqref{eq:rpsup2} with complexity $\Theta(n\log n)$ and a block
error probability $O(2^{-n^{\beta}})$, then for any other rate pair in
the region \eqref{eq:supran}, there exists a sequence of polar codes
with an increasing block length $n$ whose complexity and block
error probability have the same asymptotic scalings.

{\bf Design of the Scheme.} Set $U_2^{1:n} = V^{1:n} G_n$. As in
the case of the transmission over a general binary-input DMC
with $V$ in place of $X$ and $Y_l$ ($l \in \{1, 2\}$) in place of $Y$, define
the sets $\hset_{V}$, $\lset_{V}$, $\hset_{V \mid Y_l}$, and $\lset_{V \mid Y_l}$,
analogously to Section~\ref{subsec:general DMC}, as follows:
\begin{equation} \label{eq:supzero}
\begin{split}
\hset_V &= \{i \in [n] \colon Z(U_2^i \mid  U_2^{1:i-1}) \ge 1- \delta_n\}, \\
\lset_V &= \{i \in [n] \colon Z(U_2^i \mid  U_2^{1:i-1}) \le \delta_n\}, \\
\hset_{V \mid Y_l} &= \{i \in [n] \colon Z(U_2^i \mid  U_2^{1:i-1}, Y_l^{1:n}) \ge 1- \delta_n\},  \\
\lset_{V \mid Y_l} &= \{i \in [n] \colon Z(U_2^i \mid  U_2^{1:i-1}, Y_l^{1:n}) \le \delta_n\},
\end{split}
\end{equation}
which satisfy, for $l \in \{1, 2\}$,
\begin{equation} \label{eq:supfirst}
\begin{split}
\lim_{n\to \infty}\frac{1}{n} \, | \hset_{V}|  &= H(V),\\
\lim_{n\to \infty}\frac{1}{n} \, | \lset_{V}|  &= 1-H(V),\\
\lim_{n\to \infty}\frac{1}{n} \, | \hset_{V\mid Y_l}|  &= H(V \mid Y_l),\\
\lim_{n\to \infty}\frac{1}{n} \, | \lset_{V\mid Y_l}|  &= 1-H(V \mid Y_l).\\
\end{split}
\end{equation}
Set $U_1^{1:n} = X^{1:n} G_n$. By thinking of $V$ as side information on $X$
and by considering the transmission of $X$ over the memoryless channel
with output $Y_1$, define also the sets $\hset_{X\mid V}$, $\lset_{X\mid V}$,
$\hset_{X\mid V, Y_1}$, and $\lset_{X\mid V, Y_1}$, as follows:
\begin{equation}  
\begin{split}
\hset_{X \mid V} &= \{i \in [n] \colon Z(U_1^i \mid  U_1^{1:i-1}, V^{1:n}) \ge 1- \delta_n\}, \\
\lset_{X \mid V} &= \{i \in [n] \colon Z(U_1^i \mid  U_1^{1:i-1}, V^{1:n}) \le \delta_n\}, \\
\hset_{X \mid V, Y_1} &= \{i \in [n] \colon Z(U_1^i \mid  U_1^{1:i-1}, V^{1:n}, Y_1^{1:n}) \ge 1- \delta_n\},  \\
\lset_{X \mid V, Y_1} &= \{i \in [n] \colon Z(U_1^i \mid  U_1^{1:i-1}, V^{1:n}, Y_1^{1:n}) \le \delta_n\},
\end{split}
\end{equation}
which satisfy
\begin{equation} \label{eq:supcod}
\begin{split}
\lim_{n\to \infty}\frac{1}{n} \, | \hset_{X\mid V}|  &= H(X\mid V),\\
\lim_{n\to \infty}\frac{1}{n} \, | \lset_{X\mid V}|  &= 1-H(X\mid V),\\
\lim_{n\to \infty}\frac{1}{n} \, | \hset_{X\mid V,Y_1}|  &= H(X\mid V, Y_1),\\
\lim_{n\to \infty}\frac{1}{n} \, | \lset_{X\mid V, Y_1}|  &= 1-H(X\mid V, Y_1).\\
\end{split}
\end{equation}

First, consider only the point-to-point communication problem between
the transmitter and the second receiver.  As discussed in
Section~\ref{subsec:general DMC}, for this scenario, the correct choice
is to place the information bits in those positions of $U_2^{1:n}$ that
are indexed by the set $\iset^{(2)} = \hset_V \cap \lset_{V\mid
Y_2}$.  If, in addition, we restrict ourselves to positions in
$\iset^{(2)}$ which are contained in $\iset^{(1)}_v = \hset_{V}
\cap \lset_{V\mid Y_1}$, also the first receiver will be able
to decode this message. Indeed, recall that in the superposition
coding scheme, the first receiver needs to decode the message intended
for the second receiver before decoding its own message. Consequently,
for sufficiently large $n$, the first receiver knows the vector $U_2^{1:n}$
with high probability, and, hence, also the vector
$V^{1:n} = U_2^{1:n} G_n$ (recall that $G_n^{-1} = G_n$).

Now, consider the point-to-point communication problem between the
transmitter and the first receiver, given the side information
$V^{1:n}$ (following our discussion, as we let $n$ tend to infinity,
the vector $V^{1:n}$ is known to the first receiver with probability
that tends to~1). From Section~\ref{subsec:general DMC}, we know that
the information has to be placed in those positions of $U_1^{1:n}$ that
are indexed by $\iset^{(1)} = \hset_{X\mid V} \cap \lset_{X\mid V,Y_1}$.

The cardinalities of these information sets are given by
\begin{equation}\label{eq:mutuals} \begin{split}
\lim_{n\to\infty}\frac{1}{n} \, | \iset^{(2)}|  &= I(V;Y_2),\\
\lim_{n\to\infty}\frac{1}{n} \, | \iset^{(1)}_v|  &= I(V;Y_1),\\
\lim_{n\to\infty}\frac{1}{n} \, | \iset^{(1)}|  &= I(X;Y_1\mid V).\\
\end{split} \end{equation}

Let us now get back to the broadcasting scenario, and see how
the previous observations can be used to construct a polar coding
scheme. Recall that $X^{1:n}$ is transmitted over the channel,
the second receiver only decodes its intended message, but the first
receiver decodes both messages.

We start by reviewing the \GAG scheme \cite{GAG13ar}. This scheme
achieves the rate pair
\begin{equation}
(R_1, R_2) = (I(X;Y_1\mid V), I(V;Y_2)),
\end{equation}
assuming that $p_{Y_1\mid V} \succ p_{Y_2\mid V}$. Under this assumption,
we have $\lset_{V\mid Y_2} \subseteq \lset_{V\mid Y_1}$ and therefore
$\iset^{(2)} \subseteq \iset^{(1)}_v$.  Consequently, we can
in fact use the point-to-point solutions outlined above, i.e., the
second user can place his information in $\iset^{(2)}$ and decode,
and the first user will also be able to decode this message. Furthermore,
once the message intended for the second user is known by the first user, the
latter can decode his own information which is placed in the
positions of $\iset^{(1)}$.

Let us now see how to eliminate the restriction imposed by the
degradedness condition $p_{Y_1\mid V} \succ p_{Y_2\mid V}$.
Recall that we want to achieve the rate pair \eqref{eq:rpsup} when
$I(V;Y_1)< I(V; Y_2)< I(X; Y_1)$. The set of indices of the information
bits for the first user is exactly the same as before, namely the positions
of $U_1^{1:n}$ indexed by $\iset^{(1)}$. The only difficulty lies in designing
a coding scheme in which {\em both} receivers can decode the message intended
for the second user.

First of all, observe that we can use all the positions in
$\iset^{(1)}_v \cap \iset^{(2)}$, since they are decodable by both
users.  Let us define
\begin{equation}
\dset^{(2)}=\iset^{(2)} \setminus \iset^{(1)}_v.  \label{eq:dset^2}
\end{equation}
If $p_{Y_1|V} \succ p_{Y_2|V}$, as before, then $\dset^{(2)} = \emptyset$
(i.e., all the positions decodable by the second user are also decodable
by the first user). However, in the general case, where it is no longer
assumed that $p_{Y_1|V} \succ p_{Y_2|V}$, the set $\dset^{(2)}$ is
not empty and those positions cannot be decoded by the first user.

Note that there is a similar set, but with the roles of the two
users exchanged, call it $\dset^{(1)}$, namely,
\begin{equation}
\dset^{(1)} = \iset^{(1)}_v \setminus \iset^{(2)}.  \label{eq:dset^1}
\end{equation}
The set $\dset^{(1)}$ contains the positions of $U_2^{1:n}$ which
are decodable by the first user, but not by the second user. Observe
further that $|\dset^{(1)}| \leq |\dset^{(2)}|$ for sufficiently
large $n$. Indeed, since the equality
\begin{equation} \label{eq:identity for finite sets}
|A \setminus B| - |B \setminus A| = |A|-|B|
\end{equation}
holds for any two finite sets $A$ and $B$, it follows from
\eqref{eq:mutuals}--\eqref{eq:dset^2} that for sufficiently large $n$
\begin{equation}\label{eq:cardB}
\frac{1}{n} \, (|\dset^{(2)}|- |\dset^{(1)}|)
= \frac{1}{n} \, (|\iset^{(2)}|- |\iset^{(1)}_v|)
= I(V; Y_2)-I(V;Y_1)+o(1)\ge 0.
\end{equation}
Assume at first that the two sets are of equal size. The general case will
require only a small modification.

Now, the idea is to consider the ``chaining'' construction introduced
in  \cite{HRunipol} in the context of universal polar codes. Recall that
we are only interested in the message intended for the second user,
but that both receivers must be able to decode this message. Our
scheme consists in transmitting $k$ polar blocks, and in repeating
(``chaining'') some information. More precisely, in block $1$ fill
the positions indexed by $\dset^{(1)}$ with information, but set the bits
indexed by $\dset^{(2)}$ to a fixed known sequence.
In block $j$ ($j \in \{2, \cdots, k-1\}$), fill the positions indexed
by $\dset^{(1)}$ again with information, and repeat the bits which were
contained in the positions indexed by $\dset^{(1)}$ of block $j-1$ into
the positions indexed by $\dset^{(2)}$ of block $j$. In the final block
$k$, put a known sequence in the positions indexed by $\dset^{(1)}$, and
repeat in the positions indexed by $\dset^{(2)}$ the bits in the positions
indexed by $\dset^{(1)}$ of block $k-1$. The remaining bits
are frozen and, as in Section \ref{subsec:general DMC}, they are divided
into the two subsets $\fset_{\rm d}^{(2)} = \hset_{V}^{\text{c}}$ and
$\fset_{\rm r}^{(2)} = \hset_{V} \cap \lset_{V\mid Y_2}^{\text{c}}
\subset \hset_{V}$. In the first case, $U_2^i$
is approximately a deterministic function of $U_2^{1:i-1}$, while in
the second case $U_2^i$ is approximately independent of $U_2^{1:i-1}$.

Note that we lose some rate, since at the boundary we put a known
sequence into some bits which were supposed to contain information.
However, this rate loss decays like $1/k$, and by choosing
a sufficiently large $k$, one can achieve a rate that is arbitrarily
close to the intended rate.

We claim that in the above construction both users can decode all
blocks, but the first receiver has to decode ``forward'', starting
with block $1$ and ending with block $k$, whereas the second receiver
decodes ``backwards'', starting with block $k$ and ending with block
$1$.  Let us discuss this procedure in some more detail. Look at
the first user and start with block $1$. By construction, information
is only contained in the positions indexed by $\dset^{(1)}$ as well
as $\iset^{(1)}_v \cap \iset^{(2)}$, while the positions indexed
by $\dset^{(2)}$ are set to known values. Hence, the first user can
decode this block.  For block $j$ ($j \in \{2, \cdots, k-1\}$), the
situation is similar: the first user decodes the positions indexed
by $\dset^{(1)}$ and $\iset^{(1)}_v \cap \iset^{(2)}$, while the
positions in $\dset^{(2)}$ contain repeated information, which has
been already decoded in the previous block. An analogous analysis
applies to block $k$, in which the positions indexed by
$\dset^{(1)}$ are also fixed to a known sequence. The second user
proceeds exactly in the same fashion, but goes backwards.

To get to the general case, we need to discuss what happens
when $|\dset^{(1)}|< |\dset^{(2)}|$ (due to \eqref{eq:cardB},
in general $|\dset^{(1)}| \leq |\dset^{(2)}|$ for sufficiently
large $n$, but the special case where the two sets are of equal
size has been already addressed). In this case, we do not have
sufficiently many positions in $\dset^{(1)}$ to repeat all the
information contained in $\dset^{(2)}$. To get around this problem,
pick sufficiently many extra positions out of the vector $U_1^{1:n}$
indexed by $\iset^{(1)}$, and repeat the extra information there.

In order to specify this scheme, let us
introduce some notation for the various sets. Recall that we ``chain''
the positions in $\dset^{(1)}$ with an equal amount of positions
in $\dset^{(2)}$. It does not matter what subset of $\dset^{(2)}$
we pick, but call the chosen subset $\rset^{(2)}$. Now, we still
have some positions left in $\dset^{(2)}$, call them $\bset^{(2)}$.
More precisely, $\bset^{(2)}=\dset^{(2)} \setminus \rset^{(2)}$.
Since $\rset^{(2)} \subseteq \dset^{(2)}$ and
$|\rset^{(2)}| = |\dset^{(1)}|$, it follows from \eqref{eq:cardB} that
\begin{equation}\label{eq:rate loss}
\frac{1}{n} \, |\bset^{(2)}|
= \frac{1}{n} \, (|\dset^{(2)}|- |\rset^{(2)}|)
= \frac{1}{n} \, (|\dset^{(2)}|- |\dset^{(1)}|)
= I(V; Y_2)-I(V;Y_1)+o(1)\ge 0.
\end{equation} 
Let $\bset^{(1)}$ be a subset of $\iset^{(1)}$ s.t.
$|\bset^{(1)}|=|\bset^{(2)}|$. Again, it does not matter what
subset we pick. The existence of such a set $\bset^{(1)}$, for sufficiently
large $n$, is ensured by noticing that from \eqref{eq:mutuals},
\eqref{eq:rate loss} and the Markovity of the chain $V - X - Y_1$ we obtain
\begin{equation}\label{eq:used for R1}
\frac{1}{n} \, (|\iset^{(1)}|-|\bset^{(2)}|)
= I(X; Y_1 \mid V) - I(V; Y_2) + I(V; Y_1) + o(1)= I(X; Y_1) - I(V; Y_2) + o(1) \ge 0.
\end{equation}
Indeed, recall that we need to achieve the rate pair \eqref{eq:rpsup} when
$I(V;Y_1)< I(V; Y_2)< I(X; Y_1)$.

As explained above, we place in $\bset^{(1)}$ the value of those
extra bits from $\dset^{(2)}$ which will help the first user to
decode the message of the second user in the next block.
Operationally, we repeat the information contained in the positions
indexed by $\bset^{(2)}$ into the positions indexed by $\bset^{(1)}$
of the previous block. By doing this, the first user pays a rate penalty of
$I(V; Y_2)- I(V; Y_1)+o(1)$ compared to his original rate given by
$\frac{1}{n} \, |\iset^{(1)}|=I(X; Y_1 |V)+o(1)$.

To summarize, the first user puts information bits at positions
$\iset^{(1)}\setminus \bset^{(1)}$, repeats in $\bset^{(1)}$ the
information bits in $\bset^{(2)}$ for the next block, and freezes
the rest. In the last block, the information set is the whole
$\iset^{(1)}$. The frozen positions are divided into the usual two
subsets $\fset_{\rm r}^{(1)} = \hset_{X\mid  V} \cap \lset_{X\mid
V, Y_1}^{\text{c}}$ and $\fset_{\rm d}^{(1)} = \hset_{X\mid V}^{\text{c}}$, which
contain positions s.t. $U_1^i$ is or is not, respectively, approximately
independent of $(U_1^{1:i-1}, V^{1:n})$. The situation is schematically
represented in Figures~\ref{fig:sp_user1}--\ref{fig:sp_chain}.

Suppose that, by applying the same scheme with $k \rightarrow \infty$,
we let $\frac{1}{n} \, |\bset^{(2)}|$ shrink from $I(V; Y_2)-I(V; Y_1)+o(1)$ in
\eqref{eq:rate loss} to $o(1)$. Then, one obtains the whole line going from the
rate pair $(I(X; Y_1)- I(V; Y_2), I(V; Y_2))$ to
$(I(X; Y_1\mid V), I(V; Y_1))$ without time-sharing.\footnote{The reader will be able
to verify this property by relying on \eqref{eq:R2} and \eqref{eq:R1}; this
property is mentioned, however, at this stage as part of the exposition of the
polar coding scheme.}

Finally, in order to obtain the rate pair $(I(X; Y_1\mid V), I(V; Y_2))$
when $I(V;Y_2)\le I(V; Y_1)$, it suffices to consider the case where
$\bset^{(2)} = \emptyset$ and switch the roles of $\iset^{(2)}$ and
$\iset^{(1)}_v$ in the discussion concerning the second user.

\begin{figure}[tb]
\begin{center} \includegraphics[width=6.5cm]{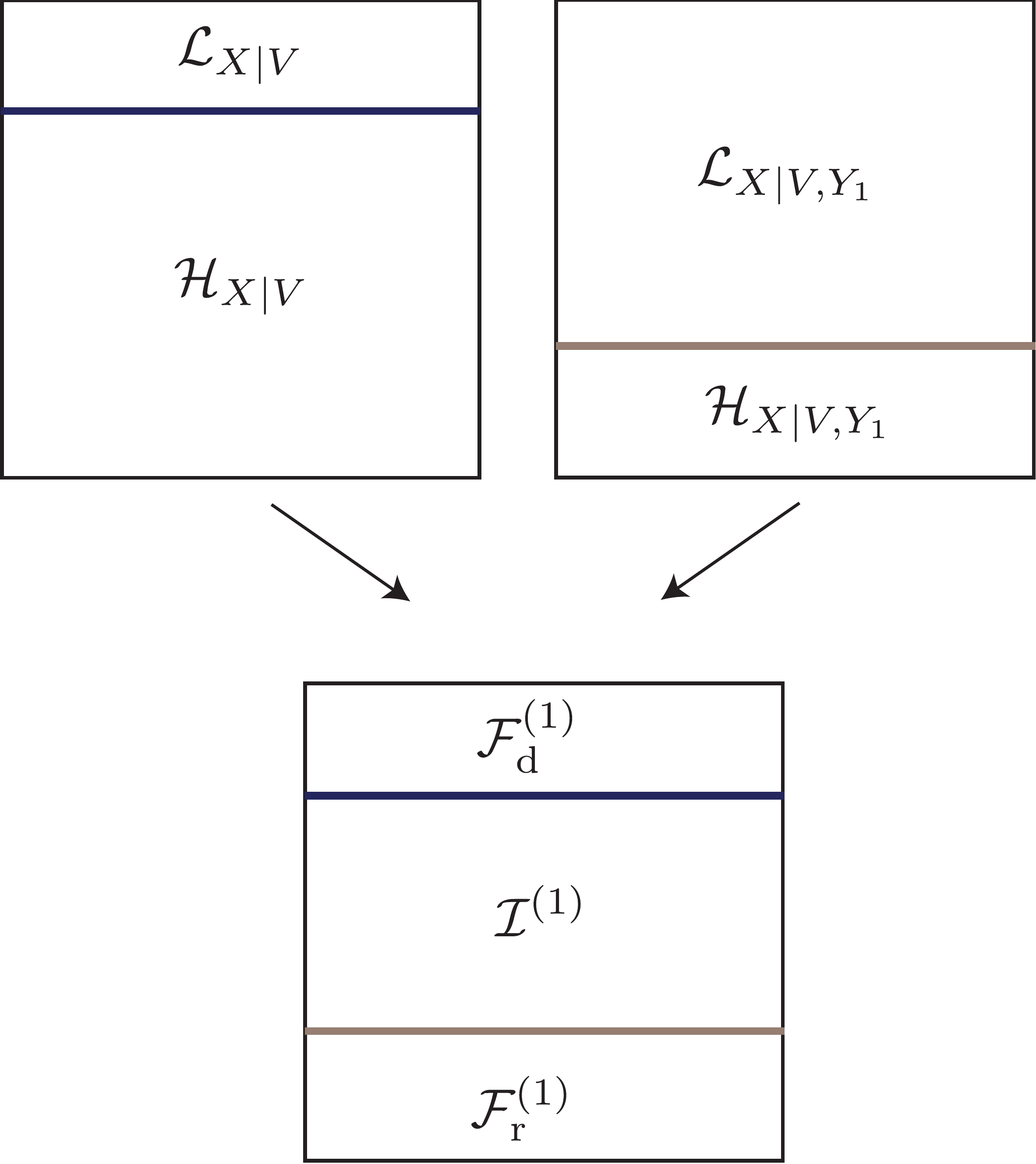}
\end{center}
\caption{Graphical representation of the sets associated to the
first user for the superposition scheme. The set $[n]$ is
partitioned into three subsets: the information indices $\iset^{(1)}$;
the frozen indices $\fset^{(1)}_{\rm r}$ filled with bits
chosen uniformly at random; the frozen indices $\fset^{(1)}_{\rm
d}$ chosen according to a deterministic rule.}\label{fig:sp_user1}
\end{figure}

\begin{figure}[tb]
\begin{center} \includegraphics[width=9.5cm]{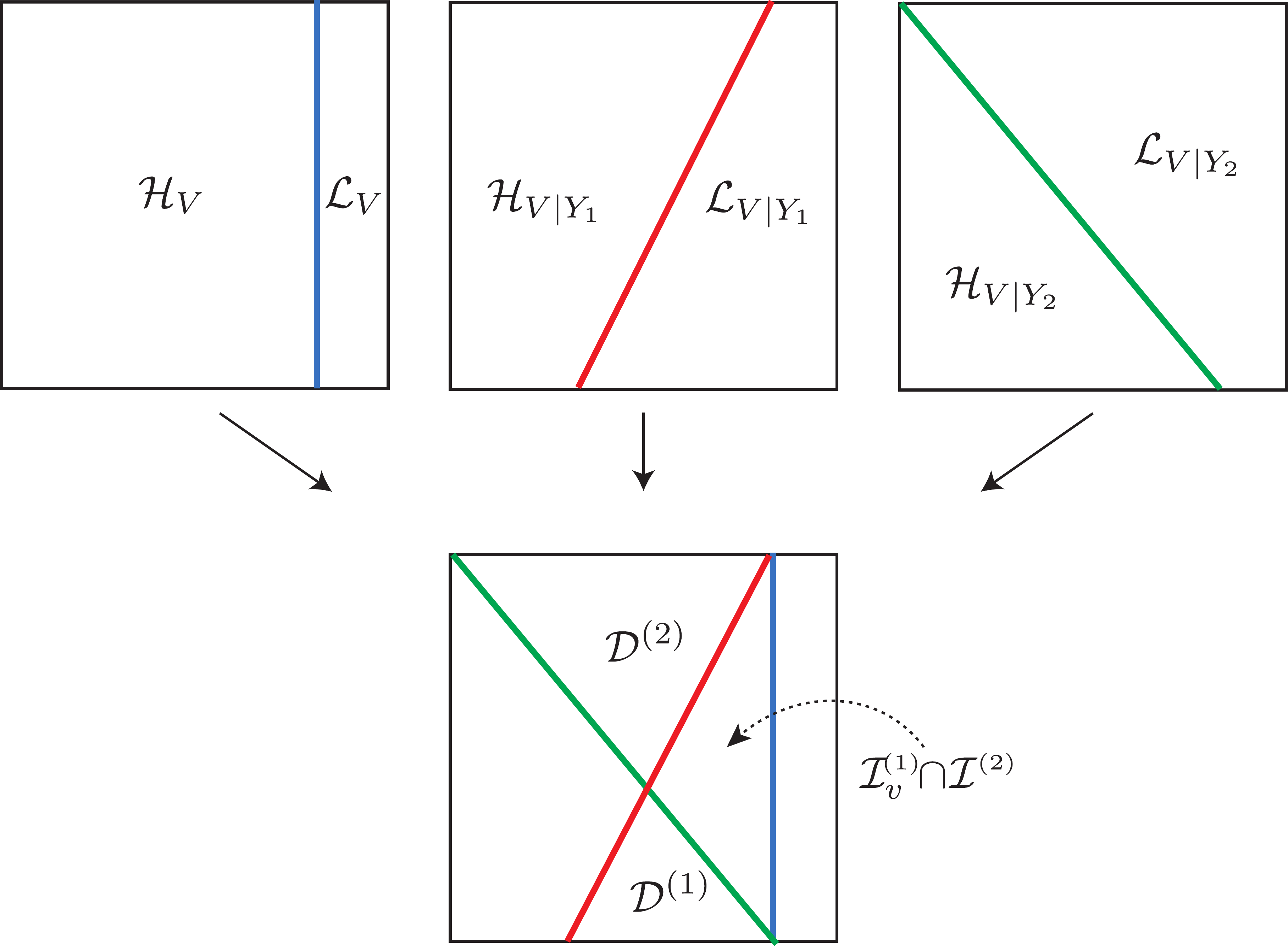}
\end{center}
\caption{Graphical representation of the sets associated to the
second user for the superposition scheme: $\iset^{(1)}_v \cap
\iset^{(2)}$ contains the indices which are decodable by both users;
$\dset^{(1)}=\iset^{(1)}_v \setminus \iset^{(2)}$ contains the indices
which are decodable by the first user, but not by the second user;
$\dset^{(2)}=\iset^{(2)} \setminus \iset^{(1)}_v$ contains the indices
which are decodable by the second user, but not by the first user.}
\label{fig:sp_user2} \end{figure}

\begin{figure}[tb]
\begin{center} \includegraphics[width=9.5cm]{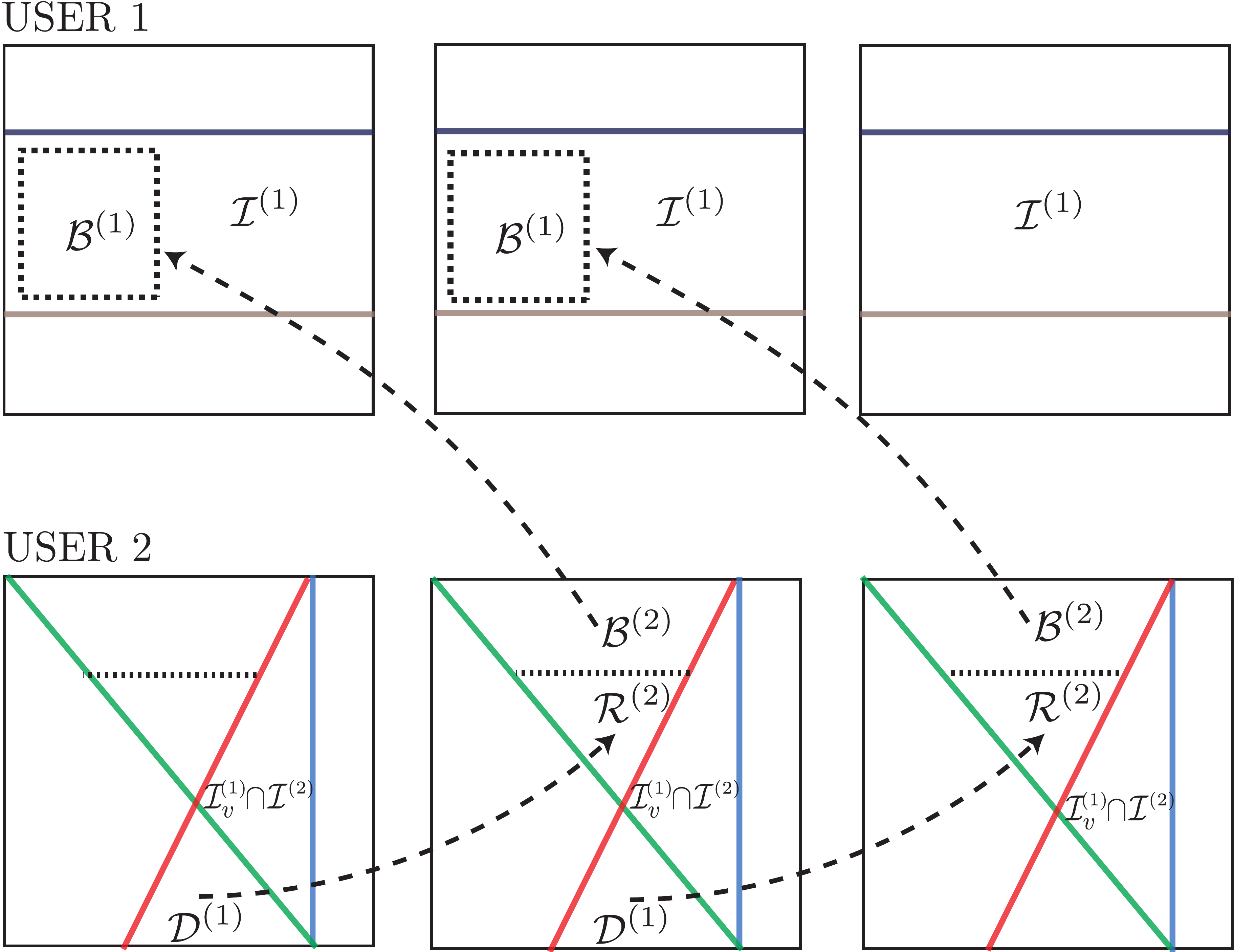}
\end{center}
\caption{Graphical representation of the repetition construction
for the superposition scheme with $k=3$: the set $\dset^{(1)}$
is repeated into the set $\rset^{(2)}$ of the following block; the
set $\bset^{(2)}$ is repeated into the set $\bset^{(1)}$ of the
previous block (belonging to a different user).}\label{fig:sp_chain}
\end{figure}

{\bf Encoding.} Let us start from the second user, and encode block by block.
\newline For block $1$:
\begin{itemize}
\item The information bits are stored in $\{u_2^i\}_{i\in\iset^{(1)}_v}$.
\item The set $\{u_2^i\}_{i\in \fset_{\rm r}^{(2)}}$ is filled with a random sequence,
shared between the transmitter and both receivers.
\item For $i\in \fset_{\rm d}^{(2)}$, we set
$u_2^i = \arg\max_{u\in \{0, 1\}} {\mathbb P}_{U_2^i\mid U_2^{1:i-1}}(u \mid  u_2^{1:i-1})$.
\end{itemize}
For block $j$ ($j \in \{2, \cdots, k-1\})$:
\begin{itemize}
\item The information bits are stored in
$\{u_2^i\}_{i\in\iset^{(1)}_v\cup \bset^{(2)}}$.
\item $\{u_2^i\}_{i\in \rset^{(2)}}$
contains the set $\{u_2^i\}_{i\in \dset^{(1)}}$ of block $j-1$.
\item The frozen sets
$\{u_2^i\}_{i\in \fset_{\rm r}^{(2)}}$ and $\{u_2^i\}_{i\in \fset_{\rm d}^{(2)}}$
are chosen as in block $1$.
\end{itemize}
For block $k$ (the last one):
\begin{itemize}
\item The information bits are stored in
$\{u_2^i\}_{i\in(\iset^{(1)}_v\cap \iset^{(2)})\cup \bset^{(2)}}$.
\item $\{u_2^i\}_{i\in \rset^{(2)}}$ contains the set $\{u_2^i\}_{i\in \dset^{(1)}}$ of block $k-1$.
\item The frozen bits are computed with the usual rules.
\end{itemize}
The rate of the second user is given by
\begin{equation} \label{eq:R2}
\begin{split}
R_2 &= \frac{1}{kn} \, \Bigl[ \bigl| \iset^{(1)}_v \bigr| +(k-2)
\bigl| \iset^{(1)}_v\cup\bset^{(2)} \bigr| + \bigl|
(\iset^{(1)}_v\cap\iset^{(2)})\cup\bset^{(2)} \bigr| \Bigr] \\[0.1cm]
&=\left(\frac{k-1}{k}\right) I(V; Y_2) + \frac{1}{kn}
\, | \iset^{(1)}_v\cap\iset^{(2)}|+o(1),
\end{split}
\end{equation}
which, as $k$ tends to infinity, approaches the required rate $I(V; Y_2)$ (the
second equality in \eqref{eq:R2} follows from \eqref{eq:mutuals} and \eqref{eq:rate loss},
and from the fact that the sets $\iset^{(1)}_v$ and $\bset^{(2)}$ are disjoint).
Then, the vector $v^{1:n} = u_2^{1:n} G_n$ is obtained.

The encoder for the first user knows $v^{1:n}$ and proceeds block by block:
\begin{itemize}
\item The information bits are stored in $\{u_1^i\}_{i\in\iset^{(1)}\setminus \bset^{(1)}}$,
except for block $k$, in which the information set is $\{u_1^i\}_{i\in\iset^{(1)}}$.
\item For block $j$ ($j\in \{1, \cdots, k-1\}$), $\{u_1^i\}_{i\in\bset^{(1)}}$
contains a copy of the set $\{u_2^i\}_{i\in\bset^{(2)}}$ in block $j+1$.
\item The frozen set $\{u_1^i\}_{i\in \fset_{\rm r}^{(1)}}$
contains a random sequence shared between the encoder and the first decoder.
\item For $i\in \fset_{\rm d}^{(1)}$, we set
$u_1^i = \arg\max_{u\in \{0, 1\}} {\mathbb P}_{U_1^i \mid U_1^{1:i-1}, V^{1:n}}(u \mid  u_1^{1:i-1}, v^{1:n})$.
\end{itemize}
The rate of the first user is given by (see \eqref{eq:mutuals} and \eqref{eq:used for R1},
and recall that $\bset^{(1)}\subset \iset^{(1)}$ s.t. $|\bset^{(1)}|=|\bset^{(2)}|$)
\begin{equation} \label{eq:R1}
\begin{split}
R_1 &= \frac{1}{kn} \, \Bigl[ (k-1)| \iset^{(1)}\setminus \bset^{(1)}|  + | \iset^{(1)}|  \Bigr] \\[0.1cm]
&= I(X; Y_1\mid V)-\frac{k-1}{k} \, \bigl(I(V; Y_2)-I(V; Y_1) \bigr)+o(1),
\end{split}
\end{equation}
which, as $k$ tends to infinity, approaches the required rate $I(X; Y_1)-I(V; Y_2)$.
Finally, the vector $x^{1:n} = u_1^{1:n}G_n$ is transmitted over the channel. The
encoding complexity per block is $\Theta(n \log n)$.

{\bf Decoding.} Let us start from the first user, which receives the channel
output $y_1^{1:n}$. The decoder acts block by block and reconstructs first
$u_2^{1:n}$, computes $v^{1:n} = u_2^{1:n}G_n$, and then decodes $u_1^{1:n}$,
thus recovering his own message.
\newline For block $1$, the decision rule is given by
\begin{equation} \label{eq:S12}
\hat{u}_2^i = \left\{ \begin{array}{ll}  u_2^i, &
\mbox{if } i \in \fset_{\rm r}^{(2)} \\
\displaystyle\arg\max_{u\in \{0, 1\}} {\mathbb P}_{U_2^i \mid
U_2^{1:i-1}}(u \mid u_2^{1:i-1}),
& \mbox{if } i \in \fset_{\rm d}^{(2)} \\
\displaystyle\arg\max_{u\in \{0, 1\}} {\mathbb P}_{U_2^i \mid
U_2^{1:i-1}, Y_1^{1:n}}(u \mid u_2^{1:i-1}, y_1^{1:n}),
& \mbox{if } i \in \iset^{(1)}_v \\ \end{array}\right.,
\end{equation}
and
\begin{equation}\label{eq:S1}
\hat{u}_1^i = \left\{ \begin{array}{ll} u_1^i,
& \mbox{if } i \in \fset_{\rm r}^{(1)} \\
\displaystyle\arg\max_{u\in \{0, 1\}}
{\mathbb P}_{U_1^i \mid  U_1^{1:i-1}, V^{1:n}}(u \mid u_1^{1:i-1}, v^{1:n}),
& \mbox{if } i \in \fset_{\rm d}^{(1)} \\
\displaystyle\arg\max_{u\in \{0, 1\}}
{\mathbb P}_{U_1^i \mid  U_1^{1:i-1}, V^{1:n}, Y_1^{1:n}}(u
\mid u_1^{1:i-1}, v^{1:n}, y_1^{1:n}),
 & \mbox{if } i \in \iset^{(1)} \end{array}\right..
\end{equation}
For block $j$ ($j\in \{2, \cdots, k-1\}$):
\begin{itemize}
\item $\{\hat{u}_2^i\}_{i\in \bset^{(2)}}$
is deduced from $\{\hat{u}_1^i\}_{i\in \bset^{(1)}}$ of block $j-1$.
\item $\{\hat{u}_2^i\}_{i\in \rset^{(2)}}$ is deduced from $\{\hat{u}_2^i\}_{i\in \dset^{(1)}}$
of block $j-1$.
\item For the remaining positions of $\hat{u}_2^i$, the decoding
follows the rule in \eqref{eq:S12}.
\item The decoding of $\hat{u}_1^i$ proceeds as in
\eqref{eq:S1}.
\end{itemize}
This decoding rule works also for block~$k$, with the only difference
that the frozen set $\fset_{\rm r}^{(2)}$ is bigger, and
$\hat{u}_2^i=\arg\max_{u\in \{0, 1\}} {\mathbb P}_{U_2^i \mid U_2^{1:i-1},
Y_1^{1:n}}(u \mid u_2^{1:i-1}, y_1^{1:n})$ only for $i\in \iset^{(1)}_v\cap \iset^{(2)}$.

Let us consider now the second user, which reconstructs $u_2^{1:n}$ from the
channel output $y_2^{1:n}$. As explained before, the decoding goes ``backwards'',
starting from block $k$ and ending with block $1$.
\newline For block $k$, the decision rule is given by
\begin{equation} \label{eq:S2}
\hat{u}_2^i = \left\{ \begin{array}{ll} u_2^i,
& \mbox{if } i \in \fset_{\rm r}^{(2)} \\
\displaystyle\arg\max_{u\in \{0, 1\}}
{\mathbb P}_{U_2^i \mid  U_2^{1:i-1}}(u \mid u_2^{1:i-1}),
& \mbox{if } i \in \fset_{\rm d}^{(2)} \\
\displaystyle\arg\max_{u\in \{0, 1\}}
{\mathbb P}_{U_2^i \mid  U_2^{1:i-1}, Y_2^{1:n}}(u \mid u_2^{1:i-1},
y_2^{1:n}), &\mbox{if } i \in (\iset^{(1)}\cap \iset^{(2)})\cup
\rset^{(2)}\cup \bset^{(2)} \\ \end{array}\right..
\end{equation}
For block $j$ ($j\in \{2, \cdots, k-1\}$), the decoder recovers
$\{u_2^i\}_{i\in \dset^{(1)}}$ from $\{u_2^i\}_{i\in \rset^{(2)}}$
of block $j+1$; for the remaining positions, the decision rule in
\eqref{eq:S2} is used.
\newline For block $1$, the reasoning is the same, except that the information bits are
$\{u_2^i\}_{i\in \iset^{(1)}_v\cap \iset^{(2)}}$, i.e., the information
set is smaller. The complexity per block, under successive cancellation decoding, is $\Theta(n\log n)$.

{\bf Performance.} The block error probability $P_{\rm e}^{(l)}$ for
the $l$-th user ($l\in\{1, 2\}$) can be upper bounded by
\begin{equation}
\begin{split}
P_{\rm e}^{(1)} &\le k \sum_{i \in \iset^{(1)}_v} Z(U_2^i
\mid U_2^{1:i-1}, Y_1^{1:n})
+ k \sum_{i \in \iset^{(1)}} Z(U_1^i \mid U_1^{1:i-1},
Y_1^{1:n}) = O(2^{-n^{\beta}}),\\[0.1cm]
P_{\rm e}^{(2)} &\le k \sum_{i \in \iset^{(2)}} Z(U_2^i
\mid U_2^{1:i-1}, Y_2^{1:n}) = O(2^{-n^{\beta}}), \quad \forall\,\beta \in (0, 1/2).
\end{split}
\end{equation}


\section{Polar Codes for Binning Region}\label{sec:binning}

The following theorem provides our main result regarding the
achievability of the binning region for DM-BCs
with polar codes (compare with Theorem~\ref{th:binran}).

\begin{theorem}[Polar Codes for Binning Region] \label{th:polbinnew}
Consider a $2$-user DM-BC $p_{Y_1, Y_2\mid X}$, where $X$ denotes the
input to the channel taking values on an arbitrary set $\mathcal{X}$,
and $Y_1$, $Y_2$ denote the outputs at the first and second receiver,
respectively.
Let $V_1$ and $V_2$ denote auxiliary binary random variables. Then, for
any joint distribution $p_{V_1, V_2}$, for any deterministic function
$\phi \colon \{0, 1\}^2 \rightarrow \mathcal{X}$ s.t.
$X = \phi(V_1, V_2)$, and for any rate pair $(R_1, R_2)$ satisfying
the constraints \eqref{eq:marrand}, there exists a sequence of polar
codes with an increasing block length $n$ which achieves this rate pair
with encoding and decoding complexity $\Theta(n \log n)$ and a block
error probability decaying like $O(2^{-n^{\beta}})$ for any
$\beta \in (0, 1/2)$.  \end{theorem}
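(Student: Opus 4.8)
The plan is to mirror the structure of the superposition proof in Section~\ref{sec:superposition}: reduce the claim to a single corner point, realize the encoder as two nested polar primitives, and lift the AGG degradedness hypothesis $p_{Y_2\mid V_2}\succ p_{V_1\mid V_2}$ of Theorem~\ref{th:AGGbin} by a chaining construction over $k$ blocks. By Proposition~\ref{prop:bin} it is enough to exhibit, under the assumption $I(V_1;V_2)\le I(V_2;Y_2)$, a sequence of codes of increasing block length achieving $(R_1,R_2)=(I(V_1;Y_1),\,I(V_2;Y_2)-I(V_1;V_2))$. Following the remark after Proposition~\ref{prop:bin}, the first encoder produces $v_1^{1:n}$ by the channel-coding primitive of Section~\ref{subsec:general DMC} applied to the channel $V_1\to Y_1$ (information placed in $\hset_{V_1}\cap\lset_{V_1\mid Y_1}$), so that user~1 decodes $v_1^{1:n}$ from $y_1^{1:n}$ at rate $\to I(V_1;Y_1)$ while the law of $V_1^{1:n}$ stays within $O(2^{-n^{\beta}})$ in total variation of i.i.d.\ $p_{V_1}$; the second encoder then produces $v_2^{1:n}$ with a Gelfand--Pinsker-type polar code whose ``state'' is $v_1^{1:n}$. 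Setting $U_2^{1:n}=V_2^{1:n}G_n$, I would introduce, as in Section~\ref{subsec:general DMC} but with the extra conditioning variable, the sets $\hset_{V_2\mid V_1}$, $\lset_{V_2\mid V_1}$ and $\lset_{V_2\mid Y_2}$, whose normalized cardinalities converge to $H(V_2\mid V_1)$, $1-H(V_2\mid V_1)$ and $1-H(V_2\mid Y_2)$, respectively.

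A coordinate $i$ of $U_2^{1:n}$ is \emph{free} for the second encoder (which is what keeps the induced conditional law of $V_2^{1:n}$ given $V_1^{1:n}$ close to $p_{V_2\mid V_1}^{\otimes n}$) when $i\in\hset_{V_2\mid V_1}$, and \emph{decodable} by user~2 from $y_2^{1:n}$ when $i\in\lset_{V_2\mid Y_2}$. Three of the four resulting classes are handled as in the point-to-point case: $\hset_{V_2\mid V_1}\cap\lset_{V_2\mid Y_2}$ carries the message of user~2; $\lset_{V_2\mid V_1}\cap\lset_{V_2\mid Y_2}$ is filled by the encoder with the conditional-MAP rule given $(U_2^{1:i-1},v_1^{1:n})$ and then recovered by user~2 from $y_2^{1:n}$; and $\hset_{V_2\mid V_1}\cap\lset_{V_2\mid Y_2}^{\text{c}}$ is filled with randomness shared between the two ends. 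The obstruction is the set $\mathcal P=\lset_{V_2\mid V_1}\cap\lset_{V_2\mid Y_2}^{\text{c}}$: the encoder \emph{must} assign these coordinates by the conditional-MAP rule, for otherwise the joint law of $(V_1^{1:n},V_2^{1:n})$, and hence the law of $X^{1:n}=\phi(V_1^{1:n},V_2^{1:n})$, drifts away from the product law so that the code designed for $p_{Y_2\mid V_2}$ no longer applies; yet those values cannot be recovered by user~2 from $y_2^{1:n}$ alone. When $p_{Y_2\mid V_2}\succ p_{V_1\mid V_2}$ one has $\lset_{V_2\mid V_1}\subseteq\lset_{V_2\mid Y_2}$, so $\mathcal P=\emptyset$ and one is back to the AGG scheme.

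To remove the hypothesis I would chain over $k$ blocks. Combining \eqref{eq:card}--\eqref{eq:cardsideinfo} with the set identity \eqref{eq:identity for finite sets} gives $\tfrac1n\,|\hset_{V_2\mid V_1}\cap\lset_{V_2\mid Y_2}| = \bigl(I(V_2;Y_2)-I(V_1;V_2)\bigr) + \tfrac1n\,|\mathcal P| + o(1)$, so the ``good'' set has room for a reserved subset $\rset$ with $|\rset|=|\mathcal P|$ on top of an information part of rate $\to I(V_2;Y_2)-I(V_1;V_2)$. In block~$j$ the reserved subset $\rset$ carries a verbatim copy of the $\mathcal P$-coordinates of block~$j+1$; user~1 decodes each block independently, whereas user~2 decodes the blocks in increasing order, using the already-recovered $\rset$ of block~$j-1$ to supply the $\mathcal P$-coordinates it needs to finish block~$j$. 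Only the first block lacks a predecessor, so its $\mathcal P$-coordinates (still set by the conditional-MAP rule, to keep the law correct) must be delivered to user~2 by an auxiliary means --- for instance a short extra transmission phase, or the boundary device used in the superposition scheme and in the universal polar code constructions \cite{HRunipol} --- which, together with the reserved part of the last block, costs a rate $O(|\mathcal P|/(kn))=O(1/k)$ that vanishes as $k\to\infty$. The error analysis then follows Section~\ref{sec:superposition}: the total-variation distance between the law of $(V_1^{1:n},V_2^{1:n},Y_1^{1:n},Y_2^{1:n})$ produced by the scheme and the ideal i.i.d.\ law is $O(2^{-n^{\beta}})$ (it is assembled from the two primitives, whose mismatches are bounded by Bhattacharyya parameters), and under the ideal law a union bound over the $k$ blocks and over the information and conditional-MAP coordinates bounds each $P_{\rm e}^{(l)}$ by $k\cdot O(2^{-n^{\beta}})=O(2^{-n^{\beta}})$ for every $\beta\in(0,1/2)$, with per-block encoding and decoding complexity $\Theta(n\log n)$. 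I expect the crux to be exactly this coupling --- that the Gelfand--Pinsker distribution constraint pins down the $\lset_{V_2\mid V_1}$ coordinates while user~2's successive-cancellation decoder needs every non-information coordinate to be known to it --- together with the subsidiary nuisance of making the first block consistent, since there the tension cannot be resolved internally.
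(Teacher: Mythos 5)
Your proposal follows essentially the same route as the paper: reduce to the corner point via Proposition~\ref{prop:bin}, encode $V_1$ with the point-to-point primitive and $V_2$ with a Gelfand--Pinsker-style primitive conditioned on $v_1^{1:n}$, identify the critical positions $\lset_{V_2\mid V_1}\cap\lset_{V_2\mid Y_2}^{\text{c}}$ (the paper's $\fset^{(2)}_{\rm cr}$, up to an $o(n)$ set of unpolarized indices), and resolve them by chaining a reserved subset of the information set across $k$ blocks, with the same rate bookkeeping via the set identity \eqref{eq:identity for finite sets}. Your chaining direction (forward decoding with backward encoding) is the variant the paper itself describes, and your boundary fix for block~1 differs only cosmetically from the paper's device of making the first user's boundary block a known sequence so that user~2 can compute $v_1^{1:n}$ there; both cost $O(1/k)$ in rate.
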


{\bf Problem Statement.} Let $(V_1, V_2) \sim p_{V_1, V_2} =
p_{V_1}p_{V_2\mid  V_1}$, and let $X$ be a deterministic function
$\phi$ of $(V_1, V_2)$. The aim is to transmit over the $2$-user
DM-BC $p_{Y_1, Y_2 \mid X}$ achieving the rate pair
\begin{equation}\label{eq:rpmar}
(R_1, R_2)=(I(V_1; Y_1), I(V_2; Y_2)-I(V_1; V_2)),
\end{equation}
assuming that $I(V_1; V_2) < I(V_2; Y_2)$.
Consequently, by Proposition \ref{prop:bin}, we can achieve the whole region
\eqref{eq:marrand} and Theorem~\ref{th:polbinnew} is proved.
Note that if polar coding achieves the rate pair \eqref{eq:rpmar} with
complexity $\Theta(n\log n)$ and a block error probability $O(2^{-n^{\beta}})$,
then for any other rate
pair in the region \eqref{eq:marrand}, there exists a sequence of polar codes
with an increasing block length $n$ whose complexity and block error probability have
the same asymptotic scalings.

{\bf Design of the Scheme.} Set $U_1^{1:n} = V_1^{1:n}G_n$ and $U_2^{1:n} = V_2^{1:n}G_n$.
As in the case of the transmission over a DMC with $V_l$  in place of $X$
and $Y_l$ in place of $Y$ ($l\in \{1, 2\}$), define the sets $\hset_{V_l}$, $\lset_{V_l}$,
$\hset_{V_l\mid Y_l}$, and $\lset_{V_l\mid Y_l}$ for $l\in\{1, 2\}$,
similarly to \eqref{eq:supzero} (except of replacing $U_2$ with $U_l$ and $V$ with $V_l$),
which satisfy
\begin{equation}
\begin{split}  \label{eq:asymptotic limits}
\lim_{n\to \infty}\frac{1}{n} \, | \hset_{V_l}| &= H(V_l),\\
\lim_{n\to \infty}\frac{1}{n} \, | \lset_{V_l}| &= 1 - H(V_l),\\
\lim_{n\to \infty}\frac{1}{n} \, | \hset_{V_l\mid Y_l}| &= H(V_l\mid Y_l),\\
\lim_{n\to \infty}\frac{1}{n} \, | \lset_{V_l\mid Y_l}| &= 1 - H(V_l\mid Y_l).\\
\end{split}
\end{equation}
By thinking of $V_1$ as a side information for $V_2$, we can further define the sets
$\hset_{V_2\mid V_1}$ and $\lset_{V_2\mid V_1}$, which satisfy
\begin{equation} \label{eq:misc}
\begin{split}
\lim_{n\to \infty}\frac{1}{n} \, | \hset_{V_2\mid V_1}|  &= H(V_2\mid V_1),\\
\lim_{n\to \infty}\frac{1}{n} \, | \lset_{V_2\mid V_1}|  &= 1-H(V_2\mid V_1).\\
\end{split}
\end{equation}

First, consider only the point-to-point communication problem between the transmitter
and the first receiver. As discussed in Section~\ref{subsec:general DMC}, for this scenario,
the correct choice is to place the information in those positions of $U_1^{1:n}$ that
are indexed by the set $\iset^{(1)} = \hset_{V_1} \cap \lset_{V_1\mid Y_1}$, which satisfies
\begin{equation} \label{eq:normalized size of I1}
\lim_{n\to\infty}\frac{1}{n} \, | \iset^{(1)}|  = I(V_1;Y_1).\\
\end{equation}
For the point-to-point communication problem between the transmitter and the second receiver,
we know from Section~\ref{subsec:general DMC} that the information has to be placed in those
positions of $U_2^{1:n}$ that are indexed by $\hset_{V_2} \cap \lset_{V_2\mid Y_2}$.

Let us get back to the broadcasting scenario and note that, unlike superposition coding,
for binning the first user does not decode the message intended for the second user.
Consider the following scheme. The first user adopts the point-to-point communication
strategy: it ignores the existence of the second user, and it uses $\iset^{(1)}$ as an
information set. The frozen positions are divided into the two usual subsets
$\fset_{\rm d}^{(1)} = \hset_{V_1}^{\text{c}}$ and
$\fset_{\rm r}^{(1)} = \hset_{V_1} \cap \lset_{V_1\mid Y_1}^{\text{c}}$, which contain positions
s.t., respectively, $U_1^i$  can or cannot be approximately inferred from $U_1^{1:i-1}$.
On the other hand, the second user does not ignore the existence of the first user by putting
his information in $\hset_{V_2} \cap \lset_{V_2\mid Y_2}$. Indeed, $V_1$ and $V_2$ are, in
general, correlated. Hence, the second user puts his information in
$\iset^{(2)}=\hset_{V_2\mid V_1}\cap \lset_{V_2\mid Y_2}$. If $i\in \iset^{(2)}$ then,
since $\iset^{(2)} \subseteq \hset_{V_2\mid V_1}$, the bit $U_2^i$ is approximately independent
of $(U_2^{1:i-1}, V_1^{1:n})$. This implies that $U_2^i$ is suitable to contain information.
Furthermore, since $i\in \lset_{V_2\mid Y_2}$, the bit $U_2^i$ is approximately a deterministic
function of $(U_2^{1:i-1}, Y_2^{1:n})$. This implies that it is also decodable given the channel
output $Y_2^{1:n}$. The remaining positions need to be frozen and can be divided into four
subsets:
\begin{itemize}
\item For $i \in \fset^{(2)}_{\rm r} = \hset_{V_2\mid V_1} \cap \lset_{V_2\mid Y_2}^{\text{c}}$,
$U_2^i$ is chosen uniformly at random, and this randomness is shared between the transmitter
and the second receiver.
\item For $i\in\fset^{(2)}_{\rm d}=\lset_{V_2}$, $U_2^i$ is approximately a deterministic function
of $U_2^{1:i-1}$ and, therefore, its value can be deduced from the past.
\item For $i\in\fset^{(2)}_{\rm out}=\hset_{V_2\mid V_1}^{\text{c}}\cap \lset_{V_2}^{\text{c}} \cap
\lset_{V_2\mid Y_2} $, $U_2^i$ is approximately a deterministic function of
$(U_2^{1:i-1}, V_1^{1:n})$, but it can be deduced also from the channel output $Y_2^{1:n}$.
\item For $i\in \fset^{(2)}_{\rm cr}=\hset_{V_2\mid V_1}^{\text{c}} \cap
\lset_{V_2}^{\text{c}}\cap \lset_{V_2\mid Y_2}^{\text{c}} = \hset_{V_2\mid V_1}^{\text{c}}
\cap \lset_{V_2\mid Y_2}^{\text{c}}$, $U_2^i$ is approximately a deterministic function of
$(U_2^{1:i-1}, V_1^{1:n})$, but it cannot be deduced neither from $U_2^{1:i-1}$ nor from
$Y_2^{1:n}$.
\end{itemize}
The positions belonging to the last set are critical, since, in order to decode them, the receiver
needs to know $V_1^{1:n}$. Indeed, recall that the encoding operation is performed jointly
by the two users, while the first and the second decoder act separately and cannot exchange
any information. The situation is schematically represented in Figure~\ref{fig:marton_sets}.

\begin{figure}[tb]
\begin{center} \includegraphics[width=9.5cm]{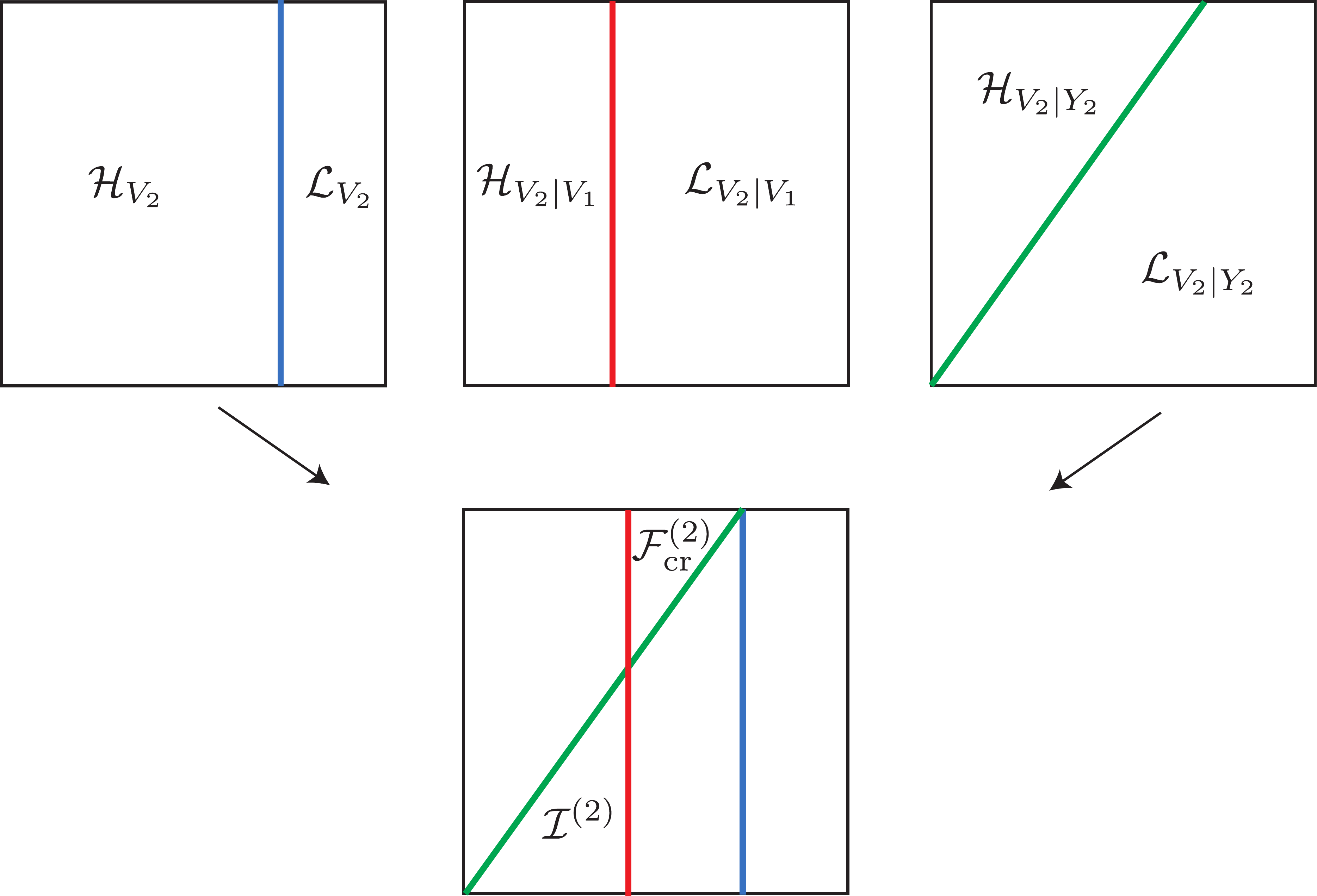}
\end{center}
\caption{Graphical representation of the sets associated to the second user for the
binning scheme: $\iset^{(2)}$ contains the information bits; $\fset^{(2)}_{\rm cr}$ contains
the frozen positions which are critical in the sense that they cannot be inferred neither
from the past $U_2^{1:i-1}$ nor from the channel output $Y_2^{1:n}$.}\label{fig:marton_sets}
\end{figure}

We start by reviewing the \GAG scheme \cite{GAG13ar}. This scheme achieves the rate pair
in \eqref{eq:rpmar}, assuming that the degradedness relation
$p_{Y_2\mid V_2} \succ p_{V_1\mid V_2}$ holds. Note that, under this assumption, we have
$\lset_{V_2\mid V_1} \subseteq \lset_{V_2\mid Y_2}$. Therefore, $\fset^{(2)}_{\rm cr}
\subseteq \lset_{V_2\mid V_1}^{\text{c}}\cap \hset_{V_2\mid V_1}^{\text{c}}$.
Since $| \lset_{V_2\mid V_1}^{\text{c}}\cap \hset_{V_2\mid V_1}^{\text{c}}| = o(n)$,
it is assumed in \cite{GAG13ar} that the bits indexed by
$ \lset_{V_2\mid V_1}^{\text{c}}\cap \hset_{V_2\mid V_1}^{\text{c}}$ are ``genie-given''
from the encoder to the second decoder. The price to be paid for the transmission of these
extra bits is asymptotically negligible. Consequently, the first user places his information
in $\iset^{(1)}$, the second user places his information in $\iset^{(2)}$, and the bits
in the positions belonging to $ \lset_{V_2\mid V_1}^{\text{c}}\cap \hset_{V_2\mid V_1}^{\text{c}}$ are
pre-communicated to the second receiver.

Our goal is to achieve the rate pair \eqref{eq:rpmar} without the degradedness condition
$p_{Y_2\mid V_2} \succ p_{V_1\mid V_2}$. As in the superposition coding scheme, the idea
consists in transmitting $k$ polar blocks and in repeating (``chaining'') some bits from
one block to the following block. To do so, let $\rset$ be a subset of $\iset^{(2)}$ s.t.
$| \rset|  = | \fset^{(2)}_{\rm cr}|$. As usual, it does not matter what subset we pick.
Since the second user cannot reconstruct the bits at the critical positions $\fset^{(2)}_{\rm cr}$,
we use the set $\rset$ to store the critical bits of the previous block. This construction
is schematically represented in Figure~\ref{fig:marton_rep}.

\begin{figure}[tb]
\begin{center} \includegraphics[width=9.5cm]{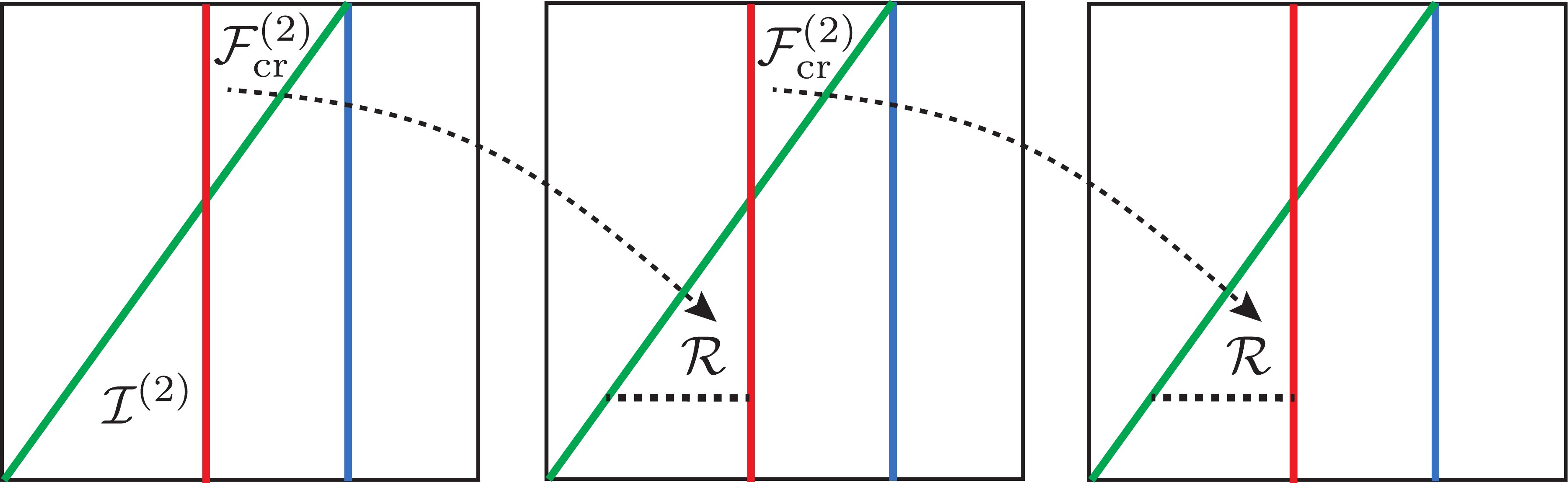}
\end{center}
\caption{Graphical representation of the repetition construction for the binning scheme with $k=3$: the set
$\fset^{(2)}_{\rm cr}$ is repeated into the set $\rset$ of the following block.}\label{fig:marton_rep}
\end{figure}

Let us explain the scheme with some detail. For block $1$, we adopt the point-to-point communication
strategy: the first user puts his information in $\iset^{(1)}$, and the second user in $\iset^{(2)}$.
For block $j$ ($j\in\{2, \cdots, k-1\}$), the first user places again his information in $\iset^{(1)}$.
The second user puts information in the positions indexed by $\iset^{(2)}\setminus \rset$ and repeats
in $\rset$ the bits which were contained in the set $\fset^{(2)}_{\rm cr}$ of block $j-1$. For block
$k$, the second user does not change his strategy, putting information in $\iset^{(2)}\setminus \rset$
and repeating in $\rset$ the bits which were contained in the set $\fset^{(2)}_{\rm cr}$ of block $k-1$.
On the other hand, in the last block, the first user does not convey any information and puts in $\iset^{(1)}$
a fixed sequence which is shared between the encoder and both decoders. Indeed, for block $k$, the positions
indexed by $\fset^{(2)}_{\rm cr}$ are not repeated anywhere. Consequently, the only way in
which the second decoder can reconstruct the bits in $\fset^{(2)}_{\rm cr}$ consists in knowing a priori
the value of $V_1^{1:n}$.

Note that with this scheme, the second user has to decode ``backwards'', starting with block $k$ and ending
with block $1$. In fact, for block $k$, the second user can compute $V_1^{1:n}$ and, therefore, the critical
positions indexed by $\fset^{(2)}_{\rm cr}$ are no longer a problem. Then, for block $j$ ($j \in \{2, \cdots,
k-1\}$), the second user knows the values of the bits in $\fset^{(2)}_{\rm cr}$ from the decoding of the set
$\rset$ of block $j+1$.

Suppose now that the second user wants to decode ``forward'', i.e., starting with block $1$ and
ending with block $k$. Then, the set
$\rset$ is used to store the critical bits of the following block (instead of those ones of the previous
block). In particular, for block $k$, we adopt the point-to-point communication
strategy. For block $j$ ($j\in\{k-1, \cdots, 2\}$), the first user places his information in $\iset^{(1)}$, the second user places his information in the positions indexed by $\iset^{(2)}\setminus \rset$ and repeats
in $\rset$ the bits which were contained in the set $\fset^{(2)}_{\rm cr}$ of block $j+1$. For block
$1$, the second user does not change his strategy, and the first user puts in $\iset^{(1)}$
a shared fixed sequence. Note that in this case the encoding needs to be performed ``backwards''.

{\bf Encoding.} Let us start from the first user.
\newline For block $j$ ($j\in \{1, \cdots, k-1\}$):
\begin{itemize}
\item The information bits are stored in $\{u_1^i\}_{i \in \iset^{(1)}}$.
\item The set $\{u_1^i\}_{i \in \fset_{\rm r}^{(1)}}$ is filled
with a random sequence, which is shared between the transmitter and the first receiver.
\item For $i \in \fset_{\rm d}^{(1)}$, we set
$u_1^i = \arg\max_{u\in \{0, 1\}} {\mathbb P}(U_1^i=u \mid  U_1^{1:i-1}=u_1^{1:i-1})$.
\end{itemize}
For block $k$:
\begin{itemize}
\item The user conveys no information, and $\{u_1^i\}_{i \in \iset^{(1)}}$ contains a fixed sequence
known to the second decoder.
\item The frozen bits are chosen according to the usual rules with the only difference
that the sequence $\{u_1^i\}_{i \in \fset_{\rm r}^{(1)}}$ is shared also with the second decoder.
\end{itemize}
The rate of communication of the first user is given by (see \eqref{eq:normalized size of I1})
\begin{equation}
R_1 = \left(\frac{k-1}{kn}\right) |\iset^{(1)}| = \left(\frac{k-1}{k}\right) I(V_1; Y_1)+o(1),
\end{equation}
where, by choosing a large value of $k$, the rate $R_1$ approaches $I(V_1; Y_1)$. Then, the vector
$v_1^{1:n} = u_1^{1:n} G_n$ is obtained.

Let us now move to the second user.
\newline For block $1$:
\begin{itemize}
\item The information bits are stored in $\{u_2^i\}_{i \in \iset^{(2)}}$.
\item For $i \in \fset^{(2)}_{\rm r}$,
$u_2^i$ is chosen uniformly at random, and its value is supposed to be known to the second decoder.
\item For $i \in \fset^{(2)}_{\rm d}$, $u_2^i$ is set to
$\arg\max_{u\in \{0, 1\}} {\mathbb P}_{U_2^i\mid U_2^{1:i-1}}(u \mid  u_2^{1:i-1})$
\item For $i \in \fset^{(2)}_{\rm out}\cup \fset^{(2)}_{\rm cr}$, $u_2^i$ is set to $\arg\max_{u\in \{0, 1\}}
{\mathbb P}_{U_2^i\mid U_2^{1:i-1}, V_1^{1:n}}(u \mid  u_2^{1:i-1}, v_1^{1:n})$.
\end{itemize}
Observe that the encoder has an access to $v_1^{1:n}$ and, therefore, it can compute
the probabilities above.
\newline For block $j$ ($j\in \{2, \cdots, k\}$):
\begin{itemize}
\item The information bits are placed into $\{u_2^i\}_{i \in \iset^{(2)}\setminus \rset}$.
\item The set $\{u_2^i\}_{i \in \rset}$
contains a copy of the set $\{u_2^i\}_{i \in \fset^{(2)}_{\rm cr}}$ of block $j-1$.
\item The frozen bits are chosen as in block $1$.
\end{itemize}
In order to compute the rate achievable by the second user, first observe that
\begin{equation} \label{eq:rate2}
\begin{split}
\frac{1}{n} \, (|\iset^{(2)}| -| \rset| )
&\stackrel{(\text{a})}{=} \frac{1}{n} \, \Bigl(| \hset_{V_2\mid V_1}\cap
\lset_{V_2\mid Y_2}| -| \hset_{V_2\mid V_1}^{\text{c}}
\cap \lset_{V_2}^{\text{c}}\cap \lset_{V_2\mid Y_2}^{\text{c}}| \Bigr) \\
&\stackrel{(\text{b})}{=} \frac{1}{n} \, \Bigl( | (\hset_{V_2}\cap \lset_{V_2\mid Y_2})
\setminus (\hset_{V_2}\cap \hset_{V_2\mid V_1}^{\text{c}})|
- | (\hset_{V_2}\cap \hset_{V_2\mid V_1}^{\text{c}})
\setminus (\hset_{V_2}\cap \lset_{V_2\mid Y_2})| \Bigr) +o(1)\\
&\stackrel{(\text{c})}{=} \frac{1}{n} \, \Bigl(| \hset_{V_2}\cap \lset_{V_2\mid Y_2}|
- |\hset_{V_2}\cap \hset_{V_2\mid V_1}^{\text{c}}| \Bigr)+o(1)\\
&\stackrel{(\text{d})}{=} \frac{1}{n} \, \Bigl( | \hset_{V_2}\cap \lset_{V_2\mid Y_2}|
- | \hset_{V_2}\cap \lset_{V_2\mid V_1}| \Bigr)+o(1) \\
&\stackrel{(\text{e})}{=} \frac{1}{n} \, \Bigl( | \lset_{V_2 \mid Y_2}\setminus \lset_{V_2}|
- | \lset_{V_2 \mid V_1} \setminus \lset_{V_2}| \Bigr) + o(1) \\
&\stackrel{(\text{f})}{=} I(V_2;Y_2)-I(V_1;V_2)+o(1),
\end{split}
\end{equation}
where equality~(a) holds since $|\rset| = |\fset^{(2)}_{\rm cr}|$, equality~(b)
follows from $\hset_{V_2 \mid V_1} \subseteq \hset_{V_2}$ and
$|[n]\setminus (\hset_{V_2}\cup \lset_{V_2})|=o(n)$, equality~(c) follows from
the identity in \eqref{eq:identity for finite sets} for arbitrary finite sets,
equality~(d) holds since $|[n]\setminus (\hset_{V_2 \mid V_1}\cup \lset_{V_2 \mid V_1})|=o(n)$,
equality~(e) holds since $|[n]\setminus (\hset_{V_2}\cup \lset_{V_2})|=o(n)$,
and equality~(f) follows from the second and fourth equalities in \eqref{eq:asymptotic limits},
as well as from the second equality in \eqref{eq:misc}. Consequently,
\begin{equation}
R_2 = \frac{1}{nk}| \rset|+I(V_2;Y_2)-I(V_1;V_2)+o(1),
\end{equation}
which, as $k$ tends to infinity, approaches the required rate. Then, the vector
$v_2^{1:n} = u_2^{1:n} G_n$ is obtained and, finally, the vector
$x^{1:n} = \phi(v_1^{1:n}, v_2^{1:n})$ is transmitted over the channel.
The encoding complexity per block is $\Theta(n \log n)$.

{\bf Decoding.} Let us start from the first user, which reconstructs $u_1^{1:n}$ from the channel output
$y_1^{1:n}$. For each block, the decision rule is given by
\begin{equation}
\hat{u}_1^i = \left\{ \begin{array}{ll} u_1^i, &
\mbox{if } i \in \fset_{\rm r}^{(1)} \\
\displaystyle\arg\max_{u\in \{0, 1\}} {\mathbb P}_{U_1^i \mid
U_1^{1:i-1}}(u \mid u_1^{1:i-1}),
& \mbox{if } i \in \fset_{\rm d}^{(1)} \\
\displaystyle\arg\max_{u\in \{0, 1\}} {\mathbb P}_{U_1^i \mid
U_1^{1:i-1}, Y_1^{1:n}}(u \mid u_1^{1:i-1}, y_1^{1:n}),
& \mbox{if } i \in \iset^{(1)} \\ \end{array}\right. .
\end{equation}

The second user reconstructs $u_2^{1:n}$ from the channel output $y_2^{1:n}$. As explained before,
the decoding goes ``backwards'', starting from block $k$ and ending with block $1$. For block $k$,
the second decoder knows $v_1^{1:n}$. Hence, the decision rule is given by
\begin{equation} \label{eq:M2}
\hat{u}_2^i = \left\{ \begin{array}{ll}u_2^i &  \mbox{if }
i \in \fset_{\rm r}^{(2)} \\
\displaystyle\arg\max_{u\in \{0, 1\}} {\mathbb P}_{U_2^i \mid
U_2^{1:i-1}}(u \mid u_2^{1:i-1}),& \mbox{if } i \in \fset_{\rm d}^{(2)} \\
\displaystyle\arg\max_{u\in \{0, 1\}} {\mathbb P}_{U_2^i \mid
U_2^{1:i-1}, V_1^{1:n}}(u \mid u_2^{1:i-1}, v_1^{1:n}),&
 \mbox{if } i \in \fset_{\rm out}^{(2)}\cup \fset_{\rm cr}^{(2)} \\
\displaystyle\arg\max_{u\in \{0, 1\}} {\mathbb P}_{U_2^i \mid
U_2^{1:i-1}, Y_2^{1:n}}(u \mid u_2^{1:i-1}, y_2^{1:n}),
& \mbox{if } i \in \iset^{(2)} \\ \end{array}\right. .
\end{equation}
For block $j$ ($j\in\{2, \cdots, k\}$), the decision rule is the same as \eqref{eq:M2} for
$i\not \in \fset_{\rm out}^{(2)}\cup \fset_{\rm cr}^{(2)}$. Indeed,
$\{u_2^i\}_{i \in \fset^{(2)}_{\rm cr}}$ of block $j$ can be deduced from
$\{u_2^i\}_{i \in \rset}$ of block $j+1$, and, for $i \in \fset_{\rm out}^{(2)}$, we set
$\hat{u}_2^i=\arg\max_{u\in \{0, 1\}} {\mathbb P}_{U_2^i \mid  U_2^{1:i-1},
Y_2^{1:n}}(u \mid u_2^{1:i-1}, y_2^{1:n})$. The complexity per
block, under successive cancellation decoding, is $\Theta(n \log n)$.

{\bf Performance.} The block error probability $P_{\rm e}^{(l)}$ for the
$l$-th user ($l\in\{1, 2\}$) can be upper bounded by
\begin{equation}
\begin{split}
P_{\rm e}^{(1)} &\le k \sum_{i \in \iset^{(1)}} Z(U_1^i \mid U_1^{1:i-1},
Y_1^{1:n}) = O(2^{-n^{\beta}}),\\
P_{\rm e}^{(2)} &\le k \sum_{i \in \lset_{V_2\mid Y_2}} Z(U_2^i \mid
U_2^{1:i-1}, Y_2^{1:n}) = O(2^{-n^{\beta}}), \quad \forall\, \beta \in (0, 1/2). \\
\end{split}
\end{equation}


\section{Polar Codes for Marton's Region}\label{sec:combreg}

\subsection{Only Private Messages}\label{subsec:private}

Consider first the case where only private messages are available. The following theorem provides our main result regarding the achievability with polar codes
of Marton's region, which forms the tightest inner
bound known to date for a $2$-user DM-BC without common information (compare with
Theorem~\ref{th:bestreg}).

\begin{theorem}[Polar Codes for Marton's Region] \label{th:polar_Marton_new}
Consider a $2$-user DM-BC $p_{Y_1, Y_2\mid X}$, where $X$ denotes the input
to the channel, taking values on an arbitrary set $\mathcal{X}$,
and $Y_1$, $Y_2$ denote the outputs at the first and second receiver,
respectively. Let $V$, $V_1$, and $V_2$ denote auxiliary binary random variables.
Then, for any joint distribution $p_{V, V_1, V_2}$, for any deterministic
function $\phi: \{0, 1\}^3 \rightarrow \mathcal{X}$
s.t. $X = \phi(V, V_1, V_2)$, and for any rate pair $(R_1, R_2)$ satisfying
the constraints \eqref{eq:best}, there exists a sequence of polar codes with
an increasing block length $n$, which achieves this rate pair with encoding and
decoding complexity $\Theta(n \log n)$ and a block error probability decaying
like $O(2^{-n^{\beta}})$ for any $\beta \in (0, 1/2)$.  \end{theorem}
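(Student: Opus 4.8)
The plan is to reduce the theorem to two corner points and then \emph{stack} the superposition construction of Section~\ref{sec:superposition} on top of the binning construction of Section~\ref{sec:binning}. By Proposition~\ref{prop:combreg}, and since the region \eqref{eq:best} is invariant under exchanging the two users, it suffices to assume $I(V;Y_1)\le I(V;Y_2)$ and to exhibit, for each of the two rate pairs in \eqref{eq:bestcorner}, a sequence of polar codes with complexity $\Theta(n\log n)$ and block error probability $O(2^{-n^\beta})$; when one or both of the inequalities in \eqref{ineq:prop} fail the region degenerates to a right trapezoid or a triangle whose corners are point-to-point rates, handled exactly as in Propositions~\ref{prop:sup} and~\ref{prop:bin}. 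In the combined scheme the auxiliary $V$ is the ``cloud centre'' that both receivers decode, while \emph{conditionally on $V^{1:n}$} the pair $(V_1,V_2)$ is transmitted by binning.

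Concretely, I would set $U_0^{1:n}=V^{1:n}G_n$, $U_1^{1:n}=V_1^{1:n}G_n$, $U_2^{1:n}=V_2^{1:n}G_n$, and introduce, in complete analogy with \eqref{eq:supzero}--\eqref{eq:supcod} and \eqref{eq:asymptotic limits}--\eqref{eq:misc}, the sets $\hset_V,\lset_V,\hset_{V\mid Y_l},\lset_{V\mid Y_l}$ for the cloud centre, the conditional sets $\hset_{V_1\mid V},\lset_{V_1\mid V},\hset_{V_1\mid V,Y_1},\lset_{V_1\mid V,Y_1}$ and $\hset_{V_2\mid V},\lset_{V_2\mid V},\hset_{V_2\mid V,Y_2},\lset_{V_2\mid V,Y_2}$ for the two satellite layers, and the correlation sets $\hset_{V_2\mid V,V_1},\lset_{V_2\mid V,V_1}$ (resp.\ $\hset_{V_1\mid V,V_2},\lset_{V_1\mid V,V_2}$) needed for binning $V_2$ against $V_1$ (resp.\ $V_1$ against $V_2$). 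I would then transmit $k$ consecutive polar blocks. For the $V$-layer I would run the chaining of Section~\ref{sec:superposition} between $\iset_V^{(1)}=\hset_V\cap\lset_{V\mid Y_1}$ and $\iset_V^{(2)}=\hset_V\cap\lset_{V\mid Y_2}$, so that both receivers recover $V^{1:n}$ in every block (receiver~$1$ forward, receiver~$2$ backward): for the first corner point $V$ carries only $\approx n\,I(V;Y_1)$ bits and the surplus positions of $\iset_V^{(2)}$ lie in receiver~$1$'s shared-randomness set and cost nothing; for the second corner point $V$ carries $\approx n\,I(V;Y_2)$ bits and the $\approx n\,(I(V;Y_2)-I(V;Y_1))$ positions that receiver~$1$ cannot decode from $Y_1^{1:n}$ are repeated, in the $\bset^{(1)}$-fashion of \eqref{eq:R1}, into spare information positions of receiver~$1$'s $V_1$-layer in the neighbouring block. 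Conditionally on the decoded $V^{1:n}$, I would apply the binning construction of Section~\ref{sec:binning} to $(V_1,V_2)$---binning $V_2$ against $V_1$ for the first corner and $V_1$ against $V_2$ for the second---placing the penalty $I(V_1;V_2\mid V)$ on receiver~$2$, resp.\ receiver~$1$, and using a \emph{second} chaining to carry the critical positions $\fset^{(2)}_{\rm cr}$ across blocks.

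The rate accounting is then the conditional-on-$V$ analogue of \eqref{eq:R1}--\eqref{eq:R2} and \eqref{eq:rate2}: the $V$-layer contributes $I(V;Y_1)$ to $R_1$ at the first corner and $I(V;Y_2)$ to $R_2$ at the second; the $V_1$-layer contributes $I(V_1;Y_1\mid V)$, reduced by $I(V_1;V_2\mid V)$ and by $I(V;Y_2)-I(V;Y_1)$ at the second corner; the $V_2$-layer contributes $I(V_2;Y_2\mid V)$, reduced by $I(V_1;V_2\mid V)$ at the first corner; all boundary and genie losses are $O(1/k)+o(1)$, so taking $n\to\infty$ and then $k\to\infty$ reproduces the two points of \eqref{eq:bestcorner}. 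The block error probability for each receiver is bounded by a union bound over the $k$ blocks of Bhattacharyya sums over its information sets, each $O(2^{-n^\beta})$ exactly as in Sections~\ref{sec:superposition} and~\ref{sec:binning}, and each block is encoded and decoded with complexity $\Theta(n\log n)$.

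The hard part is to show that the two chaining constructions coexist without a circular dependency at the decoders, and to pin down a consistent decoding schedule. The key observation is that within every block receiver~$1$ must first reconstruct $U_0^{1:n}$ (hence $V^{1:n}$)---using, at the second corner, the $\bset^{(1)}$-bits of the neighbouring block, which by construction belong to receiver~$1$'s own already-decoded $V_1$-layer---and only afterwards reconstruct $U_1^{1:n}$; receiver~$2$ does the symmetric thing for $V$ and $V_2$. One has to check that the ``forward'' direction that the $V$-layer chaining forces on the receiver bearing the overflow is compatible with the direction that the binning chaining forces on that same receiver (this is why, at the second corner, I would use the forward variant of the binning chaining of Section~\ref{sec:binning} for the $V_1$-layer), and one has to handle the generic case of unequal chained-set sizes by the same small modification---spilling the surplus into an information set---used in the two constituent schemes. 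Once the schedule is fixed, correctness of each individual decoding step is immediate from the defining properties of the $\hset$/$\lset$ sets, just as in Sections~\ref{sec:superposition} and~\ref{sec:binning}.
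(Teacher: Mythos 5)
Your proposal is correct and follows essentially the same route as the paper: the reduction via Proposition~\ref{prop:combreg}, the three-layer structure $U_0^{1:n}=V^{1:n}G_n$, $U_1^{1:n}=V_1^{1:n}G_n$, $U_2^{1:n}=V_2^{1:n}G_n$ with the cloud decoded by both receivers, the three chaining constructions (superposition alignment on the $V$-layer, spilling the overflow $\bset^{(2)}$ into the satellite layer of the neighbouring block, and chaining the binning-critical set), and the identical rate accounting and union-bound error analysis. The only cosmetic differences are that you handle the first corner point of \eqref{eq:bestcorner} directly instead of by swapping the users' roles, and you make explicit the forward/backward schedule-compatibility check that the paper resolves implicitly by fixing the encoding and decoding directions of each layer.
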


The proposed coding scheme is a combination of the techniques described in detail
in Sections~\ref{sec:superposition} and \ref{sec:binning}, and it is outlined below.

{\bf Problem Statement.}
Let $(V, V_1, V_2)\sim p_V p_{V_2\mid V}p_{V_1\mid V_2 V}$, and let $X$ be a deterministic
function of $(V, V_1, V_2)$, i.e., $X = \phi(V, V_1, V_2)$. Consider the $2$-user DM-BC
$p_{Y_1, Y_2\mid X}$ s.t. $I(V; Y_1)\le I(V; Y_2)$. The aim is to  achieve the rate pair
\begin{equation}\label{eq:genpair}
(R_1, R_2) = (I(V, V_1; Y_1)-I(V_1; V_2\mid V)-I(V; Y_2), I(V, V_2; Y_2) ).
\end{equation}
 Once we have accomplished this, we will see that a slight modification of this scheme
 allows us to achieve, in addition, the rate pair
 \begin{equation}\label{eq:genpair2}
 (R_1, R_2) = (I(V, V_1; Y_1), I(V_2; Y_2\mid V)-I(V_1; V_2\mid V) ).
 \end{equation}
Therefore, by Proposition \ref{prop:combreg}, we can achieve the whole rate region in
\eqref{eq:best} by using polar codes.
Note that if polar coding achieves the rate pairs \eqref{eq:genpair} and
\eqref{eq:genpair2} with complexity $\Theta(n\log n)$ and a block error probability
$O(2^{-n^{\beta}})$,
then  for any other rate
pair in the region \eqref{eq:best}, there exists a sequence of polar codes
with an increasing block length $n$ whose complexity and block error probability have
the same asymptotic scalings.

{\bf Sketch of the Scheme.}
Set $U_0^{1:n} = V^{1:n} G_n$, $U_1^{1:n} = V_1^{1:n} G_n$, and $U_2^{1:n} = V_2^{1:n} G_n$.
Then, the idea is that $U_1^{1:n}$ carries the message of the first user, while $U_0^{1:n}$
and $U_2^{1:n}$ carry the message of the second user. The second user will be able to decode
only his message, namely, $U_0^{1:n}$ and $U_2^{1:n}$. On the other hand, the first user will
decode both his message, namely, $U_1^{1:n}$, and a part of the message of the second user,
namely, $U_0^{1:n}$. In a nutshell, the random variable $V$ comes from the superposition coding
scheme, because $U_0^{1:n}$ is decodable by both users, but carries information meant only for
one of them. The random variables $V_1$ and $V_2$ come from the binning scheme, since the first
user decodes $U_1^{1:n}$ and the second user decodes $U_2^{1:n}$, i.e., each user decodes only
his own information.

Let the sets $\hset_{V}$, $\lset_{V}$,
$\hset_{V \mid Y_l}$, and $\lset_{V \mid Y_l}$ for $l\in \{1, 2\}$ be defined as in \eqref{eq:supzero},
where these subsets of $[n]$ satisfy \eqref{eq:supfirst}. In analogy to Sections~\ref{sec:superposition}
and \ref{sec:binning} let us also define the following sets $(l \in \{1, 2\})$:
\begin{equation} \label{eq:martonzero}
\vspace*{-0.2cm}
\begin{split}
\hset_{V_l \mid V} &= \{i \in [n] \colon Z(U_l^i \mid  U_l^{1:i-1}, U_0^{1:n}) \ge 1- \delta_n\}, \\
\lset_{V_l \mid V} &= \{i \in [n] \colon Z(U_l^i \mid  U_l^{1:i-1}, U_0^{1:n}) \le \delta_n\}, \\
\hset_{V_l \mid V, Y_l} &= \{i \in [n] \colon Z(U_l^i \mid  U_l^{1:i-1}, U_0^{1:n}, Y_l^{1:n}) \ge 1- \delta_n\},  \\
\lset_{V_l \mid V, Y_l} &= \{i \in [n] \colon Z(U_l^i \mid  U_l^{1:i-1}, U_0^{1:n}, Y_l^{1:n}) \le \delta_n\}, \\
\hset_{V_1 \mid V, V_2} &= \{i \in [n] \colon Z(U_1^i \mid  U_1^{1:i-1}, U_0^{1:n}, U_2^{1:n}) \ge 1- \delta_n\},  \\
\lset_{V_1 \mid V, V_2} &= \{i \in [n] \colon Z(U_1^i \mid  U_1^{1:i-1}, U_0^{1:n}, U_2^{1:n}) \le \delta_n\},
\end{split}
\end{equation}
which satisfy
\begin{eqnarray}  \label{eq:asymptotic limits - marton}
\begin{split}
&\lim_{n\to \infty}\frac{1}{n} \, | \hset_{V_l \mid V}| = H(V_l \mid V), \qquad \qquad
\lim_{n\to \infty}\frac{1}{n} \, | \lset_{V_l \mid V}| = 1 - H(V_l \mid V),\\
&\lim_{n\to \infty}\frac{1}{n} \, | \hset_{V_l \mid V, Y_l}| = H(V_l \mid V, Y_l), \qquad
\lim_{n\to \infty}\frac{1}{n} \, | \lset_{V_l \mid V, Y_l}| = 1 - H(V_l \mid V, Y_l),\\
&\lim_{n\to \infty}\frac{1}{n} \, | \hset_{V_1 \mid V, V_2}| = H(V_1 \mid V, V_2), \quad \;
\lim_{n\to \infty}\frac{1}{n} \, | \lset_{V_1 \mid V, V_2}| = 1 - H(V_1 \mid V, V_2).
\end{split}
\end{eqnarray}

First, consider the subsets of positions of $U_0^{1:n}$.  The set $\iset^{(2)}_{\rm sup} = \hset_V
\cap \lset_{V\mid Y_2}$ contains the positions which are decodable by the second user, and the
set $\iset^{(1)}_{v} = \hset_V \cap \lset_{V\mid Y_1}$ contains the positions which are decodable
by the first user. Recall that $U_0^{1:n}$ needs to be decoded by both users, but contains
information only for the second user.

Second, consider the subsets of positions of $U_2^{1:n}$.
The set $\iset^{(2)}_{\rm bin} = \hset_{V_2\mid V} \cap \lset_{V_2\mid V, Y_2}$ contains the positions
which are decodable by the second user. Recall that $U_2^{1:n}$ needs to be decoded only by the second
user, and it contains part of his message.

Third, consider the subsets of positions of $U_1^{1:n}$.
The set $\iset^{(1)} = \hset_{V_1\mid V} \cap \lset_{V\mid Y_2}$ contains the
positions which are decodable by the first user. Recall that $U_1^{1:n}$ needs to be decoded by the
first user, and it contains only his message. However, the first user cannot decode $U_2^{1:n}$ and,
therefore, this user cannot infer $V_2^{1:n}$. Consequently, the positions in the set $\fset_{\rm cr}^{(1)} =
\hset_{V_1\mid V, V_2}^c\cap \lset_{V_1\mid V}^c \cap \lset_{V_1\mid V, Y_1}^c$ are critical. Indeed,
for $i\in \fset_{\rm cr}^{(1)}$, the bit $U_1^i$ is approximately a deterministic function of
$(U_1^{1:i-1}, U_0^{1:n}, U_2^{1:n})$, but it cannot be deduced from $(U_1^{1:i-1}, U_0^{1:n}, Y_1^{1:n})$.

In order to achieve the rate pair \eqref{eq:genpair}, $k$ polar blocks are transmitted, and three
different ``chaining'' constructions are used. The first and the second chaining come from superposition
coding, and the last one comes from binning.

\begin{figure}[p]
\centering
\subfigure[Subsets of $U_0^{1:n}$.]
{\includegraphics[width=9.5cm]{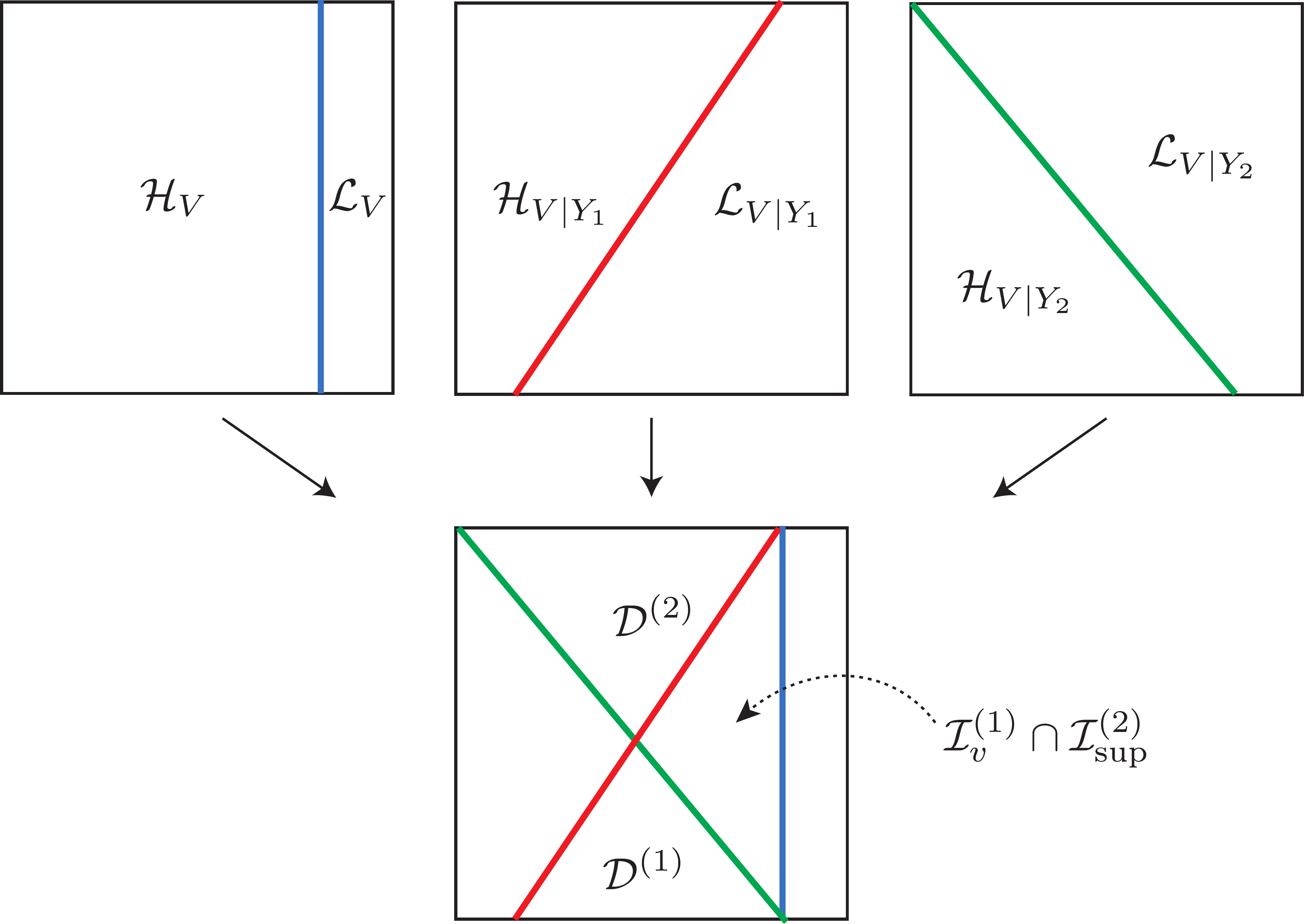}}\\
\subfigure[Subsets of $U_2^{1:n}$.]
{\includegraphics[width=6.5cm]{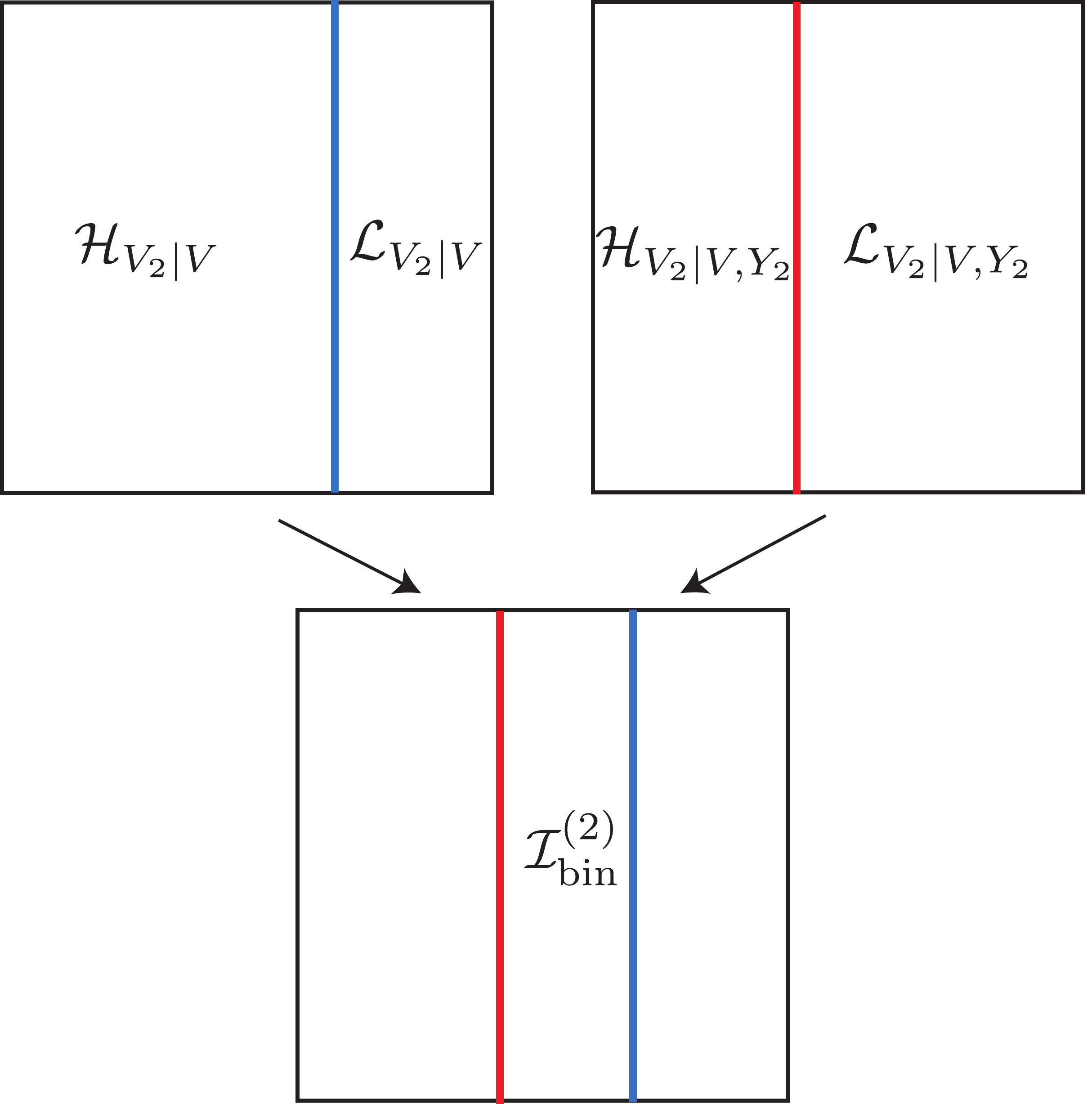}}\\
\subfigure[Subsets of $U_1^{1:n}$.]
{\includegraphics[width=9.5cm]{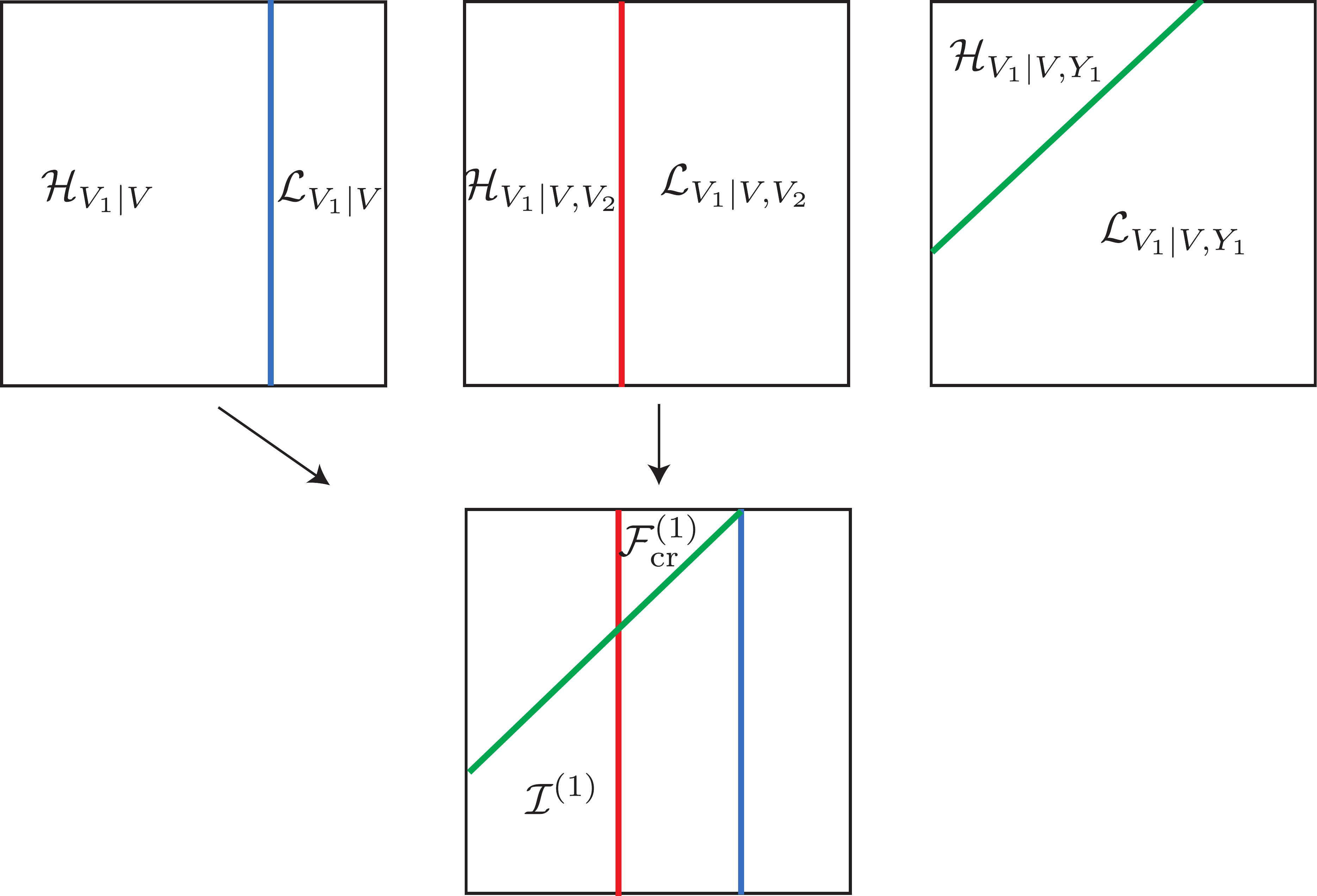}}\\
\caption{Graphical representation of the sets associated to the three auxiliary random variables in the scheme which achieves Marton's region with only private messages \eqref{eq:best}.}\label{fig:gen1}
\end{figure}

First, define $\dset^{(2)} = \iset^{(2)}_{\rm sup}\setminus \iset^{(1)}_v$ and $\dset^{(1)} = \iset^{(1)}_v
\setminus \iset^{(2)}_{\rm sup}$, as in \eqref{eq:dset^2} and \eqref{eq:dset^1}, respectively. The former
set contains the positions of $U_0^{1:n}$ which are decodable by the second user but not by the first, while
the latter contains the positions of $U_0^{1:n}$ which are decodable by the first user but not by the second.
Let $\rset_{\rm sup}$ be a subset of $\dset^{(2)}$ s.t. $|\rset_{\rm sup}| = |\dset^{(1)}|$. In block $1$, fill
$\dset^{(1)}$ with information for the second user, and set the bits indexed by $\dset^{(2)}$ to a fixed known
sequence. In block $j$ ($j \in \{2, \cdots, k-1\}$), fill $\dset^{(1)}$ again with information for the second
user, and repeat the bits which were contained in the set $\dset^{(1)}$ of block $j-1$ into the
positions indexed by $\rset_{\rm sup}$ of block $j$. In the final block $k$,
put a known sequence in the positions indexed by $\dset^{(1)}$, and
repeat in the positions indexed by $\rset_{\rm sup}$ the bits which were contained
in the set $\dset^{(1)}$ of block $k-1$. In all the blocks, fill $\iset^{(1)}_v\cap \iset^{(2)}_{\rm sup}$ with
information for the second user. In this way, both users will be able to decode a fraction of the bits
of $U_0^{1:n}$ that is roughly equal to $I(V; Y_1)$. The bits in these positions contain information
for the second user.

Second, define $\bset^{(2)} = \dset^{(2)}\setminus \rset_{\rm sup}$, and let $\bset^{(1)}$ be a subset of
$\iset^{(1)}$ s.t. $|\bset^{(1)}|=|\bset^{(2)}|$. Note that $\bset^{(2)}$ contains positions of $U_0^{1:n}$,
and $\bset^{(1)}$ contains positions of $U_1^{1:n}$. For block $j$ ($j\in \{2, \cdots, k\}$), we fill
$\bset^{(2)}$ with information for the second user, and we repeat these bits into the positions indexed by
$\bset^{(1)}$ of block $j-1$. In this way, both users will be able to decode a fraction of the bits
of $U_0^{1:n}$ that is roughly equal to $I(V; Y_2)$ (recall that $I(V; Y_1)\le I(V; Y_2)$).
Again, the bits in these positions contain information for the second user.

Third, let $\rset_{\rm bin}$ be a subset of $\iset^{(1)}$ s.t. $|\rset_{\rm bin}|=|\fset_{\rm cr}^{(1)}|$.
Since the first user cannot reconstruct the bits at the critical positions $\fset_{\rm cr}^{(1)}$, we use
the set $\rset_{\rm bin}$ to store the critical bits of the following block. For block $k$, the first user
places all his information in $\iset^{(1)}$. For block $j$ ($j\in \{1, \cdots, k-1\}$), the first user
places all his information in $\iset^{(1)}\setminus (\rset_{\rm bin}\cup \bset^{(1)})$, repeats in
$\rset_{\rm bin}$ the bits in $\fset_{\rm cr}^{(1)}$ for block $j+1$, and repeats in $\bset^{(1)}$ the bits
in $\bset^{(2)}$ for block $j+1$. The second user puts part of his information in $\iset^{(2)}_{\rm bin}$
(which is a subset of the positions of $U^{1:n}_2$) for all the blocks except for the first, in which
$\iset^{(2)}_{\rm bin}$ contains a fixed sequence which is shared between the encoder and both decoders.
Indeed, for block~1, the positions indexed by $\fset_{\rm cr}^{(1)}$ are not repeated anywhere, and the
only way in which the second decoder can reconstruct those bits consists in knowing a-priori the value of
$V_2^{1:n}$. The situation is schematically represented in Figures~\ref{fig:gen1} and \ref{fig:gen2}.

\begin{figure}[t]
\centering
\includegraphics[width=11cm]{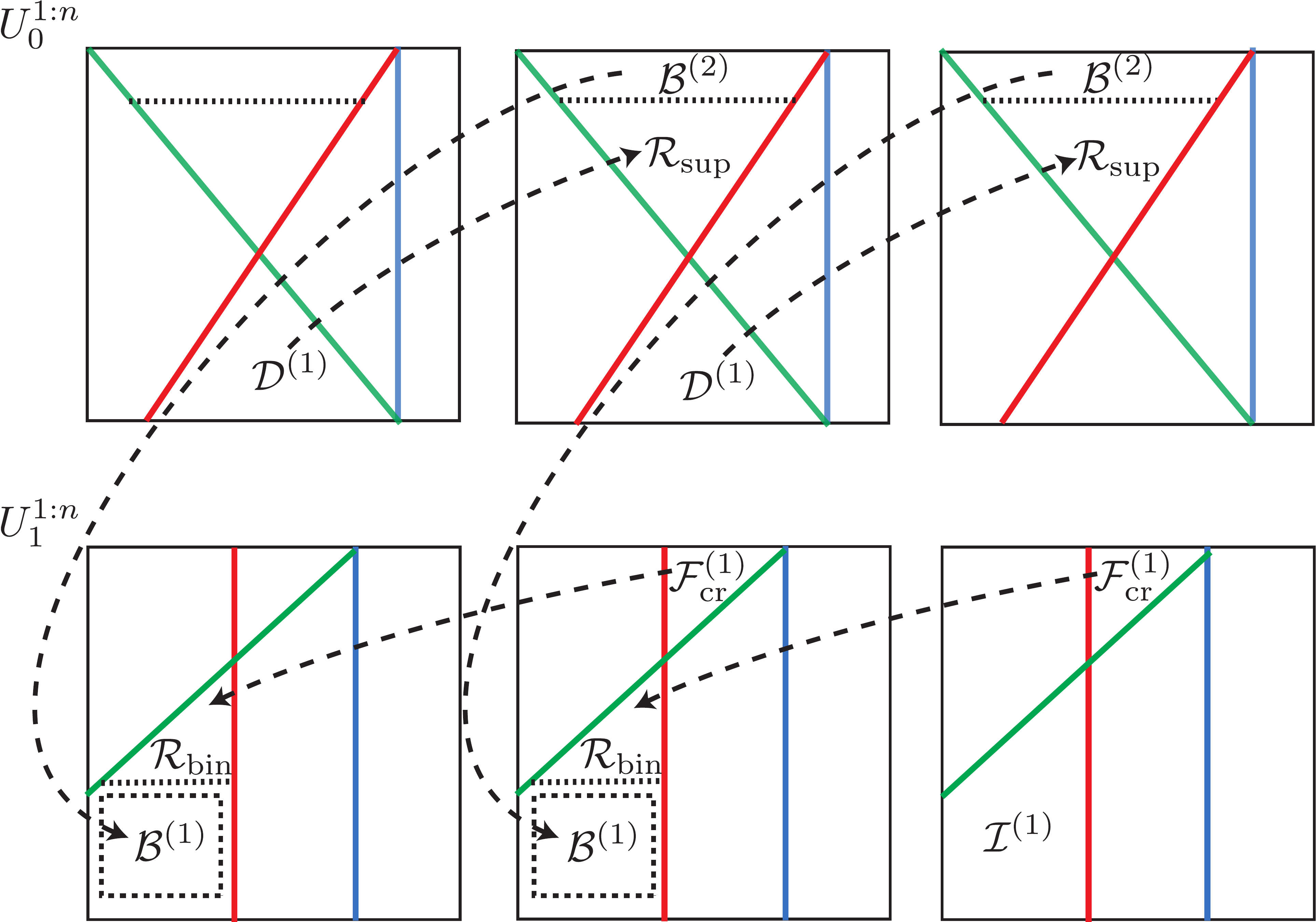}
\caption{Graphical representation of the repetition constructions for Marton's region with $k=3$: the set $\dset^{(1)}$ is repeated into the set $\rset_{\rm sup}$ of the following block;
the set $\bset^{(2)}$ is repeated into the set $\bset^{(1)}$ of the previous block; the set $\fset^{(1)}_{\rm cr}$
is repeated into the set $\rset_{\rm bin}$ of the previous block.}\label{fig:gen2}
\end{figure}

The encoding of $U_0^{1:n}$ is performed ``forward'', i.e., from block $1$ to block $k$; the encoding of $U_1^{1:n}$
is performed ``backwards'', i.e., from block $k$ to block $1$; the encoding of $U_2^{1:n}$ can be performed in any
order. The first user decodes $U_0^{1:n}$ and $U_1^{1:n}$ ``forward''; the second user decodes $U_0^{1:n}$
``backwards'' and can decode $U_2^{1:n}$ in any order.

With this polar coding scheme, by letting $k$ tend to infinity, the first user decodes a fraction of the positions of $U_1^{1:n}$ containing his own message, which is given by
\begin{equation}
\begin{split}
R_1 &= \frac{1}{n}(|\iset^{(1)}| - |\bset^{(1)}| - |\rset_{\rm bin}|) =  I(V_1; Y_1\mid V) - I(V_1; V_2\mid V) - (I(V; Y_2)-I(V; Y_1)) \\
&= I(V, V_1; Y_1) - I(V_1; V_2\mid V) - I(V; Y_2).
\end{split}
\end{equation}
The information for the second user is spread between the positions of $U_0^{1:n}$ and the positions of $U_2^{1:n}$
for a total rate, which, as $k$ tends to infinity, is given by
\begin{equation}
R_2 = \frac{1}{n}(|\iset^{(2)}_{\rm sup}|+|\iset^{(2)}_{\rm bin}|)= I(V; Y_2) + I(V_2; Y_2 \mid V) = I(V, V_2; Y_2).
\end{equation}

It is possible to achieve the rate pair \eqref{eq:genpair2} with a scheme similar to the one described above by
swapping the roles of the two users. Since $I(V; Y_1)\le I(V; Y_2)$, only the first and the third chaining constructions are required. Indeed, the set which has the role of $\bset^{(2)}$ is empty in this scenario.

As our schemes consist in the repetition of polar blocks, the encoding and decoding complexity per block is
$\Theta (n \log n )$, and the block error probability decays like $O(2^{-n^{\beta}})$ for any $\beta \in (0, 1/2)$.

\subsection{Private and Common Messages: MGP Region} \label{subsec: polar_MGP_new}

Finally, consider the case of a $2$-user DM-BC with both common and private messages.
Our most general result consists in the construction of polar codes which achieve
the MGP region \eqref{eq:bestcom}.

\begin{theorem}[Polar Codes for MGP Region]
\label{th:polar_MGP_new}
Consider a $2$-user DM-BC $p_{Y_1, Y_2\mid X}$, where $X$ denotes the input to the channel,
taking values on an arbitrary set $\mathcal{X}$, and $Y_1$, $Y_2$ denote the outputs at
the first and second receiver, respectively. Let $R_0$, $R_1$, and $R_2$ designate the rates of
the common message and the two private messages of the two users, respectively.
Let $V$, $V_1$, and $V_2$ denote auxiliary binary random variables. Then, for any joint
distribution $p_{V, V_1, V_2}$, for any deterministic function $\phi \colon \{0, 1\}^3 \rightarrow \mathcal{X}$
s.t. $X = \phi(V, V_1, V_2)$, and for any rate triple $(R_0, R_1, R_2)$ satisfying
the constraints \eqref{eq:bestcom}, there exists a sequence of polar codes with
an increasing block length $n$ which achieves this rate triple with encoding and
decoding complexity $\Theta(n \log n)$ and a block error probability decaying
like $O(2^{-n^{\beta}})$ for any $\beta \in (0, 1/2)$.  \end{theorem}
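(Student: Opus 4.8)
The plan is to derive the MGP region \eqref{eq:bestcom} as a small refinement of the private-message construction of Theorem~\ref{th:polar_Marton_new}, by reserving part of the positions of $U_0^{1:n}$ for the common message. The key observation is that, in the scheme of Section~\ref{subsec:private}, after the first superposition chaining ($\dset^{(1)}$ repeated into $\rset_{\rm sup}$) and the filling of $\iset^{(1)}_v\cap\iset^{(2)}_{\rm sup}$, both receivers recover, in every one of the $k$ blocks, a pool of positions of $U_0^{1:n}$ whose normalised size tends to $I(V;Y_1)$; in the private-message scheme this pool carried information only for the second user. Since $I(V;Y_1)=\min\{I(V;Y_1),I(V;Y_2)\}$ under the normalisation adopted below, one can instead load a common message of rate $R_0<\min\{I(V;Y_1),I(V;Y_2)\}$ into an arbitrary sub-pool of $\lceil R_0 n\rceil$ of those positions, keeping the remaining both-decodable positions for private data. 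Both decoders recover the common bits because the positions are both-decodable; and because the three chaining constructions, the forward/backward decoding schedule, and the frozen sets of Section~\ref{subsec:private} do not depend on the \emph{meaning} of the bits they manipulate, nothing else in the construction changes.

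Following the pattern of Proposition~\ref{prop:combreg}, I would first reduce to a finite set of rate triples. The region \eqref{eq:bestcom} is invariant under exchanging the indices $1$ and $2$, so assume without loss of generality that $I(V;Y_1)\le I(V;Y_2)$; then the last inequality of \eqref{eq:bestcom} is implied by the fourth one, by the same chain-rule computation used in the proof of Proposition~\ref{prop:combreg}. The construction below achieves, for every $R_0$ in the stated range, the two one-parameter families of rate triples
\begin{equation}\label{eq:mgpfam}
\begin{split}
(R_0,R_1,R_2)&=\bigl(R_0,\ I(V,V_1;Y_1)-I(V_1;V_2\mid V)-I(V;Y_2),\ I(V,V_2;Y_2)-R_0\bigr),\\
(R_0,R_1,R_2)&=\bigl(R_0,\ I(V,V_1;Y_1)-R_0,\ I(V_2;Y_2\mid V)-I(V_1;V_2\mid V)\bigr),
\end{split}
\end{equation}
whose $R_0=0$ endpoints are the two corners of \eqref{eq:best} already achieved in Section~\ref{subsec:private}. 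All remaining vertices of the polytope in $(R_0,R_1,R_2)$ are either degenerate, lying on a coordinate hyperplane and achievable by letting one user stay silent and running a point-to-point (or simple superposition) polar code, or they lie on the face $R_0=0$, which is exactly Marton's region \eqref{eq:best} and is covered by Theorem~\ref{th:polar_Marton_new}; the whole region then follows by time-sharing among these schemes (equivalently, by letting the reserved-set sizes vary continuously, as in Section~\ref{sec:superposition}).

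To realise the first family in \eqref{eq:mgpfam}, I would take the scheme that achieves \eqref{eq:genpair} and re-label $\lceil R_0 n\rceil$ of the both-decodable positions of $U_0^{1:n}$ (drawn from the pool of size $\to I(V;Y_1)$ described above) as common information; since in that scheme those positions carried private data for the second user, the second user's private rate drops to $I(V,V_2;Y_2)-R_0$ while $R_1$ is unchanged. Similarly, the second family is obtained from the scheme achieving \eqref{eq:genpair2} by re-labelling $\lceil R_0 n\rceil$ of the (now first-user-bound) both-decodable positions of $U_0^{1:n}$ as common, so that the first user's private rate drops to $I(V,V_1;Y_1)-R_0$ while $R_2$ is unchanged. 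In both cases $R_0<\min\{I(V;Y_1),I(V;Y_2)\}=I(V;Y_1)$ does not exceed the size of the available pool, and a short computation (using $I(V,V_l;Y_l)=I(V;Y_l)+I(V_l;Y_l\mid V)$) confirms that the remaining inequalities of \eqref{eq:bestcom} hold, with one of the two sum constraints tight. The encoding and decoding complexity is $\Theta(n\log n)$ per block and the block error probability is $O(2^{-n^{\beta}})$ for every $\beta\in(0,1/2)$, inherited verbatim from Theorem~\ref{th:polar_Marton_new}.

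I expect the only genuine difficulty to be bookkeeping rather than any new polarisation estimate. One must verify that the positions earmarked for the common message are recoverable by \emph{both} decoders in each of the $k$ blocks and that re-labelling them does not disturb the $\rset_{\rm sup}$, $\bset^{(1)}$, $\rset_{\rm bin}$ repetition patterns, in particular when $|\iset^{(1)}_v\cap\iset^{(2)}_{\rm sup}|=o(n)$ and the reserved positions must be taken partly from the $\dset^{(1)}$-positions (which the first superposition chaining makes both-decodable). The resolution is that the chaining copies a fixed set of positions into a fixed set of positions of a neighbouring block irrespective of whether the bits there are private or common, so the decodability analysis of Section~\ref{subsec:private} applies word for word; once this is spelled out, the theorem follows from Theorem~\ref{th:polar_Marton_new} and the corner-point reduction above.
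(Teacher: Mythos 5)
Your proposal is correct and follows essentially the same route as the paper: the paper's own argument is precisely to reuse the private-message construction of Section~\ref{subsec:private} and re-label a fraction $R_0<\min\{I(V;Y_1),I(V;Y_2)\}$ of the positions of $U_0^{1:n}$ (which both users decode) as common information, observing that the remaining inequalities of \eqref{eq:bestcom} reduce to those of \eqref{eq:best} after this conversion. Your write-up is in fact more explicit than the paper's two-sentence sketch, spelling out the corner-point families and the bookkeeping for the chaining sets, but no new idea is introduced and none is missing.
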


The polar coding scheme follows the ideas outlined in Section \ref{subsec:private}. Recall
that $U_0^{1:n}$ is decoded by both users. Then, we put the common information
in the positions of $U_0^{1:n}$ which previously contained private information meant
only for one of the users. The common rate is clearly upper bounded by
$\min\{I(V; Y_1), I(V; Y_2)\}$. The remaining four inequalities of \eqref{eq:bestcom}
are equivalent to the conditions in  \eqref{eq:best} with the only difference that a
portion $R_0$ of the private information for one of the users has been converted into
common information. This suffices to achieve the required rate region.

\section{Conclusions}\label{sec:conclusion}

Extending the work by Goela, Abbe, and Gastpar \cite{GAG13ar}, we have shown how to
construct polar codes for the $2$-user discrete memoryless broadcast channel (DM-BC)
that achieve the superposition and binning regions. By combining these two strategies,
we achieve any rate pair inside Marton's region with both common and private messages.
This rate region is tight for all classes of broadcast channels with known capacity
regions and it is also known as the Marton-Gelfand-Pinsker (MGP) region.
The described coding techniques possess the usual advantages of polar codes,
i.e., encoding and decoding complexity of $\Theta (n \log n)$ and block error probability
decaying like $O(2^{-n^{\beta}})$ for any $\beta\in (0, 1/2)$, and they can be easily
extended to obtain inner bounds for the
$K$-user DM-BC in a low-complexity fashion.

We conclude by remarking that the chaining constructions used to align the polarized indices
do not rely on the specific structure of the broadcast channel. Indeed, similar techniques
have been considered, independently of this work, in the context of interference networks
\cite{WaS14} and, in general, we believe that they can be adapted to the design of polar
coding schemes for a variety of multi-user scenarios.

\section*{Acknowledgment}
We wish to thank the Associate Editor, Henry Pfister, for efficiently handling our manuscript.
The work of M.~Mondelli, S.~H.~Hassani and~R.~Urbanke was supported by grant No.
200020\_146832/1 of the Swiss National Science Foundation. The work of I. Sason
was supported by the Israeli Science Foundation (ISF), grant number 12/12.

\bibliographystyle{IEEEtran}
\bibliography{lth,lthpub}

\newcommand{\SortNoop}[1]{}
\begin{thebibliography}{10}
\providecommand{\url}[1]{#1}
\csname url@samestyle\endcsname
\providecommand{\newblock}{\relax}
\providecommand{\bibinfo}[2]{#2}
\providecommand{\BIBentrySTDinterwordspacing}{\spaceskip=0pt\relax}
\providecommand{\BIBentryALTinterwordstretchfactor}{4}
\providecommand{\BIBentryALTinterwordspacing}{\spaceskip=\fontdimen2\font plus
\BIBentryALTinterwordstretchfactor\fontdimen3\font minus
  \fontdimen4\font\relax}
\providecommand{\BIBforeignlanguage}[2]{{%
\expandafter\ifx\csname l@#1\endcsname\relax
\typeout{** WARNING: IEEEtran.bst: No hyphenation pattern has been}%
\typeout{** loaded for the language `#1'. Using the pattern for}%
\typeout{** the default language instead.}%
\else
\language=\csname l@#1\endcsname
\fi
#2}}
\providecommand{\BIBdecl}{\relax}
\BIBdecl

\bibitem{Ari09}
E.~{Ar\i kan}, ``Channel polarization: A method for constructing
  capacity-achieving codes for symmetric binary-input memoryless channels,''
  \emph{IEEE Trans. on Information Theory}, vol.~55, no.~7, pp. 3051--3073,
  July 2009.

\bibitem{ArT09}
E.~{Ar\i kan} and E.~{Telatar}, ``{On the rate of channel polarization},'' in
  \emph{Proc. of the IEEE International Symposium on Information Theory},
  Seoul, South Korea, July 2009, pp. 1493--1495.

\bibitem{HMTU13}
S.~H. Hassani, R.~Mori, T.~Tanaka, and R.~Urbanke, ``Rate-dependent analysis of
  the asymptotic behavior of channel polarization,'' \emph{IEEE Trans. on
  Information Theory}, vol.~59, no.~4, pp. 2267--2276, April 2013.

\bibitem{Ar10}
E.~{Ar\i kan}, ``{Source polarization},'' in \emph{Proc. of the IEEE
  International Symposium on Information Theory}, Austin, Texas, June 2010, pp.
  899--903.

\bibitem{KoU09}
S.~B. Korada and R.~Urbanke, ``Polar codes are optimal for lossy source
  coding,'' \emph{IEEE Trans. on Information Theory}, vol.~56, no.~4, pp.
  1751--1768, April 2010.

\bibitem{Kor09thesis}
S.~B. Korada, ``Polar codes for channel and source coding,'' Ph.D.
  dissertation, EPFL, Lausanne, Switzerland, July 2009.

\bibitem{Ar12}
E.~Arikan, ``Polar coding for the {S}lepian-{W}olf problem based on monotone
  chain rules,'' in \emph{Proc. of the IEEE International Symposium on
  Information Theory}, MIT, Cambridge, USA, July 2012, pp. 571--575.

\bibitem{AbT12}
E.~Abbe and I.~E. Telatar, ``Polar codes for the $m$-user multiple access
  channel,'' \emph{IEEE Trans. on Information Theory}, vol.~58, no.~8, pp.
  5437--5448, August 2012.

\bibitem{TSV12}
I.~Tal, A.~Sharov, and A.Vardy, ``Constructing polar codes for non-binary
  alphabets and {MAC}s,'' in \emph{Proc. of the IEEE International Symposium on
  Information Theory}, Cambridge, MA, July 2012, pp. 2142--2146.

\bibitem{STYe10}
E.~{\c Sa\c so\u glu}, I.~E. Telatar, and E.~Yeh, ``Polar codes for the
  two-user binary-input multiple-access channel,'' in \emph{Proc. of the IEEE
  Information Theory Workshop}, Cairo, Egypt, January 2010, pp. 1--5.

\bibitem{MKLK13}
H.~Mahdavifar, M.~El-Khamy, J.~Lee, and I.~Kang, ``Achieving the uniform rate
  region of multiple access channels using polar codes,'' July 2013, [Online].
  Available: http://arxiv.org/pdf/1307.2889v1.pdf.

\bibitem{NT13}
R.~Nasser and E.~Telatar, ``Polar codes for arbitrary {DMC}s and arbitrary
  {MAC}s,'' November 2013, http://arxiv.org/pdf/1311.3123v1.pdf.

\bibitem{GAG12}
N.~Goela, E.~Abbe, and M.~Gastpar, ``Polar codes for the deterministic
  broadcast channel,'' in \emph{Proc. of the International Zurich Seminar on
  Communications}, February 2012, pp. 51--54.

\bibitem{GAG13}
------, ``Polar codes for broadcast channels,'' in \emph{Proc. of the IEEE
  International Symposium on Information Theory}, Istanbul, Turkey, July 2013,
  pp. 1127--1131.

\bibitem{GAG13ar}
------, ``Polar codes for broadcast channels,'' January 2013, [Online].
  Available: http://arxiv.org/pdf/1301.6150v1.pdf.

\bibitem{AKV11}
K.~Appaiah, O.~Koyluoglu, and S.~Vishwanath, ``Polar alignment for interference
  networks,'' in \emph{Proc. of the Allerton Conference on Communication,
  Control, and Computing}, Monticello, Illinois, September 2011, pp. 240--246.

\bibitem{WaS14}
L.~Wang and E.~{\c Sa\c so\u glu}, ``Polar coding for interference networks,''
  in \emph{Proc. of the IEEE International Symposium on Information Theory},
  Honolulu, Hawaii, USA, July 2014, pp. 311--315.

\bibitem{MK12}
M.~Karzand, ``Polar codes for degraded relay channels,'' in \emph{Proc. Intern.
  Zurich Seminar on Comm.}, February 2012, pp. 59--62.

\bibitem{ARTKS10}
M.~Andersson, V.~Rathi, R.~Thobaben, J.~Kliewer, and M.~Skoglund, ``Nested
  polar codes for wiretap and relay channels,'' \emph{IEEE Communications
  Letters}, vol.~14, no.~8, pp. 752--754, August 2010.

\bibitem{MV11}
H.~Mahdavifar and A.~Vardy, ``Achieving the secrecy capacity of wiretap
  channels using polar codes,'' \emph{IEEE Trans. on Information Theory},
  vol.~57, no.~10, pp. 6428--6443, October 2011.

\bibitem{KG11}
O.~O. Koyluoglu and H.~E. Gamal, ``Polar coding for secure transmission and key
  agreement,'' \emph{IEEE Trans. on Information Forensics Security}, vol.~7,
  no.~5, pp. 1472--1483, October 2012.

\bibitem{HoS10}
E.~Hof and S.~Shamai, ``Secrecy-achieving polar-coding,'' in \emph{Proc. of the
  IEEE Information Theory Workshop}, Dublin, Ireland, September 2010, pp. 1--5.

\bibitem{SaV13}
E.~{\c Sa\c so\u glu} and A.~Vardy, ``A new polar coding scheme for strong
  security on wiretap channels,'' in \emph{Proc. of the IEEE International
  Symposium on Information Theory}, Istanbul, Turkey, July 2013, pp.
  1117--1121.

\bibitem{AS13}
M.~Andersson, R.~Schaefer, T.~Oechtering, and M.~Skoglund, ``Polar coding for
  bidirectional broadcast channels with common and confidential messages,''
  \emph{IEEE Journal on Selected Areas in Communications}, vol.~31, no.~9, pp.
  1901--1908, September 2013.

\bibitem{BurshteinS13}
D.~Burshtein and A.~Strugatski, ``Polar write once memory codes,'' \emph{IEEE
  Trans. on Information Theory}, vol.~59, no.~8, pp. 5088--5101, August 2013.

\bibitem{HSST13}
E.~Hof, I.~Sason, S.~Shamai, and C.~Tian, ``Capacity-achieving polar codes for
  arbitrarily-permuted parallel channels,'' \emph{IEEE Trans. on Information
  Theory}, vol.~59, no.~3, pp. 1505--1516, March 2013.

\bibitem{SaP14}
A.~G. Sahebi and S.~S. Pradhan, ``Polar codes for multi-terminal
  communications,'' in \emph{Proc. of the IEEE International Symposium on
  Information Theory}, Honolulu, Hawaii, USA, July 2014, pp. 316--320.

\bibitem{Ber73}
P.~P. Bergmans, ``Random coding theorem for broadcast channels with degraded
  components,'' \emph{IEEE Trans. on Information Theory}, vol.~19, no.~2, pp.
  197--207, March 1973.

\bibitem{M79}
K.~Marton, ``A coding theorem for the discrete memoryless broadcast channel,''
  \emph{IEEE Trans. on Information Theory}, vol.~25, no.~3, pp. 306--311, May
  1979.

\bibitem{Co72}
T.~M. Cover, ``Broadcast channels,'' \emph{IEEE Trans. on Information Theory},
  vol.~18, no.~1, pp. 2--14, Jan. 1972.

\bibitem{WSBK13}
L.~Wang, E.~{\c Sa\c so\u glu}, B.~Bandemer, and Y.-H. Kim, ``A comparison of
  superposition coding schemes,'' in \emph{Proc. of the IEEE International
  Symposium on Information Theory}, Istanbul, Turkey, July 2013, pp. 2970 --
  2974.

\bibitem{GelPin80}
S.~I. Gelfand and M.~S. Pinsker, ``Capacity of a broadcast channel with one
  deterministic component,'' \emph{Probl. Peredachi Inf.}, vol.~16, no.~1, pp.
  24--34, 1980.

\bibitem{Liang05thesis}
Y.~Liang, ``Multiuser communications with relaying and user cooperation,''
  Ph.D. dissertation, University of Illinois, Urbana-Champaign, Illinois, USA,
  2005.

\bibitem{LiaKra07}
Y.~Liang and G.~Kramer, ``Rate regions for relay broadcast channels,''
  \emph{IEEE Trans. on Information Theory}, vol.~53, no.~10, pp. 3517--3535,
  October 2007.

\bibitem{LiKP11}
Y.~Liang, G.~Kramer, and H.~V. Poor, ``On the equivalence of two achievable
  regions for the broadcast channel,'' \emph{IEEE Trans. on Information
  Theory}, vol.~57, no.~1, pp. 95 -- 100, January 2011.

\bibitem{kim:nit}
A.~E. Gamal and Y.-H. Kim, \emph{Network Information Theory}.\hskip 1em plus
  0.5em minus 0.4em\relax Cambridge University Press, 2011.

\bibitem{Kr07}
G.~Kramer, ``Topics in multi-user information theory,'' \emph{Foundations and
  Trends in Communications and Information Theory}, vol.~4, no. 4-5, pp.
  265--444, April 2007.

\bibitem{GGNY14}
Y.~Geng, A.~Gohari, C.~Nair, and Y.~Yu, ``On {M}arton's inner bound and its
  optimality for classes of product broadcast channels,'' \emph{IEEE Trans. on
  Information Theory}, vol.~60, no.~1, pp. 22--41, January 2014.

\bibitem{HKU09}
S.~H. Hassani, S.~B. Korada, and R.~Urbanke, ``The compound capacity of polar
  codes,'' in \emph{47th Annual Allerton Conference on Communication, Control,
  and Computing}, October 2009, pp. 16 -- 21.

\bibitem{HRunipol}
S.~H. Hassani and R.~Urbanke, ``Universal polar codes,'' Dec. 2013, [Online].
  Available: http://arxiv.org/pdf/1307.7223v2.pdf.

\bibitem{SaW13}
E.~\c{S}a\c{s}o\u{g}lu and L.~Wang, ``Universal polarization,'' in \emph{Proc.
  of the IEEE International Symposium on Information Theory}, Honolulu, Hawaii,
  USA, July 2014, pp. 1456--1460.

\bibitem{SAT09}
E.~{\c Sa\c so\u glu}, E.~Telatar, and E.~{Ar\i kan}, ``Polarization for
  arbitrary discrete memoryless channels,'' in \emph{Proc. of the IEEE
  Information Theory Workshop}, Taormina, Sicily, october 2009, pp. 144--148.

\bibitem{MT10}
R.~Mori and T.~Tanaka, ``Channel polarization on $q$-ary discrete memoryless
  channels by arbitrary kernel,'' in \emph{Proc. of the IEEE International
  Symposium on Information Theory}, Austin, Texas, June 2010, pp. 894--898.

\bibitem{PB13}
W.~Park and A.~Barg, ``Polar codes for $q$-ary channels, $q= 2^r$,'' \emph{IEEE
  Trans. on Information Theory}, vol.~59, no.~2, pp. 955--969, February 2013.

\bibitem{SaP13}
A.~G. Sahebi and S.~S. Pradhan, ``Multilevel channel polarization for arbitrary
  discrete memoryless channels,'' \emph{IEEE Trans. on Information Theory},
  vol.~59, no.~12, pp. 7839--7857, December 2013.

\bibitem{GohariA12}
A.~A. Gohari and V.~Anantharam, ``Evaluation of {Marton's} inner bound for the
  general broadcast channel,'' \emph{IEEE Trans. on Information Theory},
  vol.~58, no.~2, pp. 608--619, February 2012.

\bibitem{GengJNW13}
Y.~Geng, V.~Jog, C.~Nair, and Z.~V. Wang, ``An information inequality and
  evaluation of {Marton's} inner bound for binary-input broadcast channels,''
  \emph{IEEE Trans. on Information Theory}, vol.~59, no.~7, pp. 4095--4105,
  July 2013.

\bibitem{GohariNA_isit2013}
A.~Gohari, C.~Nair, and V.~Anantharam, ``Improved cardinality bounds on the
  auxiliary random variables in {Marton's} inner bound,'' in \emph{Proc. of the
  IEEE International Symposium on Information Theory}, Istanbul, Turkey, July
  2013, pp. 1272--1276.

\bibitem{GengGNY14}
Y.~Geng, A.~Gohari, C.~Nair, and Y.~Yu, ``On {Marton's} inner bound and its
  optimality for classes of product broadcast channels,'' \emph{IEEE Trans. on
  Information Theory}, vol.~60, no.~1, pp. 22--41, January 2014.

\bibitem{GengN14}
Y.~Geng and C.~Nair, ``The capacity region of the two-receiver {G}aussian
  vector broadcast channel with private and common messages,'' \emph{IEEE
  Trans. on Information Theory}, vol.~60, no.~4, pp. 2087--2104, April 2014.

\bibitem{S78}
M.~Salehi, ``{Cardinality bounds on auxiliary variables in multiple-user theory
  via the method of Ahlswede and Korner},'' Stanford {U}niversity, Tech.
  Rep.~33, 1978.

\bibitem{GNSW14}
Y.~Geng, C.~Nair, S.~Shamai, and Z.~V. Wang, ``On broadcast channels with
  binary inputs and symmetric outputs,'' \emph{IEEE Trans. on Information
  Theory}, vol.~59, no.~11, pp. 6980--6989, November 2013.

\bibitem{HKU09a}
N.~Hussami, S.~B. Korada, and R.~Urbanke, ``Performance of polar codes for
  channel and source coding,'' in \emph{Proc. of the IEEE International
  Symposium on Information Theory}, July 2009, pp. 1488--1492.

\bibitem{MajR91}
E.~E. Majani and H.~Rumsey, ``Two results on binary-input discrete memoryless
  channels,'' in \emph{Proc. of the IEEE International Symposium on Information
  Theory}, Budapest, Hungary, June 1991, p. 104.

\bibitem{ShuF04}
N.~Shulman and M.~Feder, ``The uniform distribution as a uniform prior,''
  \emph{IEEE Trans. on Information Theory}, vol.~50, no.~6, pp. 1356 -- 1362,
  Jun. 2004.

\bibitem{Gal68}
R.~G. Gallager, \emph{Information Theory and Reliable Communication}.\hskip 1em
  plus 0.5em minus 0.4em\relax New York: Wiley, 1968.

\bibitem{SRDR12}
D.~Sutter, J.~M. Renes, F.~Dupuis, and R.~Renner, ``Achieving the capacity of
  any {DMC} using only polar codes,'' in \emph{Proc. of the IEEE Information
  Theory Workshop}, Lausanne, Switzerland, September 2012, pp. 114--118.

\bibitem{HY13}
J.~Honda and H.~Yamamoto, ``Polar coding without alphabet extension for
  asymmetric models,'' \emph{IEEE Trans. on Information Theory}, vol.~59,
  no.~12, pp. 7829--7838, December 2013.

\end{thebibliography}


\end{document}